\newif\ifdraft\draftfalse 
\newif\ifanon\anonfalse   
\newif\iffull\fullfalse   
\newif\ifdiff\difffalse 
\newif\iflongrefs\longrefsfalse 
\newif\ifbackref\backreffalse 
\newif\ifsooner\soonerfalse
\newif\iflater\laterfalse
\newif\ifieee\ieeetrue
\newif\ifcamera\cameratrue 
\newif\ifappendix\appendixfalse 
\newif\ifallcites\allcitesfalse
\newif\ifneedspace\needspacetrue
\makeatletter \@input{texdirectives.tex} \makeatother
\DeclareMathAlphabet{\mathit}{\encodingdefault}{\familydefault}{m}{it}
\else\copyrightyear{2017}\fi 
\def\@copyrightpermission{\ifcamera\\\\\\\fi This work is licensed under a \href{https://creativecommons.org/licenses/by/4.0/}{Creative Commons Attribution 4.0 International License}}
\def\@authorsaddresses{}
  \renewcommand{\headrulewidth}{\z@}%
  \renewcommand{\footrulewidth}{\z@}%
  \renewcommand{\headrulewidth}{\z@}%
  \renewcommand{\footrulewidth}{\z@}%
\def\@mkbibcitation{}
\newcommand\citepos[1]{\citeauthor{#1}'s\ \cite{#1}}
\definecolor{dkblue}{rgb}{0,0.1,0.5}
\definecolor{dkgreen}{rgb}{0,0.4,0}
\definecolor{dkred}{rgb}{0.6,0,0}
\definecolor{dkpurple}{rgb}{0.7,0,1.0}
\definecolor{purple}{rgb}{0.9,0,1.0}
\definecolor{olive}{rgb}{0.4, 0.4, 0.0}
\definecolor{teal}{rgb}{0.0,0.4,0.4}
\definecolor{azure}{rgb}{0.0, 0.5, 1.0}
\definecolor{gray}{rgb}{0.5, 0.5, 0.5}
\definecolor{dkgray}{rgb}{0.3, 0.3, 0.3}
\def\Snospace~{\S{}}
\def\Nnospace~{}
\patchcmd{\hyper@makecurrent}{%
    \ifx\Hy@param\Hy@chapterstring
        \let\Hy@param\Hy@chapapp
    \fi
}{%
    \iftoggle{inappendix}{
        \@checkappendixparam{chapter}%
        \@checkappendixparam{section}%
        \@checkappendixparam{subsection}%
        \@checkappendixparam{subsubsection}%
        \@checkappendixparam{paragraph}%
        \@checkappendixparam{subparagraph}%
    }{}%
}{}{\errmessage{failed to patch}}
\newcommand*{\@checkappendixparam}[1]{%
    \def\@checkappendixparamtmp{#1}%
    \ifx\Hy@param\@checkappendixparamtmp
        \let\Hy@param\Hy@appendixstring
    \fi
}
\apptocmd{\appendix}{\toggletrue{inappendix}}{}{\errmessage{failed to patch}}
\newcommand{\mi}[1]{\ensuremath{\mathit{#1}}}
\newcommand{\ii}[1]{\mi{#1}}
\newcommand{\mf}[1]{\ensuremath{\mathbf{#1}}}
\newcommand{\ms}[1]{\ensuremath{\mathsf{#1}}}
\newcommand{\neutcol}[0]{black}
\newcommand{\stlccol}[0]{RoyalBlue}
\newcommand{\ulccol}[0]{RedOrange}
\newcommand{\commoncol}[0]{black}    
\newcommand{\col}[2]{\ensuremath{{\color{#1}{#2}}}}
\newcommand{\src}[1]{\ms{\col{\stlccol}{#1}}}
\newcommand{\trg}[1]{\mf{\col{\ulccol }{#1}}}
\newcommand{\bl}[1]{\col{\neutcol }{#1}}
\newcommand{\com}[1]{\mi{\col{\commoncol }{#1}}}
\lstdefinestyle{codestyle}{
    basicstyle=\ttfamily,
    frame=single,
}
\newcommand{\doesprefix}{{\rightsquigarrow}^{*}}
\newcommand{\tricl}[0]{\mi{TrICL}\xspace}
\newcommand{\rsp}[0]{\textbf{\emph{RSP$^{\sim}$}}\xspace}
\newcommand{\sourcelanguage}{\src{SafeP}\xspace}
\newcommand{\targetlanguage}{\trg{Mach}\xspace}
\newcommand{\cmp}[1]{#1\!\!\downarrow}
\newcommand{\back}[1]{#1\hspace{-0.35em}\uparrow}
\newcommand{\step}[1]{\ensuremath{\overset{#1}{\longrightarrow}}}
\newcommand{\emitsdfT}[0]{\ensuremath{\rightsquigarrow_{\ms{DF}}^{*}}}
\newcommand{\projectname}{\emph{SecurePtrs}\xspace}
\newtheorem{theorem}{Theorem}[section]
\newtheorem{lemma}[theorem]{Lemma}
\newtheorem{assumption}[theorem]{Assumption}
\theoremstyle{definition}
\newtheorem{definition}[theorem]{Definition}
\crefname{assumption}{assumption}{assumptions}
\Crefname{assumption}{Assumption}{Assumptions}
\theoremstyle{definition}
\newtheorem{example}[theorem]{Example}
\theoremstyle{plain}
\newcommand{\defeq}[0]{\ensuremath{\overset{\mathsf{def}}{=}}}
\newcommand{\comm}[3]{\ifdraft{{\color{#1}[#2: #3]}}\fi}
\newcommand{\dg}[1]{\comm{dkpurple}{DG}{#1}}
\newcommand{\ch}[1]{\comm{teal}{CH}{#1}}
\newcommand{\jt}[1]{\comm{violet}{JT}{#1}}
\newcommand{\rb}[1]{\comm{orange}{RB}{#1}}
\newcommand{\aek}[1]{\comm{dkred}{AEK}{#1}} 
\newcommand{\remove}[1]{\ifdiff{\color{red}\sout{#1}}\fi}
\newcommand{\add}[1]{\ifdiff{\color{blue}#1}\else #1\fi}
\newcommand*{\EG}{e.g.,\xspace}
\newcommand*{\IE}{i.e.,\xspace}
\newcommand*{\ETC}{etc.\xspace}
\newcommand{\titleString}{\projectname: Proving Secure Compilation with\\Data-Flow Back-Translation and Turn-Taking Simulation}
\title{\titleString}
\title{\huge\bf{\titleString}
\ifneedspace\ifieee\ifanon\vspace{-1.5em}\else\vspace{-0.5em}\fi\fi\else\ifanon\vspace{2em}\fi\fi}
\title{\titleString}
\author{}
\author{
  Akram El-Korashy\textsuperscript{1} \quad
  Roberto Blanco\textsuperscript{2} \quad
  J\'er\'emy Thibault\textsuperscript{2} \quad
  Adrien Durier\textsuperscript{2} \quad
  Deepak Garg\textsuperscript{1} \quad
  C\u{a}t\u{a}lin Hri\cb{t}cu\textsuperscript{2} \quad
\\[1em]
  \textsuperscript{1}Max Planck Institute for Software Systems (MPI-SWS)\quad
  \textsuperscript{2}Max Planck Institute for Security and Privacy (MPI-SP)\\[0em]
}
\author{AUTHOR1}
\affiliation{
  \ifcamera\institution{AFF1}\city{CITY}\country{COUNTRY}
  \else\institution{AFF1}\fi}
\email{EMAIL}
\renewcommand{\@shortauthors}{SHORTAUTHORS}
\begin{document}


\ifieee\maketitle\fi



\begin{abstract}
Proving secure compilation of partial programs typically requires back-translating an attack against the compiled program to an attack against the source program. To prove back-translation, one can syntactically translate the target attacker to a source one---i.e., syntax-directed back-translation---or show that the interaction traces of the target attacker can also be emitted by source attackers---i.e., trace-directed back-translation.

Syntax-directed back-translation is not suitable when the target attacker may use unstructured control flow that the source language cannot directly represent. Trace-directed back-translation works with such syntactic dissimilarity because only the external interactions of the target attacker have to be mimicked in the source, not its internal control flow. Revealing only external interactions is, however, inconvenient when sharing memory via unforgeable pointers, since information about shared pointers stashed in private memory is not present on the trace. This made prior proofs unnecessarily complex, since the generated attacker had to instead stash all reachable pointers.

In this work, we introduce more informative \emph{data-flow traces}, combining the best of syntax- and trace-directed back-translation in a simpler technique that handles both syntactic dissimilarity and memory sharing well, and that is proved correct in Coq. Additionally, we develop a novel \emph{turn-taking simulation} relation and use it to prove a recomposition lemma, which is key to reusing compiler correctness in such secure compilation proofs. We are the first to mechanize such a recomposition lemma in the presence of memory sharing.

We use these two innovations in a secure compilation proof for a code generation compiler pass between a source language with structured control flow and a target language with unstructured control flow, both with safe pointers and components.

\end{abstract}

\section{Introduction}\label{sec:intro}

Compiler correctness, a.k.a. semantics preservation, is the current gold
standard for formally verified
compilers~\cite{milner1972proving,Leroy2009,KumarMNO14,patterson2019next}.
However, compiler correctness alone is insufficient for reasoning
about \add{the} security of compiled partial programs linked with arbitrary
target contexts (\EG components such as libraries) because
compiler correctness shows that the compiled program simulates the
source program, only under the \emph{assumption} that the target
context obeys all restrictions of
the \emph{source language} semantics, \IE it does not perform any
low-level attacks disallowed by the source language.
This assumption is usually false in practice: compiled programs are
routinely linked with arbitrary, unverified target-language code
that may be buggy, compromised, or outright malicious in contravention
of source semantics.
In these cases, compiler correctness (even in compositional
form~\cite{KangKHDV15,10.1145/2784731.2784764,10.1145/2676726.2676985,10.1145/3371091}),
establishes no security guarantees for compiled partial programs.

This problem \add{can be} addressed by \emph{secure
compilation}~\cite{abate2019journey,Patrignani:2019:FAS:3303862.3280984},
\remove{wherein one shows}\add{by enforcing}
that any violation of a security property of a
compiled program in some target context also appears for the
source program in some source context. Formally, this requires proving
the existence of a property-violating source context given a
target-level violation and the corresponding violating target
context. This proof step, often called \emph{back-translation},
is crucial for establishing that a vulnerable compiled program only arises
from a vulnerable source program, thus preserving the
security of partial source programs even against adversarial target contexts.
Although there is a long line of work on proving secure compilation for
prototype compilation chains
that differ in the specific security properties preserved and
the way security is \remove{practically}
enforced~\cite{abate2019journey,
Patrignani:2019:FAS:3303862.3280984,DBLP:journals/lmcs/DevriesePPK17,patrignani2015secure,popl-backtrans,capableptrs,Skorstengaard:2019:SEW:3302515.3290332,DBLP:journals/pacmpl/StrydonckPD19,busi2020provably,Abate:2018:GCG:3243734.3243745,TsampasStelios2019Tsfs,ahmedCPS,AhmedFa,New,Patrignani:2019:FAS:3303862.3280984},
back-translation is a common, large element of such secure compilation proofs.

Back-translation \remove{can be}\add{is usually} done
in \add{one of} two different ways:
%
\emph{syntax-directed} or \emph{trace-directed}.
Syntax-directed back-translation defines a function
from the violating target context (a piece of syntax) to a
source context, \add{basically treating
back-translation as a target-to-source compiler}.
While this approach is easy to use in some
situations~\cite{DBLP:journals/lmcs/DevriesePPK17,popl-backtrans,Skorstengaard:2019:SEW:3302515.3290332,DBLP:journals/pacmpl/StrydonckPD19,TsampasStelios2019Tsfs,ahmedCPS,AhmedFa,New,Patrignani:2019:FAS:3303862.3280984},
it has a significant limitation: it cannot be used if some
constructs of the target language cannot be easily mimicked
in the source language. For example,
it is not well suited when the source language only has structured
control flow, while the target language has unstructured control
flow (goto or jump), as representing unstructured
control flow in the source
would require complex transformations
or rely on heuristics that may not always
work~\cite{unstructured-to-structured,Myreen:2008:MVM:1517424.1517444}.\footnote{\add{An alternative to
representing unstructured control flow in the source could be to write
an emulator for the target
language \emph{in the source language}, but we think that proving such an emulator correct in a proof 
assistant would be a challenging undertaking.}
\ifsooner
\ch{Some good citation(s) would be nice.
  At least people like Dominique seemed to say this has been done, 
  and Catalin was vaguely remembering that too, but we didn't find a good reference, did we?
  \url{https://secure-compilation.zulipchat.com/\#narrow/stream/215770-capabilities/topic/Interpreter.20for.20back-translation}}
\fi
}
Yet this kind of a difference between source and target
languages is commonplace, e.g., when compiling any block-structured
language to assembly.

In contrast, trace-directed back-translation works by defining a
target \emph{interaction trace semantics} that represents all
the interactions between the compiled program and its context (e.g.,
cross-component calls and returns)
and constructing the violating source context from the violating target
trace instead of the target
context~\cite{capableptrs,Abate:2018:GCG:3243734.3243745,patrignani2015secure,10.1145/3436809}. This
has the advantage of not having to mimic
the internal behavior of
the target context in the source language.
%
%
So in contrast to syntax-directed back-translation, this
trace-directed method
works well even when some target language construct cannot be easily
mimicked in the source language, as long as the construct's effect
does not cross linking (program-context) boundaries.

Although very powerful in principle, trace-directed back-translation is
rather understudied for settings where the program and its context
can \emph{share private memory} by passing pointers or references to each
other, something that is common in practical languages like Java, Rust,
and ML. There is a good reason for this relative paucity of work:
memory sharing is a source of interesting interaction between the program and
its context, so allowing it makes the definition of
traces~\cite{capableptrs,9249fdc52dd3414e83b2e3f9a89cb117}, the
back-translation, and the proof of secure compilation significantly
more complex. Moreover, as we explain below, memory sharing changes
parts of the proof conceptually and needs fundamentally new
techniques.

This is precisely the gap that this paper fills: it significantly
advances proofs of secure compilation \add{from a memory-safe
source language with memory sharing to a target language that
provides fine-grained memory protection.} For this, we introduce two new proof techniques:
(1)~\emph{data-flow back-translation}, \remove{which is a simpler form
of back-translation}\add{a form of back-translation that is simpler
and more mechanization-friendly than the closest prior work~\cite{capableptrs}},
and (2)~\emph{turn-taking simulation},
which\remove{simplifies} 
\add{we used to adapt}\remove{the remaining} 
\add{another key lemma in the }secure compilation proof
\add{to memory sharing}. Next, we
briefly \remove{describe why these new techniques are needed and what they do}\add{explain the need
for these new techniques}.



\paragraph*{Data-flow back-translation}
\remove{To understand the need for this technique, consider a situation where
source language has memory safety and}
\add{Consider a compiler from a memory-safe source language that }prevents pointer forging (as in
Java, Rust, or ML) \add{to a target that provides fine-grained memory protection,
such as a capability machine~\cite{cheriageofrisk,chericompartment} or a
tagged-memory architecture~\cite{pump,micropolicies}}.
Suppose a compiled program has shared a
pointer to its private memory with the co-linked target context in the
past, and the context has stored this pointer somewhere in its \add{(\IE the context's)} private
memory. \remove{At some later point}\add{Later}
 in the execution, the context may use
a \emph{chain of memory dereferences within its private memory} to
recover this \add{shared} pointer and write \remove{to the memory to which it points}\add{through it}.
Since this write changes \emph{shared} memory, it must be recorded on the
trace and must be mimicked by the source context constructed by
back-translation. To mimic this write in the source language, the
back-translated source context cannot forge a pointer.
Instead, it must follow a similar chain of dereferences in the source
to the one used by the target context.
%
However, the chain of memory dereferences leading to this pointer is
in the target context's \emph{private} memory and interaction traces
omit these private dereferences by design!

Consequently, information needed to
reconstruct \emph{how} to access the shared pointer is missing from
interaction traces, which\remove{makes the back-translation extremely difficult}
\add{led prior work \cite{capableptrs,10.1145/3436809}
to have the back-translation perform complex bookkeeping
in order to reconstruct this missing information}.
\remove{Prior work that has even considered this situation 
relied on extensive bookkeeping to
reconstruct this missing information, which is unwieldy and complex.}
For instance, the source context generated by \citet{capableptrs} had
to fetch all reachable pointers every time it got control and \remove{store}\add{stash}
them in its internal state. 
This required complex simulation
invariants, on top of the usual invariants between the states of the
target and source contexts. \add{We think this informal stashing
  approach is unnecessarily complex and
  would be difficult to mechanize in a proof assistant (see
  \autoref{sec:key-ideas-data-flow-traces})}.

\remove{This is where our new idea of data-flow back-translation comes in.}
To back-translate, we
\add{instead }first enrich the standard interaction traces with
information about data-flows \emph{within} the context.
This considerably simplifies the back-translation definition by
providing precisely the missing chain of private memory dereferences
in the trace itself.  
\add{The data-flow back-translation function then translates
each such dereference (or in general each data-flow event) one
by one to 
simple source expressions. 
For each data-flow event, we prove the correctness 
of its back-translation, \IE that its corresponding
predefined source expression keeps \emph{source} 
memory related to the \emph{target} memory.
The proof
relies on just setting up an invariant between the 
\emph{target} 
memory 
that now appears in each data-flow event and the \emph{source}
memory in the
state after executing the source expression
obtained by back-translating the given event.
Crucially, proofs about stashing all reachable pointers are not needed any more.}

We see data-flow back-translation as a sweet
spot between standard trace-directed back-translation, which abstracts
away all internal behavior of the context, and syntax-directed
back-translation, which mimics the internal behavior of the context in
detail, \add{but which cannot handle syntactic dissimilarity well}.


\paragraph*{Turn-taking simulation\remove{s}}
Turn-taking simulation is useful when one tries to \emph{reuse}
compiler correctness as a lemma in the secure compilation proof to
\add{separate concerns and}
avoid duplicating large amounts of work. \add{Specifically, 
one defines a simulation relation between
the run of the \emph{source} program in the 
back-translated source 
context on one hand and the property-violating run of the
\emph{compiled} program in the target context on the other.}
\remove{after
(trace-directed) back-translation has been defined, one still has to
prove that the source program and the back-translated source context
actually reproduce the given (property-violating) trace of the
compiled program and the target context. 
This is a difficult
simulation proof over reduction steps, but many }\add{Some }of the source and
target steps are executed by the source program and its compilation
and \remove{these two} are \add{thus} already related by\remove{ the statement of the} compiler
correctness\remove{ theorem}.
Having to reprove the simulation for these steps
would be tantamount to duplicating an involved compiler correctness proof~\cite{Leroy2009}.
This duplication can be avoided by proving a
\remove{simpler}\add{so-called} 
\emph{recomposition} lemma in the target
language\add{, as proposed by} \citet{Abate:2018:GCG:3243734.3243745}.
Intuitively, recomposition says
that if a program $P_1$ linked with a context $C_1$, and a program
$P_2$ linked with a context $C_2$ both \remove{produce}\add{emit} the same trace, then
one may \emph{recompose}---link $P_1$ with $C_2$---to obtain again the
same trace.

The proof of recomposition is a ternary simulation between the runs of
$P_1 \cup C_1$, $P_2 \cup C_2$, and the recomposed program $P_1 \cup
C_2$.  The question that becomes nuanced with memory sharing is how
should the memory of the recomposed program be related to those of the
given programs in this simulation. \emph{Without memory
sharing}, this is straightforward: at any point in the simulation, the
projection of $P_1$'s memory in the recomposed run of $P_1 \cup C_2$
will equal the projection of $P_1$'s memory from the run $P_1 \cup
C_1$ (and dually for $C_2$'s memory). With memory sharing, however,
this simple relation does not work because $C_2$ may change parts
of $P_1$'s shared memory in ways that $C_1$ does not.  Specifically,
while control is not in $P_1$, the projections of $P_1$'s memories in
the two runs mentioned above will not match.

This is where our turn-taking simulation comes in. We relate the
memory of $P_1$ from the run of $P_1 \cup C_2$ to that from the run of
$P_1 \cup C_1$ only while control is in $P_1$. When control shifts to
the contexts ($C_2$ or $C_1$), this relation is limited to $P_1$'s
private memory (which is not shared with the context). The picture for
$C_2$'s memory is exactly dual. Overall, the relation takes ``turns'',
alternating between two memory relations depending on where the
control is. This non-trivial
relation allows us to prove recomposition and therefore reuse a
standard compiler correctness result even with memory sharing.

\paragraph{Concrete setting}
We illustrate our two new proof techniques by extending an existing mechanized
secure compilation proof by \citet{Abate:2018:GCG:3243734.3243745} to cover
dynamic memory sharing.
The \remove{original proof was done for a compilation pass }\add{compilation pass we extend goes }from an imperative
source language with structured control flow (\EG calls and returns,
if-then-else) to an assembly-like target 
\remove{language }with
unstructured jumps.
Both languages had components and \add{safe }pointers\remove{, and
for the purpose of this compilation pass, 
pointers in both languages were assumed to be safe}---\IE out of bound accesses are errors that
stop execution.\footnote{While this is orthogonal to our current work on proof
  techniques, such safe pointers can be efficiently implemented using, for
  instance, hardware capabilities~\cite{cheriageofrisk,chericompartment} or
  programmable tagged architectures~\cite{pump,micropolicies}.}
In both languages, the program and the context had their own
private memories, and pointers to these memories could not be shared
with other components.
The program and the context interacted only by calling each
other's functions, and passing only primitive values via call
arguments and return values.

We extend both languages by allowing their safe pointers to be passed to and
dereferenced by other components, thus introducing dynamic memory sharing.
We then prove that this extended compilation step is secure
with respect to a criterion called ``robust safety
preservation''~\cite{10.1145/3436809,abate2019journey,DBLP:conf/esop/AbateBCD0HPTT20}.
For this, we apply our two new techniques, data-flow
back-translation and turn-taking simulation\remove{s}. Since the parts of the
proof using these new techniques are fairly involved and non-trivial,
we also \add{fully} mechanize them in the Coq proof assistant.

\paragraph{Summary of contributions:}
\begin{itemize}
\item
We introduce data-flow back-translation and turn-taking simulation\remove{s},
two new techniques \add{for proving secure compilation from
memory-safe source languages to target languages with fine-grained
memory protection}, when the languages support memory sharing and may
be syntactically dissimilar.

\item
We apply these conceptual techniques to prove secure compilation for a
code generation pass between a source language with structured control
flow and a target with unstructured control flow. Both
languages \add{have safe pointers only}, and in both memory is
dynamically shared by passing safe pointers between components.

\item
\remove{We mechanize (in Coq) the parts of our proofs centered around these two key proof techniques.}
\add{We formalize this secure compilation proof in Coq,
focusing on back-translation and recomposition,
which illustrate our techniques and which we fully mechanized.}
\end{itemize}


\paragraph{Mechanized proof}
\ifanon
The anonymous supplementary material uploaded with this
submission includes Coq proofs of back-translation and recomposition for the
compilation pass outlined above.
\else
The Coq proof\remove{s} of
\remove{back-translation and recomposition}\add{secure compilation}
for the compilation pass outlined above \remove{are}\add{is} available
  \ifappendix at\\
  {\tt\url{https://github.com/secure-compilation/SecurePtrs}}
  \else as supplementary material uploaded with this submission.
  \fi
\fi
The size of \remove{these two}\add{the back-translation and recomposition}
proof steps---\add{which we fully mechanized and which constitute
the biggest and most interesting parts of the proof---}is 3k lines
of specifications and 29k lines of proof.
\ifsooner\ch{Need to update these numbers again if we ever finish some last simplifications.}\fi
For comparison, in the Coq development without memory sharing on which we are
building~\cite{Abate:2018:GCG:3243734.3243745}, these two steps were 2.7k
lines in total\remove{, so an order of magnitude smaller.
We believe that a significant part of this increase in proof size can be attributed to the
increase in the conceptual difficulty of the two proofs  in the
presence of shared memory}.


\add{As in \citet{Abate:2018:GCG:3243734.3243745}, }our mechanized
\add{secure compilation} proof\remove{s currently}
assume\add{s} \remove{not }only standard axioms (excluded
middle, functional extensionality, \ETC) 
\remove{but also some low-level specifications
about the data structure we use for memory maps, as well as about allocation,
reachability, and well-formedness of trace events }\add{and axioms about whole-program compiler 
correctness that are mostly standard and stated in the style of corresponding
CompCert theorems} (these are all documented \add{in \autoref{sec:axioms}}
 and the included {\tt README.md}).
\remove{We believe that with a bit of extra effort these low-level specifications used
transitively in our proofs can be proved as well.}%
\remove{Even in the current state though,}%
\add{Compared to}
previous paper proofs of secure compilation with memory
sharing~\cite{10.1145/3436809,capableptrs},
\add{all details of} our proofs are
\add{mechanized with respect to the
clear axioms mentioned above.}%
\remove{done in much 
greater detail, which gives us much higher confidence}
We found that the use of a proof assistant was vital in 
getting \add{all} the
invariants right \add{and alleviating the human burden of
checking our proof}.
\remove{checking the thousands of lines of proof of all
the relevant simulation lemmas, some of which appear later in this
paper.}

\newcommand{\supp}{
\paragraph{Supplementary material}
\add{
Beyond our mechanized Coq proof, the supplementary material of this submission includes:
(i) a {\tt summary\_of\_changes.md} file describing how we addressed each
requested change and then answering to individual reviewer comments;
(ii) a long version of the paper that: (ii\nobreakdash-a)~explicitly marks text which was
added or removed usually in response to reviewer feedback and (ii\nobreakdash-b)~includes our
appendices with some extra material that was requested by the reviewers, but
that did not fit in the page limit of this submission (we will include this appendix on arXiv).
}
}
\ifdiff\supp\fi

\paragraph{Outline}
The rest of the paper is organized as follows:
In \autoref{sec:background} we illustrate our secure compilation criterion and outline
a previous proof~\cite{Abate:2018:GCG:3243734.3243745} that did not support memory sharing.
In \autoref{sec:keyideas} we explain the challenges of memory sharing,
introduce data-flow back-translation and turn-taking simulation,
and show how they fit into the existing proof outline.
In \autoref{sec:compiler} we show the
source and target languages to which we apply these techniques.
\autoref{sec:back-trans} \add{and \autoref{sec:recomb} }provide details of applying
data-flow back-translation \add{and turn-taking simulation} to our
setting\remove{in the appendix does the same for turn-taking
simulation,} and \autoref{sec:axioms} explains our assumptions.  Finally,
we discuss related work (\autoref{sec:related-work})\add{, scope and limitations
(\autoref{sec:limitations})}, and future work
(\autoref{sec:conclusion}).

\section{Background}
\label{sec:background}

We start with\remove{giving some background:} a motivating
example explaining the broad setting 
we work with (\autoref{sec:background-example}), the formal
secure compilation criterion we prove (called robust safety preservation;
\autoref{sec:rsp}),
and a proof strategy from prior work on which we build
(\autoref{sec:background-proof-strategy}).

\subsection{Motivating Example and Setting}
\label{sec:background-example}

\ifsooner
\ch{Not sure whether this was already discussed, but I have the feeling that
  this simplified example is a bit watered down. It does show a reasonable
  attack, but not the kind of attack that could appear in our target language.}
\fi

Broadly speaking, we are interested in the common scenario where a \emph{part} of a
program is written in a \emph{memory-safe source language}, compiled to a
target language and then linked against other target-language program
parts, possibly untrusted or prone to be compromised, to finally
obtain an executable target program. By ``program part'', we mean a
collection of components (modules), each of which contains a set of
functions. These functions may call other functions, both within this
part and those in other parts. We use the terms
``program'' and ``program part'' to refer to the program part we
wrote and compiled, and ``context'' to refer to the remaining,
co-linked program part that we didn't write.
\ifsooner\ch{Still not happy with this confusing naming scheme,
  in which the context is a ``program part'', which also means ``program''}\fi

As an example, consider the following source program part, a single
component, \src{Main}, which implements a \remove{server-side
function \src{set\_ads\_image} that prepares a page to be shown to the
end-user. The function calls a helper
function \src{populate\_partner\_ads} which is implemented by a
third-party library from an advertising company.}\add{\src{main}
function that calls two other functions \src{Net.init\_network}
and \src{Net.receive}, both implemented by a third-party 
networking library \src{Net} (not shown).}

\begin{lstlisting}[mathescape]
import component Net
component Main {
  static iobuffer[1024];
  static user_balance_usd;

  main () {
    Net.init_network(iobuffer);
    Net.receive();
  }
}
\end{lstlisting}


Suppose that the source language is memory safe and that the program
part above is compiled using a \emph{correct} compiler to some
lower-level language, then linked to a context that
implements \remove{\src{populate\_partner\_ads}}\add{\src{Net.init\_network}
and \src{Net.receive}}, and the resulting program is
executed. Our goal is to ensure a safety
property \textbf{nowrite}---that \remove{\src{populate\_partner\_ads}}\add{\src{Net.receive}} never
modifies the variable \src{user\_balance\_usd} (which is high
integrity). Note that it is okay for \remove{\src{populate\_partner\_ads}}\add{\src{Net.receive}} to
modify the array \remove{\src{ads\_image}}\add{\src{iobuffer}}, 
whose pointer is passed as a parameter to \add{the previous call to the \src{Net} library (to the function \src{Net.init\_network})}.
The
concern really is that a low-enough implementation
of \remove{\src{populate\_partner\_ads}}\add{\src{Net.receive}} 
may overflow the
array \remove{\src{ads\_image}}\add{\src{iobuffer}} to overwrite \src{user\_balance\_usd}.

Broadly speaking, we can attain the invariant \textbf{nowrite} in at least two different
ways, which we call \textbf{Setting~1} and \textbf{Setting~2}.
In \textbf{Setting~1}, \add{we compile to any target language, possibly memory-unsafe, but} restrict the compilation of the program
part above to be linked \emph{only} to target-language contexts that
were obtained by compiling program parts written in the same source
language. Since the source language is safe, there is no way for any
source function to cause a buffer overflow and a correct compiler will
transfer this restriction to the target language so, in particular,
the compilation of \remove{\src{populate\_partner\_ads}}\add{\src{Net.receive}} 
cannot
overwrite \src{user\_balance\_usd}, thus
ensuring \textbf{nowrite}. This kind of restriction on linking---and the
verification of compilers under such restrictions---has been studied
extensively in \emph{compositional compiler correctness}~\cite{KangKHDV15,10.1145/2784731.2784764,10.1145/2676726.2676985,10.1145/3371091}.


\add{In \textbf{Setting 2}, we compile to a target language with support for fine-grained memory
protection, e.g., a capability
machine~\cite{cheriageofrisk,chericompartment} or a tagged-memory
architecture~\cite{pump,micropolicies}, but allow target contexts to
be arbitrary.} \remove{This restriction on linking is lifted
in \textbf{Setting 2}, where the context is arbitrary target code
(written directly in the target language, or maybe compiled from
another less safe language).} \add{The compiler uses the target
language's memory protection to defend against malicious attacks that
do not necessarily adhere to source language's memory-safety semantics.}
Now, \textbf{nowrite} does not follow from source memory safety and
the correctness of the compiler. Instead, we must show that the
compilation chain satisfies some additional security property.  It is this
second setting that interests us here and, more broadly, a large part
of the literature on secure compilation.

\subsection{Robust Safety Preservation (\rsp)}
\label{sec:rsp}

The next question is what security criterion the compilation chain must satisfy
to ensure that \textbf{nowrite} or, more generally, any
property of interest, holds in \textbf{Setting 2}. The literature on
secure compilation has proposed many such criteria
(see~\citet{abate2019journey,Patrignani:2019:FAS:3303862.3280984}). 
Here we describe and adopt one of the simplest criteria that
ensures \textbf{nowrite}, namely, \emph{robust safety preservation} or
\rsp~\cite{DBLP:conf/esop/AbateBCD0HPTT20}:\footnotemark


\begin{definition}[Compilation chain has \rsp~\cite{DBLP:conf/esop/AbateBCD0HPTT20}]\label{rsp-def}
	\begin{align*}
	\rsp~\defeq~&
	\forall \src{P}\ \trg{C_t}\ t.\ \trg{(C_t\ \cup\ }\cmp{\src{P}}\trg{)}\ 
	\trg{\rightsquigarrow^{*}}\ t\\
	&\ \ \ \  \implies \exists \src{C_s}\ t'.~
	\src{(C_s \cup P)\rightsquigarrow^{*}}\ t'\
	\wedge\ t'\sim t
	\end{align*}
\end{definition}

\footnotetext{As a notational convention, we use different fonts and
  colors for \src{source\ language\ elements} and
  \trg{target\ language\ elements}. Common elements are written in
  normal black font. We also use the symbol $\downarrow$ for the
  compiler's translation function.}

This definition states the following: Consider any source program part\footnote{We use the notation uppercase 
	$P$ for a program, partial or whole, but only whole
	programs can execute. Whole-program execution
	is denoted $P \rightsquigarrow^{*}\  t$ or
	$P \step{t} s$ where $s$ is a state reached after
	emitting a trace prefix $t$.}
$\src{P}$ and its compilation $\cmp{\src{P}}$. If $\cmp{\src{P}}$
linked with some (arbitrarily chosen) target context \trg{C_t} emits a
finite trace prefix $t$, then there must exist a source context
\src{C_s} that when linked to \src{P} is able to cause \src{P} to emit a
related trace prefix~$t'$.%
%

To understand why this definition captures secure compilation in
\textbf{Setting 2}, consider the case where $t$ is a trace witnessing
the violation of a safety property of interest. Then, if the compiler
has \rsp, there must be a source context which causes a similar
violation entirely \emph{in the source language}. In other words, an
attack from some \emph{target}-level context can only arise if the source
program is vulnerable to a similar attack from some \emph{source}-level
context. In our particular example, since there clearly is no source
context violating \textbf{nowrite}, 
\add{\rsp guarantees that} no target context can violate
\remove{\textbf{nowrite}}\add{it} either.

A compilation chain attains \rsp
by enforcing source language abstractions \remove{like
memory safety} against arbitrary target contexts. \add{The specific
source abstraction of interest to us here is memory safety.}
%
%
Our goal in this paper is to explain that {\em proving} \rsp in the
presence of memory sharing and source memory safety is difficult and
to develop proof techniques for doing this. \add{For simplicity,} the
concrete target language we use is memory safe, \add{but our
techniques benefit any compiler that targets a language with
fine-grained memory protection.}%
\remove{However, even in this setting, the difficulties in proving \rsp with
memory sharing show up prominently.}
Our source and target languages differ significantly in their control
flow constructs, which makes back-translation challenging.%
\iffull
\footnote{\add{
Although we focus on \emph{proof} challenges,
an orthogonal challenge is the efficient enforcement of the safe pointer abstraction
all the way down to machine code. This can be done, for instance, using hardware
capabilities~\cite{cheriageofrisk,chericompartment} or programmable tagged
architectures~\cite{pump,micropolicies}.
}}
\fi

The definition of \rsp is indexed by a relation $\sim$
between source and target traces. The concrete instantiation of this
relation determines how safety properties transform from source to
target~\cite{DBLP:conf/esop/AbateBCD0HPTT20}.  In our setting, $\sim$
is a bijective renaming relation on memory addresses, which we
describe later (\Cref{def-trace-rel}).


\if 0

If the \emph{target} language correctly \emph{enforces
bounds checks} on memory accesses, then
an illegal memory access of \src{user\_balance\_usd}
is prevented.
Other safety properties (see appendix \cref{?}) may
depend on the
target language
enforcing the well-bracketedness
of cross-component calls and returns.

Thus, enforcing security features in the
target language helps ensure that safety properties
like \textbf{NO\_WRITE} hold
not only in \textbf{Setting~1} where the assumption is
that untrusted functions are written in the safe
source language,
but also, more crucially, in \textbf{Setting~2},
where they are provided directly in the target language.

Robust safety preservation expresses this reduction
formally: it reduces the truth of \emph{any 
safety property} in \textbf{Setting 2}
 to its truth in \textbf{Setting 1}, thereby
allowing the programmer to focus only on reasoning about
safety invariants in the presence of
\emph{source-level} attacks and to be able 
\emph{for free} to rest assured that the same
safety invariants hold in the presence of
\emph{low-level} attacks
as well.



More formally, robust safety preservation (\rsp) 
\cite{DBLP:conf/esop/AbateBCD0HPTT20}, 
can be stated as follows:

\begin{definition}[Compiler $\downarrow$ robustly
	preserves safety]\label{rsp-def}
	\begin{align*}
	&\rsp(\downarrow)~\defeq~
	\forall \src{P}\ \trg{C_t}\ t.\\
	&\ \ \ \ \trg{(C_t\ \cup\ }\src{P}\downarrow\trg{)}\ 
	\trg{\rightsquigarrow^{*}}\ t \implies\\
	&\ \ \ \ \exists \src{C_s}\ \tilde{t}.~
	\src{(C_s \cup P)\ \rightsquigarrow^{*}}\ \tilde{t}\ 
	\wedge\ \tilde{t}\sim t
	\end{align*}
\end{definition}

As a notation convention, we will use a style
for typesetting \src{a\ source\ language\ element} that
is different from the \trg{style\ for\ a\ target\ 
	language\ element}.
\src{P} is a partial source program
(e.g., the snippet of code above).
\trg{C_t} is a target context (e.g., the assembly 
implementation of \src{populate\_partner\_ads}
 from the example
above). $t$ is a finite prefix of the trace of 
observable events of a program.

\Cref{rsp-def} states that if a compiled program
(\src{P}$\downarrow$),
linked with some (arbitrarily chosen) target
context \trg{C_t} (\textbf{Setting 2}),
emits a finite trace prefix
$t$,
then there must exist a 
source context \src{C_s} that when linked with
the source version \src{P} (\textbf{Setting 1}) 
is able to trigger \src{P} 
to emit a related trace prefix $\tilde{t}$.
This general definition of robust safety preservation
is standard~\cite{abate2019journey,DBLP:conf/esop/AbateBCD0HPTT20},
and depends (among others) on the choice of a trace relation.
In our case study we choose a relation that renames pointer values based on a
simple bijective mapping between the shared memory locations that appear on
$\tilde{t}$ and those that appear on $t$
(see \autoref{sec:rsp-proof-trace-rel}).

Intuitively, this formulation
of \rsp\ ensures that any (bad) prefix of execution
of the compiled program in \textbf{Setting 2} can
be explained by a ``related'' (bad) prefix of execution
of the source version of the program in 
\textbf{Setting 1}. Because any safety property is
characterized by a set of violating bad prefixes, \rsp\ 
essentially states that a violation of any safety
property in \textbf{Setting 2} can be attributed
to a violation of a related safety property in
\textbf{Setting 1}.

One only has to
make sure that the chosen trace relation can be
lifted neatly to a relation on safety properties~\cite{DBLP:conf/esop/AbateBCD0HPTT20}.

\fi



\subsection{A Proof Strategy for Robust Safety Preservation}
\label{sec:background-proof-strategy}

\begin{figure}[h]
	\hspace*{-5mm}\makebox[\textwidth][l]{
		\begin{tikzpicture}[auto]
  \node(cPi) [text width=3cm] {$(\trg{C_t}\ \trg{\cup}\ \cmp{\src{P}})$ \\ \ \ $\trg{\doesprefix}\ t_1$ };
  \node[right = of cPi,xshift=-6em] (CP1t) { $(\cmp{\src{C_s}}\src{\cup}\ \cmp{\src{P'}})\ \trg{\doesprefix}\ t_2$ };
  \node[align = left, above = of cPi,xshift=1em] (CP1s) { $\back{t_1\ }$ $= (\src{C_s}\ \src{\cup}\ \src{P'})\  \src{\doesprefix}\ t_\mi{backtr}$ };
  \node[right = of CP1t,xshift=-1.5em] (CPt) { $(\cmp{\src{C_s}}\src{\cup}\ \cmp{\src{P}})\  \trg{\doesprefix}\ t_{1,2}$ };
  \node[above = of CPt, right = of CP1s] (CPs) { $(\src{C_s}\ \src{\cup}\ \src{P})\ \src{\doesprefix}\ t_\mi{QED} 
  	$
  };

  \draw[->] (cPi.90) to node
  [text width=2cm, xshift=-0.5em,yshift=-1.2em,align=right] {\hyperref[lemma-abateetal-backtrans]{\em I. Back-\\translation}} (CP1s.-90);


  \draw[->] ([xshift=2em]CP1s.-45) to node [align=left,font=\itshape,xshift=-0.25em,yshift=-1.7em,
  text width=2.5cm]{\hyperref[assm-fwdsim]{II. Forward \\\ \ \ Compiler \\ \ \ \ Correctness}} (CP1t);


  \draw[->] (CP1t.0) to node [below,xshift=-1.5em,yshift=-0.8em]{\em \hyperref[lemma-robustptrs-recomp]{III. Recomposition}} (CPt.180);
  \coordinate [below=1.5em of CP1t,xshift=0em] (compoint1);
  \coordinate [below=1.5em of CPt,xshift=-4em] (compoint2);
  \draw[-] ([xshift=-4em,yshift=1.2em]cPi.south east) to (compoint1);
  \draw[-] (compoint1) to (compoint2);
  \draw[->] (compoint2) to (CPt.220);

  \draw[->] ([xshift=-2em]CPt.90) to node [right,align=left,font=\itshape,text width = 2.7cm]{\hyperref[assm-bwdsim]{IV. Backward \\ \  Compiler \\ \ Correctness}} ([xshift=-1em]CPs.-90);

\end{tikzpicture}
	}
	\caption{Generic proof
          technique~\cite{Abate:2018:GCG:3243734.3243745} for
          \rsp. The traces $t_1$, $t_\mi{backr}$, $t_2$, $t_{1,2}$,
          and $t_\mi{QED}$ are pairwise related by 
          $\sim$.}
	\label{fig:rsc-proof-old}
\end{figure}

\rsp can be proved in various ways~\cite{Abate:2018:GCG:3243734.3243745,10.1145/3436809,abate2019journey}. 
Here, we adapt a
proof strategy by~\citet{Abate:2018:GCG:3243734.3243745},
since it \emph{reuses} the proof of compiler correctness,
thus avoiding duplication of work. \Cref{fig:rsc-proof-old}
summarizes the proof
strategy.\footnote{\citet{Abate:2018:GCG:3243734.3243745} instantiate the
  strategy mostly 
  for $\sim$ set to equality, while we use a nontrivial $\sim$ everywhere, but
  this difference is less important here. We also removed everything they do
  about undefined behavior, which we do not consider in this work
  (see also \autoref{sec:conclusion}).}

Overall, \citet{Abate:2018:GCG:3243734.3243745}'s proof of \rsp
consists of four steps, two of which are immediate from compiler
correctness. \rsp requires starting from
$(\trg{C_t}\ \trg{\cup}\ \cmp{\src{P}})\ \trg{\rightsquigarrow^{*}}\ t_1$
to demonstrate the existence of a $\src{C_s}$ such that $({\src{C_s}}
\mathbin{\src{\cup}} {\src{P}})\ \src{\rightsquigarrow^{*}}\ t_\mi{QED}$. The first
proof step uses \emph{back-translation}
(\Cref{lemma-abateetal-backtrans}) to show from
$({\trg{C_t}} \mathbin{\trg{\cup}} {\cmp{\src{P}}})\ \trg{\rightsquigarrow^{*}}\ t_1$
that there exist $\src{C_s}$ and $\src{P'}$ such that $(\src{C_s}
\mathbin{\src{\cup}} {\src{P'}})\ \src{\rightsquigarrow^{*}}\ t_\mi{backtr}$ with $t_1 \sim t_\mi{backtr}$. Note that
the back-translation produces both a new context and a new program part, and that
\src{P'} may be completely different from \src{P}.
%
The second step
directly uses a form of compiler correctness called forward compiler
correctness (\Cref{assm-fwdsim}), to conclude that the compilation of
this new source program, ${\cmp{({\src{C_s}} \mathbin{\src{\cup}}{\src{P'}})}} =
{\cmp{\src{C_s}}} \mathbin{\src{\cup}} {\cmp{\src{P'}}}$, produces $t_2$,
related to $t_1$.
At this point,
we have two target programs -- ${\trg{C_t}} \mathbin{\trg{\cup}} {\cmp{\src{P}}}$
and ${\cmp{\src{C_s}}} \mathbin{\trg{\cup}} {\cmp{\src{P'}}}$ -- that produce related traces $t_1$ and $t_2$.
The third step uses an innovative target-language
lemma, \emph{recomposition} (\Cref{lemma-abateetal-recomb}), to show
that a third program ${\cmp{\src{C_s}}} \mathbin{\trg{\cup}} {\cmp{\src{P}}}$, which
takes $\cmp{\src{P}}$ from the first program and $\cmp{\src{C_s}}$
from the second, also produces a related
trace $t_{1,2}$. The final, fourth step uses
another form of compiler correctness, called backward compiler
correctness (\Cref{assm-bwdsim}), to conclude from this that the
corresponding source, $\src{C_s} \mathbin{\src{\cup}} \src{P}$ produces
a related trace $t_\mi{QED}$. This concludes the proof.

\ifsooner
\ch{Let's try to use the same names for traces below as in the diagram.}
\aek{Tried to do that, but space is tricky.}
\fi

\begin{lemma}
	[Whole-Program Back-translation~\cite{Abate:2018:GCG:3243734.3243745}]
	\label{lemma-abateetal-backtrans}
	\begin{align*}
	&\forall \trg{P}\ t.\ 
	\trg{P \rightsquigarrow^{*}\ } t \implies
	\exists \src{P}\ t'.\ 
	\src{P \rightsquigarrow^{*}\ } t' \ \wedge\ t' \sim t
	\end{align*}
\end{lemma}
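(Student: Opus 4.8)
The plan is to instantiate the \emph{data-flow back-translation} technique sketched above. The first move is to refine the target semantics into a \emph{data-flow trace semantics} that, besides the usual observable events (cross-component calls, returns, and writes to shared memory), also records the internal data-flow steps of the currently executing component — in particular the chain of private-memory dereferences and register/memory transfers that produce each value eventually exposed at a component boundary. I would then show that this refined semantics is a conservative extension of the ordinary one: erasing the data-flow events from a data-flow trace yields exactly an ordinary trace, and every ordinary run of a target whole program $\trg{P}$ lifts to a data-flow run emitting such a refined trace. Hence it suffices to back-translate from a data-flow trace, and the statement follows by composing this lifting with the back-translation.

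Next I would define the back-translation function on data-flow traces. It produces a fixed \emph{skeleton} source whole program — one component per target component on the trace, each exporting the functions named on the trace — whose function bodies are driven by a global step counter: at step $i$ the program executes the small source expression obtained by translating the $i$-th trace event. Each kind of event is mapped, one by one, to a short fixed source snippet: a private dereference becomes a source load through the corresponding source pointer; a store becomes a source store; a cross-component call or return becomes the corresponding source call/return passing the translated argument or result; and allocation events become source allocations. The key point is that, because the trace now explicitly contains the dereference chain, the generated source code can \emph{follow} that chain rather than having to forge a pointer or stash all reachable pointers as in the closest prior work.

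The core of the proof is then an inductive simulation argument over the length of the data-flow trace. I would maintain (i) a partial bijection $\beta$ between target and source addresses that extends monotonically as new locations are allocated or first shared, and which witnesses $\sim$; and (ii) a memory relation stating that the target memory recorded in the current event is $\beta$-related to the source memory of the state reached after running the back-translation of the prefix so far, together with agreement on the current component, the call stack, and the "value in flight". The inductive step is a case analysis on the next event: for each event kind I show that its fixed source snippet executes from the current source state, preserves the invariant (extending $\beta$ exactly when the event introduces a fresh shared address), and emits the $\beta$-renamed observable event — which is precisely what $\sim$ demands. Assembling the steps, the skeleton source program emits a trace $t'$ with $t' \sim t$, which proves the lemma. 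Note that no turn-taking or recomposition machinery is needed here: this is a whole-program statement, so both sides of the eventual program–context split are back-translated uniformly.

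The step I expect to be the main obstacle is pinning down this invariant and carrying its bookkeeping through every event kind. The memory relation and the bijection $\beta$ must be simultaneously strong enough to survive stores through freshly shared pointers and allocation (where $\beta$ has to grow in lockstep in source and target, and source allocation must stay aligned with target allocation), yet robust enough to be re-established across cross-component events, where the value in flight and the call-stack discipline must line up on both sides. Everything else — the skeleton construction, the fixed per-event snippets, and the conservativity of the data-flow semantics — is comparatively routine once the invariant is fixed; the real work, and the reason we found a proof assistant essential, is getting that invariant right.
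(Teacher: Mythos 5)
Your proposal matches the paper's proof: it factors the lemma into an enrichment step (our \Cref{lemma-enrichment}) and a data-flow back-translation step (our \Cref{lemma-robustptrs-backtrans}), with the latter proved by a per-event lock-step simulation invariant relating the source state to the memory/registers recorded in each data-flow event (our \texttt{mimicking\_state} of \Cref{lemma-definability}). The only cosmetic differences are that the paper fixes the address bijection to a simple increment-by-1 block renaming rather than a dynamically growing one, and simulates the target register file inside each component's static buffer.
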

\begin{assumption}[Whole-Program Forward Compiler Correctness]\label{assm-fwdsim}
	$\forall \src{P}\ t.\ 
	\src{P \rightsquigarrow^{*}}\ t \implies
	\exists t'.\ 
	\cmp{\src{P}} \trg{\rightsquigarrow^{*}}\ t'\
	\wedge\ t' \sim t$
\end{assumption}

\begin{lemma}
	[Recomposition~\cite{Abate:2018:GCG:3243734.3243745}]
	\label{lemma-abateetal-recomb}
	\label{lemma-robustptrs-recomp}
	\begin{align*}
	&\forall \trg{P_1}\ \trg{C_1}\ \trg{P_2}\ \trg{C_2}\ 
	t_1 t_2.\\
	&\ \ \trg{(P_1 \mathbin{\cup} C_1) \rightsquigarrow^{*}\ }t_1 \implies
	\ \trg{(P_2 \mathbin{\cup} C_2) \rightsquigarrow^{*}\ }t_2 \implies\\
	&\ \ t_1 \sim t_2 \implies\ \exists t_{1,2}.\ \trg{(P_1 \mathbin{\cup} C_2) \rightsquigarrow^{*}\ }t_{1,2}\ \wedge\
	t_{1,2} \sim t_1
	\end{align*}
\end{lemma}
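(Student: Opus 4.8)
The plan is to prove Recomposition by a single \emph{ternary simulation} that runs the three executions $\trg{P_1} \cup \trg{C_1}$, $\trg{P_2} \cup \trg{C_2}$, and the to-be-constructed $\trg{P_1} \cup \trg{C_2}$ in a coordinated fashion---following \citet{Abate:2018:GCG:3243734.3243745} and reusing as much of their sharing-free recomposition argument as possible, since all the genuinely new work concerns how the three memories are related. First I would define a \emph{turn-taking} state invariant $\mc{R}$ on triples of states that comes in two modes according to which side currently holds control. In \emph{program mode} (control in $\trg{P_1}$), $\mc{R}$ demands that the $\trg{P_1}$-part of the recomposed state---program counter, registers, and \emph{all} of $\trg{P_1}$'s memory, private and shared---agrees, up to the current renaming of addresses, with the corresponding part of the $\trg{P_1} \cup \trg{C_1}$ state, while only $\trg{C_2}$'s \emph{private} memory need agree with its counterpart in $\trg{P_2} \cup \trg{C_2}$; \emph{context mode} (control in $\trg{C_2}$) is exactly dual, swapping the roles of the two sides. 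In both modes $\mc{R}$ additionally records that the traces emitted so far by the three runs are pairwise $\sim$-related and that the underlying address renamings are mutually compatible partial bijections.

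The simulation step is then a case analysis on the next step taken by the in-control component of the recomposed run. For an internal step, control does not change, and the step is replayed in the matching ``home'' run---the same instruction executed on renamed-equal state yields renamed-equal state, the target being deterministic modulo the choice of fresh allocation addresses; the other side's private memory is untouched, so $\mc{R}$ is re-established in the same mode. This case is essentially the existing sharing-free argument. For a cross-component call or return, control switches and the turn flips; here I would (i)~use $t_1 \sim t_2$---which aligns the two given traces event for event up to renaming---together with the invariant to conclude that the boundary event emitted by the recomposed run is $\sim$-related to the corresponding boundary event of each given run; (ii)~perform the memory ``hand-off'', i.e., argue that the shared region---whose content may have drifted away from its $\trg{P_1} \cup \trg{C_1}$ counterpart during the turn that just ended---is nonetheless re-synchronised at this point thanks to the memory information carried by the trace at cross-component boundaries (whose $\sim$-agreement comes from (i)), so that the \emph{strong}, full-memory half of $\mc{R}$ again holds for the side that just received control; and (iii)~extend the address renamings to cover any locations newly exposed by the event, checking that the three bijections stay injective and mutually compatible. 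Once the finite prefix $t_1$ has been consumed we have built a run of $\trg{P_1} \cup \trg{C_2}$ emitting some $t_{1,2}$, and the trace clause of $\mc{R}$ gives $t_{1,2} \sim t_1$, which is exactly the required conclusion.

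I expect step (ii), the memory hand-off at turn boundaries, to be the main obstacle, for the reason sketched earlier: while one side holds control, the \emph{shared} portion of the other side's memory genuinely evolves differently in the recomposed run than in that side's home run---a different component is running in its place---so $\mc{R}$ \emph{cannot} relate those shared portions during the other side's turn and must be weakened to private memory only, then re-strengthened to full memory at the next boundary using precisely the shared-memory content recorded on the trace. Making the weakening and the re-strengthening line up---and, in particular, showing that whatever a newly-in-control component reads back from shared memory is determined up to renaming by the boundary event, so that its subsequent behaviour still mirrors its home run and keeps the three traces $\sim$-related---is the delicate part. A pervasive secondary difficulty is allocation bookkeeping: the three runs pick distinct fresh addresses, so every allocation forces a parallel extension of the renamings together with fresh injectivity and compatibility obligations; this is routine but ubiquitous, and accounts for much of the size of the mechanised proof.
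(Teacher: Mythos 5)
Your proposal is correct and follows essentially the same route as the paper: a ternary simulation with a two-mode, turn-taking invariant that relates the full memory of the in-control side to its home run and only the private memory of the other side, weakened during the opponent's turn and re-strengthened at cross-component boundaries using the $\sim$-relatedness of the given traces, with address-renaming bijections threaded throughout. The paper merely packages the step analysis slightly differently (separate lock-step and ``option'' simulation lemmas for the executing and non-executing parts, plus a symmetry lemma to avoid duplicating the two modes), but the invariant and the strengthening-at-interaction-events idea you identify as the crux are exactly its $\mathtt{mem\_rel\_tt}$ and strengthening lemma.
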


\begin{assumption}[Whole-Program Backward Compiler Correctness]\label{assm-bwdsim}
        $\forall \src{P}\ t.\ 
	\cmp{\src{P}}\ \trg{\rightsquigarrow^{*}}\ t
	\implies
	\exists t'.\ \src{P \rightsquigarrow^{*}}\ t'\ \wedge\
	t' \sim t$
\end{assumption}

By following this proof strategy,
\citet{Abate:2018:GCG:3243734.3243745} are able to reuse compiler
correctness \add{(\Cref{assm-fwdsim,assm-bwdsim})} and reduce the entire proof of \rsp to two key lemmas:
back-translation (\Cref{lemma-abateetal-backtrans}) and recomposition
(\Cref{lemma-abateetal-recomb}).

However, Abate et al.\ execute this strategy for languages without any
memory sharing between components.  Their components---both source and
target---interact \emph{only through} 
\add{\emph{integers} passed as}
function call arguments and return
values. As such, our earlier example cannot even be expressed in their
setting. In the rest of this paper, we adapt their proof strategy
for \rsp to the setting where memory sharing is allowed. We show that
memory sharing significantly complicates the proofs of both
back-translation and recomposition, and requires new proof
techniques. However, before explaining these, we briefly show what
traces actually look like.

\paragraph{Interaction traces}
A trace or, more precisely, an interaction trace, is a modeling and
proof artifact that arises from an instrumented reduction semantics of a
language, wherein certain steps are labeled with descriptors called
\emph{events}. The sequence of events along a reduction sequence forms
a trace, denoted~$t$. In prior work on secure compilation, only steps
involving cross-component interactions or external communication
(input-output) have been labeled with events.
For example, in \citepos{Abate:2018:GCG:3243734.3243745} setting without
shared memory, cross-component interaction happens through
calls and returns only,
hence, their events
are only cross-component calls and returns.
We denote these events
$e_\mi{no\_shr}$ where the subscript $\mi{no\_shr}$ stands for ``no
memory sharing''.
\begin{align*}
e_{\mi{no\_shr}} ::=
\mathtt{Call}~\mathit{c_{caller}~c_{callee}.f(v)}
\mid
\mathtt{Ret}~\mathit{c_{prev}~c_{next}~v}
\end{align*}
The event $\mathtt{Call}~\mathit{c_\mi{caller}~c_\mi{callee}.f(v)}$
represents a call from component $c_\mi{caller}$ to the function $f$ of
component $c_\mi{callee}$ with argument $\mi{v}$. The dual event
$\mathtt{Ret}~\mathit{c_\mi{prev}~c_\mi{next}~v}$ represents a return from
component $c_\mi{prev}$ to component $c_\mi{next}$ with return value
$\mi{v}$. Along a trace, calls and returns are always well-bracketed (the
semantics of both the source and target languages enforce this).

In our setting, memory shared between components is another medium of
interaction, so reads and writes to it must be represented on
interaction traces. However, our languages are sequential (only one
component executes at a time), so writes to shared memory made by a
component become visible to another component only when the writing
component transfers control to the other component. As such, to
capture interactions between components, it suffices to record the
state of the shared memory only when control transfers from one component
to another, i.e., at cross-component calls and returns. For this, we
modify call and return events to also record the state of the memory
shared up to the time of the event (the shared part of memory grows
along an execution as more pointers are passed across components). The
new events, denoted $e$, are defined below. The shared memory on each
event, written $\mathit{Mem}$, is underlined for emphasis
only. Technically, $\mathit{Mem}$ is a just a partial map from
locations $l$ to values $\mi{v}$, which themselves can be pointers to locations.
\begin{definition}[Interaction-trace events w/ memory sharing]
  \label{def-mem-sharing-trace-event}
  \begin{align*}
    e ::=
    \mathtt{Call}~\mathit{\underline{Mem}~c_{caller}~c_{callee}.f(v)}
    \mid
    \mathtt{Ret}~\mathit{\underline{Mem}~c_{prev}~c_{next}~v}
  \end{align*}
\end{definition}

Interaction traces serve two broad purposes. First, they are used to
express safety properties of interest, such as the \textbf{nowrite}
property in our earlier example. \ifappendix{(\Cref{app:safety-example} shows how
\textbf{nowrite} can be expressed as a predicate on interaction
traces.)}\fi  Second, as we explain in \autoref{sec:keyideas}, interaction
traces are essential to the proof of back-translation,
\Cref{lemma-abateetal-backtrans}. One of our key insights is that, with
memory sharing, enriching interaction traces with selective
information about data-flows \emph{within} a component can simplify
the proof of back-translation considerably.

\section{Key Technical Ideas}\label{sec:keyideas}


We describe why the proofs of \Cref{lemma-abateetal-backtrans,lemma-abateetal-recomb}
 become substantially more difficult in
the presence of memory sharing, and our new techniques---data-flow back-translations
and turn-taking simulations---that offset some of the extra difficulty.

\subsection{Data-Flow Back-translation}
\label{sec:key-ideas-data-flow-traces}



\ifsooner
\ch{I generally find the first page of this section too repetitive, many things
  which were already said in the abstract and intro and Section 2 are repeated
  multiple times here too.}
\fi

In proving back-translation (\Cref{lemma-abateetal-backtrans}), we are
given a target language whole program $\trg{P}$ and an interaction
trace $t$ that it produces, and we have to construct a whole source
program $\src{P}$ that produces a related interaction trace
$t'$. \remove{This process of constructing the source program
$\src{P}$ is often called \emph{back-translation} (hence, the name of
the lemma). Obviously,}\add{For \rsp, }we can construct $\src{P}$ from either
$\trg{P}$ or $t$. Prior work has considered both approaches.
Construction of $\src{P}$ from $\trg{P}$, which we
call \emph{syntax-directed back-translation}, typically works by
simulating $\trg{P}$ in the source
language~\cite{DBLP:journals/lmcs/DevriesePPK17,popl-backtrans,Skorstengaard:2019:SEW:3302515.3290332,DBLP:journals/pacmpl/StrydonckPD19,TsampasStelios2019Tsfs,ahmedCPS,AhmedFa,New,Patrignani:2019:FAS:3303862.3280984}.
This is tractable when every construct of the target language can be
simulated easily in the source. However, \add{as explained earlier,}
this is not the case for many pairs of languages including our source
and target languages (\autoref{sec:compiler}).
The alternative then is to construct $\src{P}$ from the given target
trace
$t$ \cite{capableptrs,Abate:2018:GCG:3243734.3243745,patrignani2015secure,10.1145/3436809}. This
alternative, which we call \emph{trace-directed back-translation},
should be easier in principle, since the interaction trace only records
cross-component interactions, so there is no need to simulate every
language construct in the source; instead, only constructs that can
influence cross-component interactions need to be simulated.

Indeed, trace-directed back-translation is fairly straightforward
when there is no memory
sharing~\cite{Abate:2018:GCG:3243734.3243745,patrignani2015secure} or
when memory references (pointers) can be constructed from primitive
data like integers in the source language\remove{ (the latter is true in
unsafe languages like C)}.
However, with memory sharing and unforgeable
memory references in the source---something that is common in safe
source languages like Java, Rust, Go and ML---trace-directed
back-translation is really difficult. To understand this, consider \remove{the
following simple example.}\add{the following
run of the \emph{compiled version}
of our example from \autoref{sec:background-example}.}

\begin{example}\label{example-sharing-history-dependence}
Suppose we want to back-translate the following four-event target interaction trace:
\begin{align*}
	&\mathtt{Call}~\mathit{
        Mem~c_\mathtt{Main}~c_\mathtt{Net}.
	\mathtt{init\_network}(l_\mathtt{iobuffer})}\\
::\ &\mathtt{Ret}~\mathit{
	Mem~c_\mathtt{Net}~c_\mathtt{Main}~0}\\
::\ &\mathtt{Call}~\mathit{
		Mem~c_\mathtt{Main}~c_\mathtt{Net}.
		\mathtt{receive}()}\\ 
::\ &\mathtt{Ret}~\mathit{
		Mem'~c_\mathtt{Net}~c_\mathtt{Main}~0}
\end{align*}
where \remove{$l_1$ and $l_2$ are distinct memory locations, $\mi{Mem} =
[l_1 \mapsto l_2, l_2 \mapsto 0]$ and $\mi{Mem'} = [l_1 \mapsto 100,
l_2 \mapsto 0]$.}\add{$\mi{Mem} = [l_\mathtt{iobuffer}\ \mapsto 0,\ l_\mathtt{iobuffer} + 1\ \mapsto 0,\ \ldots,\  
l_\mathtt{iobuffer} + 1023\ \mapsto 0 ]$ and
$\mi{Mem'} = [l_\mathtt{iobuffer}\ \mapsto 4,\ l_\mathtt{iobuffer} + 1\ \mapsto 4,\ \ldots,\  
l_\mathtt{iobuffer} + 1023\ \mapsto 4 ]$}\footnote{Technically, in our languages, function
calls and returns and, hence, interaction traces carry \emph{pointers}
to locations, not locations themselves. However, in this section,
we blur this distinction.}
\end{example}


\remove{In this example, the module $c_1$ calls $c_2$ twice -- first it calls
$c_2.f()$ and then it calls $c_2.g()$. Assume that prior to these
calls, $c_1$ dynamically allocated locations $l_1$ and $l_2$. In the
first call, $c_1$ passes the location $l_1$ to $c_2$ as the function
call argument. At this point $l_1$ happens to contain $l_2$. As a
result, the first call shares \emph{both} $l_1$ and $l_2$ with $c_2$
-- it shares $l_1$ directly, and shares $l_2$ indirectly via $l_1$.
$c_2.f()$ returns $0$ to $c_1$ without changing the shared memory.
Later, $c_1$ overwrites $l_1$'s contents with the value $100$ (to get
a new shared memory $\mi{Mem'}$), and calls $c_2.g()$. This time $c_2$
returns $l_2$ to $c_1$.

Note that at the time of second call, $l_2$ is actually not reachable
from the shared memory $\mi{Mem'}$. However, $c_2$ could have stashed
$l_2$ somewhere in its private memory during the first call and
retrieved it from there during the second call to return it.}

\add{In this example run, the program first shares some memory
(corresponding to \src{iobuffer}) by
calling \trg{Net.init\_network} with the pointer
$l_\mathtt{iobuffer}$. This call does not modify the shared memory
(the shared memory's state is $\mi{Mem}$ both before and after the
call). Later the program calls the function $\trg{Net.receive}$
without any arguments, but this call changes the shared memory
to $\mi{Mem}'$. (Assuming that our compiler uses the target's memory
protection correctly, this could only have happened if the \trg{Net}
library stashed the pointer $l_\mathtt{iobuffer}$ during the first
call and retrieved it during the second call.})


The question is how we can back-translate this interaction sequence
into a source program\add{, as required
by \Cref{lemma-abateetal-backtrans}}. If pointers were forgeable in
the source, this would be quite easy: $l_\mathtt{iobuffer}$,
being forgeable, could simply be hardcoded in the body of the
simulating source function $\src{Net.receive()}$.
However, \add{in our memory-safe source language},\remove{this is not
straightforward. Now, we cannot hardcode $l_2$ into the
back-translated $c_2.g()$'s body since $l_2$ is dynamically allocated
by $c_1$ during execution!  Consequently, the back-translated
component $c_2$ \emph{must} store $l_2$ in an indexed data structure
during the first call (to $c_2.f()$), and then somehow retrieve it
from that data structure in the second call.}
\add{the only option is to construct a source \src{Net.init\_network} that stashes
$l_\mathtt{iobuffer}$ for
\src{Net.receive}'s use.}
%
%
\add{Even though this ``stashing'' solution may
seem straightforward,}
\remove{the problem is actually more difficult than this example shows: The}\add{it is actually quite difficult because the}
back-translated context must fetch and 
\remove{store }\add{stash (\EG }in 
\remove{its }\add{an }indexed data
structure\add{)} \emph{all} pointers that become accessible to it directly
or indirectly by following shared pointers, since any of these pointers {\em may}
\remove{show up on the trace later}\add{be dereferenced later}.
\aek{They will definitely show up later 
since they all are ``shared so far'' pointers, 
but the over-approximation, \IE the ``may'' 
modality applies to whether we will need to
access their pointees or not.}

Prior work~\cite{capableptrs,10.1145/3436809} 
has used such a stashing data
structure. \remove{, because additional, complex
invariants about this data structure must be proved.}
\add{They fetch pointers
	by a custom graph traversal 
	(pointers are the edges and pointed locations are the nodes)
	whose output is a list of source commands. Each source command
	is responsible for traversing a path in memory and stashing the
	content of the destination location in private memory (in 
	anticipation that this stashed content might be a pointer
	in which case it might be
	needed when back-translating a later interaction event).}
\add{We found that proving
the correctness of} this\remove{is an immensely difficult}
construction \add{in a proof assistant is difficult, even
though the proofs seem easy on paper.  For instance, even leaving
aside the correctness of this traversal (which seems rather difficult),
just proving its termination is nontrivial in a proof assistant.}


\add{Note that traversing the entire shared memory is a proactive,
	over-approximating strategy on the part of the generated source
	context, by which it mimics \emph{all possible} stashing steps
	that the target context \emph{could} have made. This complex
	strategy was needed in prior work because information about
	data flows within the target context is missing from standard
	interaction traces, which prior work relied on.  If only
	a trace recorded precisely which memory paths were actually
	traversed, we could eliminate the complexity of the full
	traversal. This is exactly what our new data-flow
	back-translation idea supports.}

\paragraph{The new idea: \add{data-flow back-translation}}
\remove{This is where our new idea of \emph{data-flow back-translation} comes
in.} We enrich the interaction traces of the target language---only for
the purposes of the back-translation proof---with information
about \emph{all} data-flows, even those
\emph{within} (the private state of) a single component. We
call these enriched traces \emph{data-flow traces}. From the target
language's reduction semantics, we can easily prove that every
interaction trace as described above can be enriched to a data-flow
trace (\Cref{lemma-enrichment} below). And, given such a data-flow
trace, we can easily back-translate to a simulating source program,
since we know exactly how pointers flow. In the example
above, the enriched trace would tell us exactly what
$\trg{Net.init\_network}$ did to \remove{store}\add{stash}
$l_\mathtt{iobuffer}$ and how $\trg{Net.receive}$ retrieved it
later. We can then mimic this in the constructed source
program, \add{without having to stash all reachable pointers in memory
whenever passing control to the context} (see \Cref{example-bookkeeping-explicit}
below).

Concretely, we define a new type of data-flow traces, denoted $T$,
whose events, $\mathcal{E}$, extend those of  
interaction traces to
capture all possible data flows in the target language. In the
following, we show the events for our target language
(\autoref{sec:compiler}), which is a memory-safe assembly-like language
with registers and memory. The events $\mathtt{dfCall}$ and
$\mathtt{dfRet}$ are just the $\mathtt{Call}$ and $\mathtt{Ret}$
events of interaction traces (\Cref{def-mem-sharing-trace-event}). The
remaining events correspond to target language instructions that cause
data flows: loading a constant to a register ($\mathtt{Const}$),
copying from a register to another ($\mathtt{Mov}$), binary operations
($\mathtt{BinOp}$), copying from a register to memory or vice-versa
($\mathtt{Store}$, $\mathtt{Load}$) and allocating a fresh location
($\mathtt{Alloc}$). Importantly, in a data-flow trace, every event
records the entire state---both shared state and state private to
individual components. Accordingly, in the events below, $\mi{Mem}$
also includes locations that were not shared to other components, and
$\mi{Reg}$ is the state of the register file.

\begin{definition}[Events of data-flow traces]
	\label{def-dataflow-events-concrete}
	\begin{align*}
	\mathcal{E} ::=\ 
	&\mathtt{dfCall}~\mathit{Mem~Reg~c_{caller}~c_{callee}.proc(v)}\\
	\mid\ 
	&\mathtt{dfRet}~\mathit{Mem~Reg~c_{prev}~c_{next}~v}\\
	\mid\ 
	&\mathtt{Const}~\mathit{Mem~Reg~c_{cur}~v~r_{dest}}\\
	\mid\ 
	&\mathtt{Mov}~\mathit{Mem~Reg~c_{cur}~r_{src}~r_{dest}}\\
	\mid\ 
	&\mathtt{BinOp}~\mathit{Mem~Reg~c_{cur}~op~r_{src1}~r_{src2}~r_{dest}}\\
	\mid\ 
	&\mathtt{Load}~\mathit{Mem~Reg~c_{cur}~r_{addr}~r_{dest}}\\
	\mid\ 
	&\mathtt{Store}~\mathit{Mem~Reg~c_{cur}~r_{addr}~r_{src}}\\
	\mid\ 
	&\mathtt{Alloc}~\mathit{Mem~Reg~c_{cur}~r_{ptr}~r_{size}}
	\end{align*}
\end{definition}

\begin{example}\label{example-bookkeeping-explicit}
Consider the following data-flow trace, which expands a part
of \Cref{example-sharing-history-dependence}'s interaction trace---the part that
covers the call and return to \trg{Net.init\_network()} only.  Here, $l$ is a fixed,
hardcodable location that can always be accessed by \trg{Net},
$r_\mi{COM}$ is a special register used to pass arguments and return
values, and $\mi{Mem_1}$ and $\mi{Reg_1}$ are some initial states of
memory and registers, respectively.
\begin{align*}
       &\mathtt{dfCall}~\mathit{Mem_1~(Reg_1[r_\mi{COM} \mapsto l_\mathtt{iobuffer}])}\\
       	&\ \ \ \ \ \ \ \ \  ~\mathit{c_\mathtt{Main}~c_\mathtt{Net}.\mathtt{init\_network}(l_\mathtt{iobuffer})}\\
::\    &\mathtt{Const}~\mathit{Mem_1~(Reg_1[r_\mi{COM} \mapsto l_\mathtt{iobuffer}, r_1 \mapsto l])~c_\mathtt{Net}~l~r_1}\\
::\
&\mathtt{Store}~\mathit{(Mem_1[l\mapsto l_\mathtt{iobuffer}])}~\\
&\ \ \ \ \ \ \ \ \ \mathit{(Reg_1[r_\mi{COM} \mapsto l_\mathtt{iobuffer}, r_1 \mapsto l])~c_\mathtt{Net}~r_1~r_\mi{COM}}\\
::\    
&\mathtt{dfRet}~\mathit{(Mem_1[l\mapsto l_\mathtt{iobuffer}])}~\\
&\ \ \ \ \ \ \ \ \ \mathit{(Reg_1[r_\mi{COM} \mapsto l_\mathtt{iobuffer}, r_1 \mapsto 0])~c_\mathtt{Net}~c_\mathtt{Main}~l_\mathtt{iobuffer}}
\end{align*}
\end{example}
This data-flow trace shows clearly how \trg{Net.init\_network} stashed away $l_\mathtt{iobuffer}$:
\remove{It copied $l_2$ to the register $r_2$ and then to the
location $l$. }
\add{It copied $l_\mathtt{iobuffer}$ to its
private memory location $l$.}
The rest of the data-flow trace (not shown) will also show precisely how \remove{$c_2.g()$}\add{\trg{Net.receive()}} later retrieved \remove{$l_2$}\add{$l_\mathtt{iobuffer}$}.
It is not difficult to
construct a source program that mimics these data flows step-by-step,
by using source memory locations to mimic the target's register file
and memory (see \autoref{sec:back-trans} for further details).
\add{The step-by-step mimicking induces a 
step-for-step inductive invariant that we found
much simpler to prove than the coarse-grained invariants
from prior work~\cite{capableptrs} in which
the back-translation input 
was just the non-informative trace of
\Cref{example-sharing-history-dependence}, and
the lost
target steps were compensated using the full graph traversal.
}

\paragraph{Outline of data-flow back-translation proof}

Data-flow traces simplify the proof of back-translation
(\Cref{lemma-abateetal-backtrans}) by splitting it into two key lemmas:
Enriching interaction traces to data-flow traces
(\Cref{lemma-enrichment}) and back-translation of data-flow traces
(\Cref{lemma-robustptrs-backtrans}), both of which are shown below and
are much easier to prove than standard trace-directed
backtranslation. Recall that $T$ denotes a data-flow trace.
$\mathtt{remove\_df}(T)$ denotes the interaction trace obtained by
removing all internal data-flow events from $T$, i.e., by retaining
only $\mathtt{Call}$ and $\mathtt{Return}$ events.


\begin{lemma}[Enrichment]
\label{lemma-enrichment}
\begin{align*}
&\forall \trg{P}\ t.\ 
\trg{P\ \rightsquigarrow^{*}\ } t \implies 
\exists T.\ \trg{P\ \emitsdfT\ } T\ \wedge\ 
t = \mathtt{remove\_df}(T)
\end{align*}
\end{lemma}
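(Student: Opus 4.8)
The plan is to build $T$ by re-running $\trg{P}$ under the data-flow-instrumented semantics along \emph{exactly the same} sequence of machine steps that produced $t$, and then to observe that the richer data-flow labels collapse to $t$ under $\mathtt{remove\_df}$. The key structural fact to exploit is that the data-flow reduction $\emitsdfT$ and the interaction reduction $\rightsquigarrow$ are built on the same underlying small-step transition relation on target configurations; they differ only in how individual steps are \emph{labelled}. Concretely: every step that $\rightsquigarrow$ labels with a $\mathtt{Call}$ or $\mathtt{Ret}$ event (\Cref{def-mem-sharing-trace-event}) is, under $\emitsdfT$, labelled with the corresponding $\mathtt{dfCall}$/$\mathtt{dfRet}$ event of \Cref{def-dataflow-events-concrete} carrying, in addition, the full memory and register file; every step executing one of the data-flow-relevant instructions ($\mathtt{Const}$, $\mathtt{Mov}$, $\mathtt{BinOp}$, $\mathtt{Load}$, $\mathtt{Store}$, $\mathtt{Alloc}$) is silent for $\rightsquigarrow$ but carries an internal event for $\emitsdfT$; and every remaining step (jumps, other internal control flow) is silent in both. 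Since $\mathtt{remove\_df}$ keeps only $\mathtt{dfCall}$/$\mathtt{dfRet}$ events and discards the internal ones, each data-flow step projects onto precisely the interaction label (possibly the empty word) that the corresponding standard step carried.

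\textbf{Steps.} First I would prove a single-step matching lemma: for any reachable configuration $s$ and any machine step $s \to s'$, there is a corresponding data-flow step from $s$ to the same $s'$ whose emitted event $\mathcal{E}$ (if any) satisfies that $\mathtt{remove\_df}(\mathcal{E})$ equals the interaction event emitted by $s \to s'$ (the empty word when that step is internal). This is a routine but exhaustive case analysis on the instruction executed at $s$; the only content is checking, instruction by instruction, that the state components recorded in each data-flow event agree with the actual pre-state (for $\mathtt{Store}$) or post-state (for the others), and that, for $\mathtt{dfCall}$/$\mathtt{dfRet}$, the projection of the recorded full memory onto the currently shared locations coincides with the shared memory that $\rightsquigarrow$ records in the matching $\mathtt{Call}$/$\mathtt{Ret}$. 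The latter needs a small auxiliary invariant carried along both runs: the set of ``locations shared so far'' is a function of the sequence of call/return arguments together with reachability, and both semantics compute it the same way, so the two runs keep this set in lock-step. Second, I would lift the single-step lemma to whole executions by induction on the length of the run witnessing $\trg{P}\ \rightsquigarrow^{*}\ t$: the induction step appends one matched data-flow step, extending $T$ by $\mathcal{E}$ and $t$ by $\mathtt{remove\_df}(\mathcal{E})$, and the homomorphism $\mathtt{remove\_df}(T_0 \cdot \mathcal{E}) = \mathtt{remove\_df}(T_0)\cdot \mathtt{remove\_df}(\mathcal{E})$ preserves the invariant $t = \mathtt{remove\_df}(T)$; the base case (empty trace at $\trg{P}$'s initial configuration) is immediate.

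\textbf{Main obstacle.} I expect the single-step matching lemma to be the only real work, and within it the delicate points are (i) establishing that $\emitsdfT$ is total wherever $\rightsquigarrow$ can step, so that no machine step gets stuck in the data-flow semantics, and (ii) the agreement between the shared-memory slice stored on $\mathtt{Call}$/$\mathtt{Ret}$ events and the full memory stored on $\mathtt{dfCall}$/$\mathtt{dfRet}$ events, i.e.\ getting the ``what has been shared so far'' bookkeeping to line up between the two instrumentations. Everything else --- the outer induction, the homomorphism property of $\mathtt{remove\_df}$, and the base case --- is shallow bookkeeping. In the mechanization this reduces to defining the two instrumented semantics so that they literally share the bare transition relation and a single ``record the current state'' helper, after which the lemma is a long but flat case split over the target's instruction set.
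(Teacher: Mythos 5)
Your proposal is correct as a proof strategy, but it does considerably more work than the paper, which arranges the definitions so that the lemma is \emph{immediate}: in \targetlanguage, the data-flow semantics $\emitsdfT$ \emph{is} the primitive operational semantics (each reduction rule, such as the \textsc{Store} rule shown in \autoref{sec:compiler}, directly emits a data-flow event $\alpha$), and the interaction-trace semantics $\trg{\rightsquigarrow}$ is obtained from it by erasing the internal events --- exactly the $\mathtt{remove\_df}$ projection. With that setup there is only one underlying run, not two instrumentations to be matched, so existence of $T$ with $t = \mathtt{remove\_df}(T)$ holds by construction and the paper's one-line proof (``immediate from the definition of the target-language semantics'') is honest. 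Your single-step matching lemma, the totality concern (i), and the outer induction are the careful version of what this definitional choice makes unnecessary; they would be needed if the two semantics were defined independently, so your argument is more general but buys nothing here. Your concern (ii) about lining up the shared-memory slice on $\mathtt{Call}$/$\mathtt{Ret}$ with the full memory on $\mathtt{dfCall}$/$\mathtt{dfRet}$ is likewise dissolved in the actual formalization: the events carry the full memory and the projection onto shared locations is performed by the trace relation $\sim$, not by the semantics, so no ``shared so far'' bookkeeping invariant needs to be threaded through the enrichment proof. If you were mechanizing this, the right move is the one you gesture at in your last sentence --- make the two semantics literally share the transition relation --- at which point the rest of your proof evaporates.
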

\begin{proof}
Immediate from the definition of the target-language semantics.
\end{proof}

\begin{lemma}[Data-flow back-translation]
\label{lemma-robustptrs-backtrans}
\begin{align*}
&\forall \trg{P}\ T.\ 
\trg{P\ \emitsdfT\ }T \implies
\exists \src{P}\ t.\ 
\src{P \rightsquigarrow^{*}\ }t\ \wedge\ 
t \sim \mathtt{remove\_df}(T)
\end{align*}
\end{lemma}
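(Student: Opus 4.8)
The plan is to prove \Cref{lemma-robustptrs-backtrans} by exhibiting an explicit back-translation function $\mathtt{backtr}(T)$ that turns a data-flow trace into a whole source program which \emph{replays} $T$ event by event. The point of working from a data-flow trace rather than an interaction trace is that $T$ records every operation that moves data around --- constant loads, register moves, binary operations, loads, stores, and allocations --- so the back-translated program never has to \emph{reconstruct} how the target context produced a value; it only has to imitate the recorded sequence of moves. Concretely, $\mathtt{backtr}(T)$ keeps, per component, a block of static source locations simulating the target register file and a family of source locations simulating that component's target memory, plus a single global event counter stored in memory. Each component's procedure body is a dispatch on that counter: when the component receives control (via a source call or return) it reads the counter, jumps to the straight-line code generated for the maximal run of events attributed to it ($c_\mathit{cur}$) starting at that position, executes those events on the simulated register/memory locations (e.g.\ a $\mathtt{Store}$ becomes a source write to the simulated location, a $\mathtt{Load}$ a source read, an $\mathtt{Alloc}$ a source allocation, a $\mathtt{Const}$ of a hardcodable location a reference to the corresponding source static buffer), incrementing the counter as it goes, and stops when it reaches a $\mathtt{dfCall}$ or $\mathtt{dfRet}$ event, whereupon it performs the matching source-level call or return, passing the simulated argument through $r_\mi{COM}$. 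Well-bracketing of the $\mathtt{dfCall}$/$\mathtt{dfRet}$ events in $T$ --- guaranteed because $T$ arises from a genuine target run $\trg{P} \emitsdfT T$ --- is what makes this nested call/return discipline realizable in the structured source language. Structurally this follows \citet{Abate:2018:GCG:3243734.3243745}, but is simpler because the data-flow events make the context's internal behaviour explicit.

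For correctness I would run a forward simulation between the data-flow reduction of $\trg{P}$ producing $T$ and the source reduction of $\mathtt{backtr}(T)$. The simulation invariant carries a partial bijective renaming $\sigma$ between target and source locations and asserts: (i) $\sigma$ renames the relevant portion of the target memory appearing in the current event to the current source memory, and renames the target register values to the contents of the simulated-register locations of the currently executing component; (ii) the source control state is exactly the dispatch point for the still-unconsumed suffix of $T$, and the event counter holds the corresponding index; and (iii) $\sigma$ is consistent with the renaming that $\sim$ will demand on the emitted interaction trace. One then shows, by induction on $T$ (equivalently on the reduction sequence), that every internal data-flow event is matched by a bounded block of source steps preserving (i)--(iii) --- with $\mathtt{Alloc}$ extending $\sigma$ by one fresh pair --- and that every $\mathtt{dfCall}$/$\mathtt{dfRet}$ is matched by a source call/return emitting precisely the $\sigma$-renaming of the corresponding $\mathtt{Call}$/$\mathtt{Ret}$ event. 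Collecting the emitted source events yields a trace $t$ with $t \sim \mathtt{remove\_df}(T)$.

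The main obstacle, I expect, is getting the treatment of unforgeable pointers and the bijection $\sigma$ right at component boundaries and across allocation. Because source pointers cannot be forged, the back-translated code must be able to \emph{name} every location it touches: hardcodable locations are fine, and dynamically allocated ones are handled via the $\mathtt{Alloc}$ events, but one must prove that whenever an event mentions a location in some register or memory cell, $\sigma$ is already defined there. This forces us to formalize and then exploit a well-formedness property of data-flow traces produced by real executions (no dangling references, allocations fresh, $r_\mi{COM}$ set at every call and return, the $\mathit{Mem}$/$\mathit{Reg}$ components of consecutive events consistent with the single recorded step between them). A second, more bureaucratic difficulty is reconciling the flat, linear structure of $T$ with the structured control flow of the source: proving that the counter-dispatch always lands in exactly the block generated for the current suffix of $T$, and that this is compatible with well-bracketed calls and returns, needs a careful definition of $\mathtt{backtr}$ and an invariant tying the counter to a position in $T$. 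Finally, since source allocation need not return the same addresses as the target, $\sigma$ genuinely has to be threaded through the whole argument rather than assumed to be the identity --- but this is precisely the bookkeeping that data-flow traces were designed to make lightweight, so once the invariant is stated precisely I expect each event case to be routine.
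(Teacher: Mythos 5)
Your proposal matches the paper's proof essentially point for point: an explicit back-translation that replays each data-flow event as a small source expression operating on simulated registers and mimicked memory stored via the static buffer, a per-component event counter with switch-style dispatch inside a (tail-recursive) loop, administrative steps for the $r_\mi{COM}$ calling convention at calls and returns, and a lock-step simulation invariant (the paper's \texttt{mimicking\_state}, \Cref{lemma-definability}) relating source memory to the $\mi{Mem}$/$\mi{Reg}$ recorded in each event up to a block renaming (the paper fixes your $\sigma$ to the concrete increment-by-1 shift caused by reserving block $0$ for metadata). The only differences are cosmetic implementation choices (a single global counter versus per-component counters; straight-line maximal runs versus one event per loop iteration), so this is the same approach.
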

\begin{proof}[Proof sketch]
By constructing a \src{P} that simulates the data flows in $T$, thus
keeping its state in lock-step with the state in $T$'s
events. See \autoref{sec:back-trans} for further details.
\end{proof}
Composing these two lemmas  yields
\Cref{lemma-abateetal-backtrans}.

\if 0

The key shortcoming of conventional trace-directed back-translation is
that the observable events of fully-abstract traces conceal some
crucial information about memory reachability---information available
in the syntax of the context being back-translated.  To understand the
disadvantage of this concealment, consider \emph{safe source and
target languages} in which casting \emph{arbitrary} integers to
pointers is prohibited.  To back-translate a run-time pointer value
$l$ to a source language expression that evaluates to $l$,
it is not enough to
know the
value $l$ because we cannot
represent it syntactically:
we also need
a path in memory to a location where $l$ is stored. Because 
observable traces do not reveal
such paths,
we argue that back-translating such traces is an 
under-constrained problem.

%

\subsubsection*{Observable events conceal reachability
information}
Observable events abstract over not only the control flow,
but also the \emph{data flows} of the target context.
This second abstraction introduces ambiguity
in the back-translation and is thus undesirable.
\Cref{example-two-ways-to-mimic,example-sharing-history-dependence}
show how traditional trace-directed back-translation
is under-constrained in the presence of memory sharing.
\rb{Some possibilities: the first half of this paragraph could be
led by a paragraph heading to save some space and streamline the
foregoing discussion. The second half can simply lead into the first example.}
\rb{More generally, I see these as two parts of a running example. It would
be good if we could show the code that produces the traces, though I think
we could live without it}
\aek{We do show code (\Cref{example-back-trans-f-path}), 
	but we show it after we have made the point that
the trace does not uniquely define the code.}

\dg{This example seems useless. OTOH, Example \cref{example-sharing-history-dependence} seems useful.}
\begin{example}[Two possible ways to mimic the pointer value $l_{42}$]
\label{example-two-ways-to-mimic}
Suppose we are given the following trace:
\begin{align*}
&\mathtt{Call}~\mathit{
	Mem~c_1~c_2.
	f(l_1)}\
::\ \ \mathtt{Ret}~\mathit{
	Mem'~c_2~c_1~
	l_{42}}
\end{align*}
where 
\begin{align*}
\mi{Mem} = \mi{Mem'} = [&l_1 \mapsto 100,\ l_1 + 1 \mapsto l_2,\ l_1 + 2 \mapsto l_2 + 1,\\
&l_2 \mapsto 100,\ l_2 + 1 \mapsto l_{42},\ l_2 + 2
\mapsto l_7, \ldots]
\end{align*}
\end{example}
\begin{figure}
	\centering
	\includegraphics[width=0.48\textwidth]{Mem-2-paths}
	\caption{Contents of
		$\mi{Mem} = \mi{Mem'}$. Arrows
		represent pointers to memory locations.}
	\label{fig:2-paths}
\end{figure}
We keep the definition of pointer arithmetic (\EG $l_1 + 2$)
abstract for now. \Cref{fig:2-paths} depicts 
shared memories $\mi{Mem}$ and $\mi{Mem}'$.


The first event is emitted when component $c_1$ calls into component $c_2$
and passes it a pointer argument $l_1$, which gives the callee
access to other pointer values stored in $\mi{Mem}$, including
$l_{42}$.
Note that because this is the first event, all locations in $\mi{Mem}$
must have been allocated by $c_1$, and $c_2$ does not own any direct
pointers to them.
The second event corresponds to $c_2$ returning the value $l_{42}$.
To obtain this value, it must have computed it by finding
a path to it in the shared memory $\mi{Mem}$, using
the $l_1$ pointer that was passed to it.

However,
as \Cref{fig:2-paths} shows, there are two possible paths from $l_1$
to $l_{42}$ in $\mi{Mem}$:
Either access $l_1 + 1$ and increment the \emph{blue} pointer
to reach location $l_2 + 1$ (which contains the target value),
or instead reach the same location $l_2 + 1$ by accessing the
\emph{red} pointer in $l_1 + 2$.
Thus, from each one of these two paths we can construct a source
function which,
when called with argument $l_1$ on a state with shared memory 
$\mi{Mem}$, will return $l_{42}$. Here is the function
corresponding to the ``blue path'' from $l_1 + 1$ (note the array
access \lstinline|arg[1]|):
%
\begin{example}[Back-translation of the function $f$ from
the trace prefix of \Cref{example-two-ways-to-mimic} to
a C-like source language]\ \\
\label{example-back-trans-f-path}
\begin{lstlisting}[mathescape]
int* f(int* l1) {
   int* l2 = (int*) *(l1 + 1);
   int* l42 = (int*) *(l2 + 1);
   return l42;
}
\end{lstlisting}
\end{example}


To find the function \lstinline|f|,
traditional
trace-directed back-translation
relies on 
proving invariants \cite{capable-ptrs-tech-report}
about the \emph{existence of a path} in
memory that leads to the desired pointer value
$l_{42}$.
Those invariants are straightforward, but 
constructing a function \lstinline|f| based on them is 
tricky due not only to the non-uniqueness
of the path, 
but also because each back-translated component needs
to keep track of
{\em all} the pointers that have been ever
shared with it, directly or transitively. \rb{``indirectly''?}\aek{transitively is clearer, no?}
%
\Cref{example-sharing-history-dependence} illustrates
the need for this bookkeeping. It shows an extension of the trace
from \Cref{example-two-ways-to-mimic} where $c_1$ clears some pointers
from shared memory 
and calls into $c_2$ again. At this point $c_2$
can recall pointers even if they
no longer appear in shared memory.
\rb{Have the examples talked about what memory is shared? Could simplify this?}
\aek{All of the memory snapshots that appear in the 
examples are just the shared subsets of the memory.
(I made this point clearer now by adding the adjective
``shared'' at more places.)
The meaning of $\mi{Mem}$ in the event alphabet is different from our Coq 
development. I'm simplifying things throughout the paper
by saying that the memory on the trace is just the shared
memory.
(In our Coq development, 
the trace relation projects on the shared memory.)}

\begin{example}[Function $f$ needs to do bookkeeping
	to enable $g$ to get hold of $l_7$]
	\label{example-sharing-history-dependence}
	This trace extends
	that of \Cref{example-two-ways-to-mimic}:
	\begin{align*}
	&\mathtt{Call}~\mathit{
	Mem~c_1~c_2.
	f(l_1)}\
	::\ \ \mathtt{Ret}~\mathit{
	Mem'~c_2~c_1~
	l_{42}}\\
::\ &\underline{\mathtt{Call}~\mathit{
		Mem''~c_1~c_2.
		g()}}\ 
	::\ \ \underline{\mathtt{Ret}~\mathit{
		Mem'''~c_2~c_1~
		l_7}}
	\end{align*}
	where $\mi{Mem}$ and $\mi{Mem'}$ are exactly as in
	\Cref{example-two-ways-to-mimic}, but $\mi{Mem''}$
	shows that some pointer values have been overwritten:
	\begin{align*}
\mi{Mem''} = [&l_1 \mapsto 100,\ l_1 + 1 \mapsto 100,\ l_1 + 2 \mapsto 100,\\
&l_2 \mapsto 100,\ l_2 + 1 \mapsto 100,\ l_2 + 2
\mapsto 100, \ldots]
	\end{align*}
and $\mi{Mem'''} = \mi{Mem''}$.
\end{example}
The second call, to function $c_2.g()$, returns
value $l_7$ in the last event, but this
does not appear in the shared 
\rb{or just remove ``shared'' for greater impact}
\aek{No, ``shared'' is needed. The value $l_7$ must
	have appeared
in private memory. Otherwise, it couldn't have been
returned.}
 memory $\mi{Mem''}$ at all! 
However, if we go back in time to the first event,
we see that there was a path from the
argument of the call $c_2.f(l_1)$ to the value $l_7$
in the shared memory $\mi{Mem}$ of that
event (\EG by accessing 
$l_1 + 2$ and incrementing the red pointer
in order to access location $l_2 + 2$, which holds the value $l_7$).
Thus, $f$ could have traversed \mi{Mem}
and stashed away a copy of the pointer to $l_7$, which $g$ could recover and return later.
\rb{We're at the moment a little inconsistent when talking about a value, a pointer to
a value\ldots make sure we're clear throughout}

Prior work~\cite{capable-ptrs-tech-report} describes a
bookkeeping scheme that requires each back-translated
 function to stash a full
snapshot of the shared memory every time it is called.
This scheme relies heavily on variable names,
which is inconvenient in a mechanized proof.
Due to this difficulty, we get rid
of the need for a general bookkeeping scheme altogether,
and tackle
the underlying limitation 
of the traditional trace-based back-translation, namely,
that the input to the problem is under-constrained.
\rb{TODO: Try rephrasing this a bit. Details about the previous
technique feel a little out of the blue here.}

\subsubsection*{Making the input to back-translation
more informative}

\rb{This start is still a little unclear. It would be helpful to avoid one-off
terms like non-informative traces, and use e.g. Def. 2.2. Even that name isn't
too clear\ldots how about ``control-flow events''?}
\aek{Fixed this a bit. 
``control-flow events'' can give the wrong impression
that \emph{internal} control flows are recorded.}
Although the observation model dictates the trace alphabet
of \Cref{def-mem-sharing-trace-event}, it turns out
we are able to sidestep the ambiguity associated with 
the back-translation of such traces
by reducing this problem
to a new
back-translation problem
\emph{with a more informative input},
namely, a trace that carries more fine-grained
observations, i.e., not just cross-component calls and
returns. The fine-grained observations capture every change to the register file
or memory---including most crucially changes to
the back-translated component's private memory, not just the shared memory. 
We call these observations \emph{data-flow events}. 

\begin{definition}[Data-flow events]
	\label{def-dataflow-events-abstract}
	\begin{align*}
	\mathcal{E} ::=\ 
	&\mathtt{dfCall}~\mathit{Mem~\underline{Reg}~c_{caller}~c_{callee}.f(v)}\\
	\mid\ 
	&\mathtt{dfRet}~\mathit{Mem~\underline{Reg}~c_{prev}~c_{next}~v}\\
	\mid\ 
	&\mathtt{FlowStepUnary}~\mathit{op~Mem~Reg~c_{cur}~src~dst}\\
	\mid\ 
	&\mathtt{FlowStepBinary}~\mathit{op~Mem~Reg~c_{cur}~src_1~src_2~dst}
	\end{align*}
\end{definition}
Assuming a machine model with memory and a register file,
all data-flow events record a snapshot $\mi{Mem}$
of the \emph{whole memory} (as opposed to the shared subset of the memory in
\Cref{def-mem-sharing-trace-event}) as well as a snapshot
$\mi{Reg}$ of the register file. 
($\mi{Reg}$ is a map from register names
$r$ to values $v$). 
In addition
to cross-component calls and returns, there are
also $\mathtt{FlowStep}$s\footnote{In \Cref{def-dataflow-events-abstract}, we classify flow steps
	into two kinds, a unary and a binary flow step, which we find 
	will cover all the events in our case study later in the paper,
	but there can be more kinds in principle.}, which describe the operation, the
sources and destination of each data-flow event
happening in the target execution.
\rb{``data-flow steps'', for consistency? but the last sentence is rather self-referential}
\aek{Made a bit clearer by changing the font: $\mathtt{FlowStep}$s.}

A data-flow trace captures the details of
any bookkeeping
functionality implemented by a given target context.
In \Cref{example-sharing-history-dependence}, we know that 
$f$ must have done some bookkeeping to stash in 
$c_2$'s private memory the
pointer value $l_7$.
 Whereas this knowledge was only 
implicit in the observable events, with data-flow events,
the bookkeeping steps are explicit: 
\begin{example}[A possible sub-trace of the data-flow events
emitted by the target function $f$ revealing the stashing
of the value $l_7$ in a $c_2$-global variable at address $l_\mi{global}$]
\label{example-bookkeeping-explicit}
\begin{align*}
    &\mathtt{FlowStepUnary}~\mathit{\mathtt{mov}~Mem~Reg~c_{2}~r_\mi{arg}~r_\mi{tmp}}\\
::\ &\mathtt{FlowStepBinary}~\mathit{\mathtt{add}~Mem~Reg'~c_{2}~r_\mi{tmp}~1~r_\mi{tmp}}\\
::\     &\mathtt{FlowStepUnary}~\mathit{\mathtt{load}~Mem~Reg''~c_{2}~r_\mi{tmp}~r_\mi{tmp}}\\
::\     &\mathtt{FlowStepBinary}~\mathit{\mathtt{add}~Mem~Reg'''~c_{2}~r_\mi{tmp}~2~r_\mi{tmp}}\\
::\     &\mathtt{FlowStepUnary}~\mathit{\mathtt{load}~Mem~Reg''''~c_{2}~r_\mi{tmp}~r_\mi{tmp}}\\
::\     &\mathtt{FlowStepUnary}~\mathit{\mathtt{store}~Mem'~Reg''''~c_{2}~r_\mi{tmp}~r_\mi{global}}
\end{align*}
where $\mi{Mem}$ is the same as in \Cref{example-two-ways-to-mimic},
\begin{align*}
&\mi{Reg'} = \mi{Reg}[r_\mi{tmp} \mapsto l_1 + 1],\ 
\ \ \ \mi{Reg''} = \mi{Reg'}[r_\mi{tmp} \mapsto l_2],\\
&\mi{Reg'''} = \mi{Reg''}[r_\mi{tmp} \mapsto l_2 + 2],\ 
\mi{Reg''''} = \mi{Reg'''}[r_\mi{tmp} \mapsto l_7], \mi{and}
\end{align*}
$\mi{Mem'} = \mi{Mem}[l_\mi{global} \mapsto l_7]$
\end{example}
These flow steps can be easily mimicked one-to-one as 
statements in the C-like syntax of the source language:
\begin{example}[Result of the data-flow-directed 
back-translation of the events in \Cref{example-bookkeeping-explicit}]
\label{example-backtrans-bookkeeping-dataflow}
~\\
\begin{lstlisting}[mathescape]
int* f(int* arg) {
  // stash the value l7 in global_x
  int* tmp = arg;     // mov event
  tmp = tmp + 1;      // add event
  tmp = (int*) *tmp;  // load event
  tmp = tmp + 2;      // add event
  tmp = (int*) *tmp;  // load event
  global_x = tmp;     // store event
  // mimic the return event
  ....
}
\end{lstlisting}
\end{example}

Now there is no need for a bookkeeping scheme of for
cumulative
invariants anymore. To prove back-translation of data-flow traces correct, we need
just conventional step-wise invariants describing a lock-step
simulation between the memory/register file appearing on a 
data-flow event
and the memory resulting from the corresponding 
expression of the mimicking source program.
See \autoref{sec:back-trans} for more details about
the back-translation function.

\fi

\subsection{Turn-Taking Simulation for Recomposition}
\label{sec:key-ideas-turn-taking}


Next, we turn to recomposition (\Cref{lemma-abateetal-recomb}). This
lemma states that if two programs \trg{P_1 \cup C_1} and \trg{P_2 \cup
C_2} produce two related interaction traces, then the
program \trg{P_1 \cup C_2} can also produce an interaction trace
related to both those traces. We refer
to \trg{P_1 \cup C_1} and \trg{P_2 \cup C_2} as \emph{base} programs,
and to \trg{P_1 \cup C_2} as the \emph{recomposed} program.  We say
that the partial programs \trg{P_1} and \trg{C_2} are
\emph{retained} by the recomposition, 
and that \trg{P_2} and \trg{C_1} are \emph{discarded}.
Traces in this section refer to the interaction traces
of \Cref{def-mem-sharing-trace-event}. Data-flow traces are used only
for back-translation, not for recomposition.

The proof of recomposition is a ternary simulation over executions of
the three programs. For this, we need a ternary relation between a
pair of states \trg{s_1} and \trg{s_2} of the base programs and a
state \trg{s_{1,2}} of the recomposed program. The question is how we
can relate the \emph{memories} in \trg{s_1} and \trg{s_2} to that
in \trg{s_{1,2}}.

In the absence of memory sharing, as
in~\citet{Abate:2018:GCG:3243734.3243745}, this is
straightfoward: We simply project $\trg{P_1}$'s memory from
$\trg{s_1}$, $\trg{C_2}$'s memory from $\trg{s_2}$, put them together
(take a disjoint union), and this yields the memory of $\trg{s_{1,2}}$:

\begin{definition}[Memory relation of~\citet{Abate:2018:GCG:3243734.3243745}]\footnote{\add{See \Cref{sec:compiler} for the precise definition
of $\mathtt{proj}$.}}
	\label{def-mem-rel-abateetal}
	\begin{align*}
	&\mathtt{mem\_rel}
	(\trg{s_1}, \trg{s_2}, \trg{s_{1,2}}) \defeq\\
	&\trg{s_{1,2}.\mi{Mem}} =
	(\mathtt{proj}_{\trg{P_1}} (\trg{s_1.\mi{Mem}})
	\ \uplus\ 
	\mathtt{proj}_{\trg{C_2}} (\trg{s_2.\mi{Mem}}))
	\end{align*}
\end{definition}


However, with memory sharing, this definition no longer works, as illustrated by the following example. 

\begin{example}
\label{example-c-changes-shared-memory-internally}
Consider the following three target-language components $\trg{C_1}$,
$\trg{C_2}$ and $\trg{P_1}$, represented in C-like syntax for
simplicity. The fourth component $\trg{P_2}$ is irrelevant for this
explanation, hence not shown.
\begin{lstlisting}[mathescape]
component $\trg{C_1}$ {
  int* ptr_to_P1 = malloc();
  void store(int* arg) {
    ptr_to_P1 = arg;
    int val_to_revert = *ptr_to_P1;
    *ptr_to_P1 = 42;
    ...
    *ptr_to_P1 = val_to_revert;
  }
}
\end{lstlisting}
\begin{lstlisting}[mathescape]
component $\trg{C_2}$ {
  int* ptr_to_P1_or_P2 = malloc();
  void store(int* arg) {
    ptr_to_P1_or_P2 = arg;
  }
}
\end{lstlisting}
\begin{lstlisting}[mathescape]
component $\trg{P_1}$ { 
  int* priv_ptr = malloc();
  int* shared_ptr = malloc();
  void call_store() {
    store(shared_ptr);
  }
}
\end{lstlisting}
\end{example}
In the base program $\trg{P_1 \cup C_1}$, $\trg{P_1}$
shares \lstinline{shared_ptr} with the
function \trg{C_1}.\lstinline{store}(). This
function \emph{temporarily} updates \lstinline{shared_ptr} but reverts
it to its original value before returning. Somewhat differently, in
the recomposed program $\trg{P_1 \cup C_2}$,
$\trg{C_2}$.\lstinline{store}() does \emph{not}
modify \lstinline{shared_ptr} at all. Thus, even though the
end-to-end interaction behavior of \lstinline{store}() in both
the programs is exactly the same, \lstinline{shared_ptr} (which is
actually in $\trg{P_1}$'s memory) has been temporarily modified
in \trg{C_1}.\lstinline{store}() but not
in \trg{C_2}.\lstinline{store}(). Consequently, \emph{during} the
execution of the context's function \lstinline{store}(), the memory
relation of \Cref{def-mem-rel-abateetal} does not hold.


More abstractly, the problem here is that $\trg{P_1}$'s \emph{shared}
memory in the recomposed program $\trg{P_1 \cup C_2}$ can be related
to that in the base program $\trg{P_1 \cup C_1}$ \emph{only while}
control is in $\trg{P_1}$. When control is in $\trg{C_2}$, the
contents of $\trg{P_1}$'s shared memory can change unrelated to the
base runs. This naturally leads to the following program counter-aware
memory relation, where the relation $\sim_\mathtt{ren}$ captures
location renaming and is formally defined later in this section.


\begin{definition}[First attempt at our memory relation]
	\label{def-mem-rel-renaming-pc}
	\begin{align*}
	&\mathtt{mem\_rel\_pc}
	(\trg{s_1}, \trg{s_2}, \trg{s_{1,2}}) \defeq\\
	&\texttt{if}\ \trg{s_{1,2}}\ \texttt{is executing in}\ \trg{P_1}\ \texttt{then:}\\
	&\ \ \ \ \mathtt{proj}_{\trg{P_1}} (\trg{s_{1,2}.\mi{Mem}})
	\sim_\mathtt{ren}
	\mathtt{proj}_{\trg{P_1}} (\trg{s_1.\mi{Mem}})\\	
	&\texttt{else:}\ (\IE, \trg{s_{1,2}}\ \texttt{is executing in}\ \trg{C_2})\\
	&\ \ \ \ \mathtt{proj}_{\trg{C_2}} (\trg{s_{1,2}.\mi{Mem}})
	\sim_\mathtt{ren}
	\mathtt{proj}_{\trg{C_2}} (\trg{s_2.\mi{Mem}})
	\end{align*}
\end{definition}

Although this definition relates \emph{shared} memory correctly, it is
inadequate for $\trg{P_1}$'s \emph{private} memory---the memory
$\trg{P_1}$ has not shared with the context in the past, such as the
pointer \lstinline{priv_ptr} in
\Cref{example-c-changes-shared-memory-internally}. This private memory must
remain related in the base program $\trg{P_1 \cup C_1}$ and the
recomposed program $\trg{P_1 \cup C_2}$ independent of where the
execution is. However, \Cref{def-mem-rel-renaming-pc} does not say
this.

Accordingly, we revise our definition again. To determine which
locations have been shared and which are still private, we rely on the
interaction trace prefixes \trg{t_1}, \trg{t_2} and \trg{t_{1,2}} that are emitted
before reaching the states \trg{s_1},
\trg{s_2} and \trg{s_{1,2}}, respectively. For a memory $\mi{mem}$
and a trace $t$, we write $\mathtt{shared}(\mi{mem}, t)$ for the
projection of $\mi{mem}$ on addresses that are 
\add{transitively}
shared on the trace $t$
and $\mathtt{private}(\mi{mem}, t)$ for the projection of $\mi{mem}$
on all the other addresses.  With this, we can finally define a
\emph{turn-taking relation} $\mathtt{mem\_rel\_tt}$ that accurately describes
the memory \trg{s_{1,2}.\mi{Mem}} of the recomposed program in terms
of the memories \trg{s_1.\mi{Mem}} and
\trg{s_2.\mi{Mem}} of the two base programs:

\begin{definition}[Turn-Taking Memory Relation]
	\label{def-mem-rel-tt}
	\begin{align*}
	&\mathtt{mem\_rel\_tt}(\trg{s_{1,2}}, \trg{s_1}, \trg{s_2}, t_{1,2}, t_1, t_2) \defeq\\
	&\texttt{if}\ \trg{s_{1,2}}\ \texttt{is executing in}\ \trg{P_1}\ \texttt{then:}\\
	&\ \ \ \ \mathtt{mem\_rel\_exec}(\trg{P_1}, t_1, t_{1,2}, \trg{s_1.\mi{Mem}}, \trg{s_{1,2}.\mi{Mem}})\ 
	\wedge\ \\
	&\ \ \ \ \mathtt{mem\_rel\_not\_exec}(\trg{C_2}, t_2, t_{1,2}, \trg{s_2.\mi{Mem}}, \trg{s_{1,2}.\mi{Mem}})\\
	&\texttt{else:}\ (\IE, \trg{s_{1,2}}\ \texttt{is executing in}\ \trg{C_2})\\
	&\ \ \ \ \mathtt{mem\_rel\_exec}(\trg{C_2}, t_2, t_{1,2}, \trg{s_2.\mi{Mem}}, \trg{s_{1,2}.\mi{Mem}})\ 
	\wedge\ \\
	&\ \ \ \ \mathtt{mem\_rel\_not\_exec}(\trg{P_1}, t_1, t_{1,2}, \trg{s_1.\mi{Mem}}, \trg{s_{1,2}.\mi{Mem}})
	\end{align*}
	where
	\begin{align*}
	&\mathtt{mem\_rel\_exec}(\trg{part}, t, t_{1,2}, \trg{m_{base}}, \trg{m_{recomp}}) \defeq \\
	&\mathtt{proj}_{\trg{part}} (\trg{m_{recomp}})
	\sim_\mathtt{ren}
	\mathtt{proj}_{\trg{part}} (\trg{m_{base}})\ \wedge\\
	&\mathtt{shared}(\trg{m_{recomp}}, t_{1,2})
	\sim_\mathtt{ren}
	\mathtt{shared}(\trg{m_{base}}, t)
	\end{align*}
	and
	\begin{align*}
	&\mathtt{mem\_rel\_not\_exec}(\trg{part}, t, t_{1,2}, \trg{m_{base}}, \trg{m_{recomp}}) \defeq \\
	&\mathtt{proj}_{\trg{part}} (\trg{m_{recomp}})\ \cap\ 
	\mathtt{private}(\trg{m_{recomp}}, t_{1,2}) \\
	&\sim_\mathtt{ren}
	\mathtt{proj}_{\trg{part}} (\trg{m_{base}})\ \cap\ 
	\mathtt{private}(\trg{m_{base}}, t)
	\end{align*}
\end{definition}

Intuitively, \Cref{def-mem-rel-tt} says the following
about \trg{P_1}'s memory: (a) While $\trg{P_1}$ executes, \trg{P_1}'s
entire memory---both private and shared---is related in the runs of the
base program $\trg{P_1 \cup C_1}$ and the recomposed program
$\trg{P_1 \cup C_2}$. (b) While the contexts (\trg{C_1} and \trg{C_2})
execute, only the private memory of $\trg{P_1}$ in these two runs is
related. For the context's memory, the dual relation
holds. \Cref{fig-mem-rel-tt} depicts this visually.
\begin{figure}[t]
	\centering \includegraphics[width=.5\textwidth]{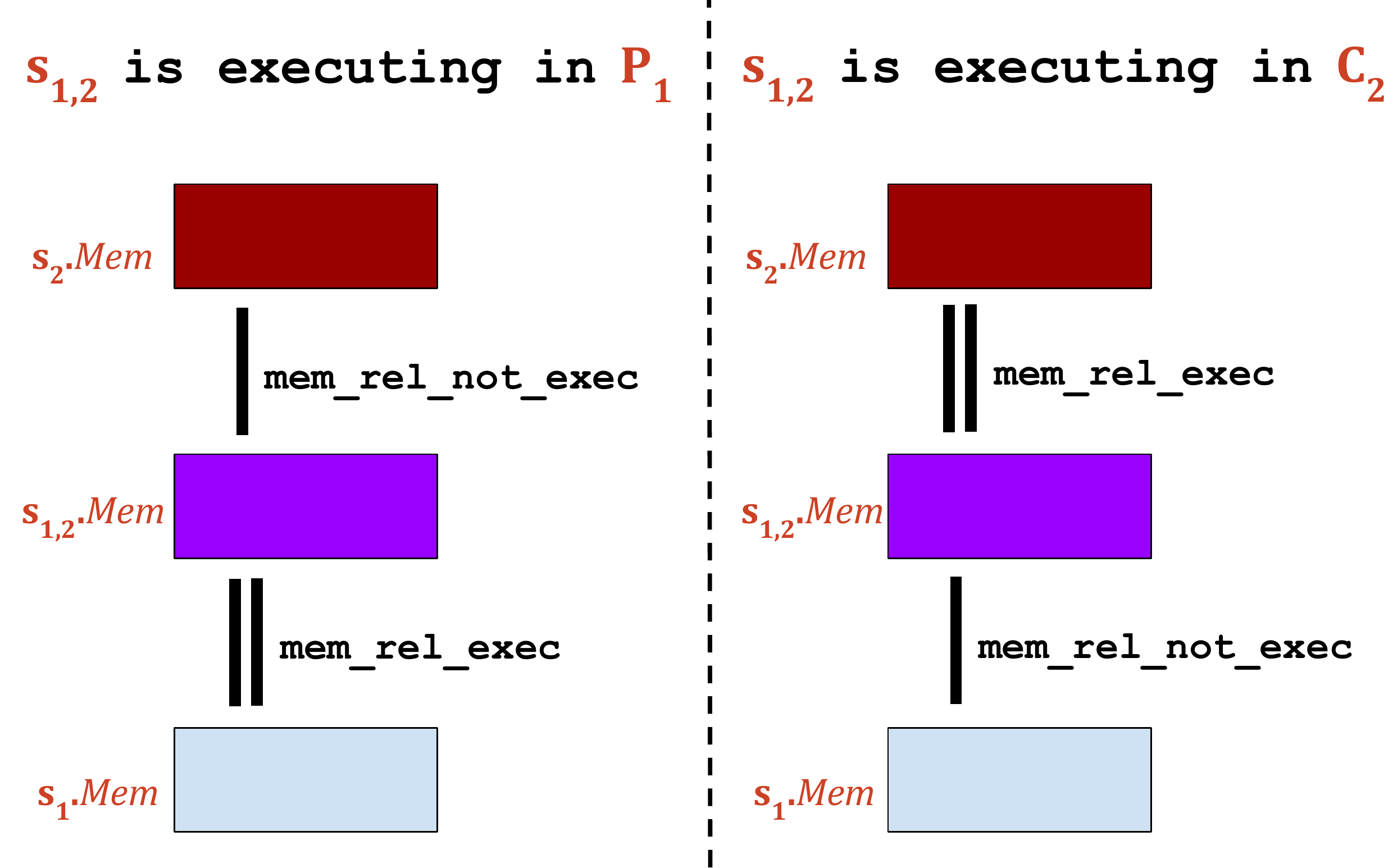} \caption{The
	turn-taking memory relation,
	$\mathtt{mem\_rel\_tt}$.}  \label{fig-mem-rel-tt}
\end{figure}



\paragraph*{The memory relation $\sim_\mathtt{ren}$.}
We now explain the memory relation $\sim_\mathtt{ren}$ that appears in the
above definitions. This relation simply allows for a consistent
renaming of memory locations up to a partial bijection. The need for
this renaming arises because corresponding program parts may differ in
the layouts of their private memories. In the example above, consider the
case where the component $\trg{P_2}$, which we didn't show until now, is
the same as $\trg{P_1}$, just without the private
pointer \lstinline{priv_ptr} and the corresponding malloc. In this
case, the exact value of \lstinline{shared_ptr} could differ across
the base run $\trg{P_2 \cup C_2}$ and the recomposed run
$\trg{P_1 \cup C_2}$.
Formally, $\mathtt{ren}$ denotes a partial bijection that may depend
on \trg{P_1}, \trg{P_2}, \trg{C_1} and \trg{C_2}, and
$\sim_\mathtt{ren}$ is renaming of memories (both locations and their
contents) up to $\mathtt{ren}$.

\paragraph{Proof of recomposition.}
In our Coq proof we\add{ effectively} 
 show  that the turn-taking memory relation
of \Cref{def-mem-rel-tt} is an invariant\add{ of
the execution of any recomposed program} \remove{(for the languages and
compiler of \autoref{sec:compiler})}
\add{of the target language of \autoref{sec:compiler}}. 
\add{Formally, this follows from two lemmas 
(\cref{lemma-option-sim,lemma-lockstep-sim})
that can be seen as \emph{expected properties of 
the memory relation}.}
Using these lemmas, we are able to prove
recomposition (\Cref{lemma-abateetal-recomb}). A key additional idea
we use is \emph{strengthening}, 
which we apply at cross-component calls and
returns to strengthen
$\mathtt{mem\_rel\_not\_exec}$ into 
$\mathtt{mem\_rel\_exec}$. 
The former relates only the private memory
of a component, the latter relates private and
shared memories of the same component. Strengthening follows from the assumption that the two base
runs emit related interaction traces. \autoref{sec:recomb} provides
additional details.
\ifsooner
\ch{Unclear it's a good idea to introduce strengthening here.
  Can't this move to Section 5 where we now discuss it?}
\fi

\subsection{Applying our ideas to an \rsp proof}
\label{sec:rsp-proof}

\begin{figure}
	\hspace*{-5mm}\makebox[\textwidth][l]{
		\begin{tikzpicture}[auto]
  \node(cPi) [text width=3cm] {$(\trg{C_t}\ \trg{\cup}\ \cmp{\src{P}})$ \\ \ \ $\trg{\doesprefix}\ t_1$ };
  \node[right = of cPi,xshift=-6em] (CP1t) { $(\cmp{\src{C_s}}\src{\cup}\ \cmp{\src{P'}})\ \trg{\doesprefix}\ t_2$ };
  \node[align = left, above = of cPi,yshift=-1.5em] (cPiT) { $(\trg{C_t}\ \trg{\cup}\ \cmp{\src{P}})\  \trg{\emitsdfT}\ T_1$ };
  \node[align = left, above = of cPiT,xshift=1em] (CP1s) { $\back{T_1\ }$ $= (\src{C_s}\ \src{\cup}\ \src{P'})\  \src{\doesprefix}\ t_\mi{backtr}$ };
  \node[right = of CP1t,xshift=-1.5em] (CPt) { $(\cmp{\src{C_s}}\src{\cup}\ \cmp{\src{P}})\  \trg{\doesprefix}\ t_{1,2}$ };
  \node[above = of CPt, right = of CP1s] (CPs) { $(\src{C_s}\ \src{\cup}\ \src{P})\ \src{\doesprefix}\ t_\mi{QED} 
  	$
  };

  \draw[->] (cPiT.90) to node
  [text width=2cm, xshift=-0.5em,yshift=-1.2em,align=right] {\hyperref[lemma-robustptrs-backtrans]{\em Ib. Back-\\translation}} (CP1s.-90);

  \draw[->] (cPi.90) to node
[text width=2cm,align=right,font=\itshape] {\hyperref[lemma-enrichment]{\em Ia. Enrichm.}} (cPiT.-90);

  \draw[->] ([xshift=2em]CP1s.-45) to node [align=left,font=\itshape,xshift=-0.25em,yshift=-1.7em,
  text width=2.5cm]{\hyperref[assm-fwdsim]{II. Forward \\\ \ \ Compiler \\ \ \ \ Correctness}} (CP1t);


  \draw[->] (CP1t.0) to node [below,xshift=-1.5em,yshift=-0.8em]{\em \hyperref[lemma-robustptrs-recomp]{III. Recomposition}} (CPt.180);
  \coordinate [below=1.5em of CP1t,xshift=0em] (compoint1);
  \coordinate [below=1.5em of CPt,xshift=-4em] (compoint2);
  \draw[-] ([xshift=-4em,yshift=1.2em]cPi.south east) to (compoint1);
  \draw[-] (compoint1) to (compoint2);
  \draw[->] (compoint2) to (CPt.220);

  \draw[->] ([xshift=-3em]CPt.90) to node [right,align=left,font=\itshape,text width = 2.7cm]{\hyperref[assm-bwdsim]{IV. Backward \\ \  Compiler \\  Correctness}} ([xshift=-1em]CPs.-90);

\end{tikzpicture}
	}
	\caption{Our proof technique for \rsp with memory sharing. The
          interaction traces $t_1$, $\mathtt{remove\_df}(T_1)$,
          $t_\mi{backr}$, $t_2$, $t_{1,2}$, and $t_\mi{QED}$ are
          pairwise related by the trace relation $\sim$.}
	\label{fig:rsc-proof}
\end{figure}

\Cref{fig:rsc-proof} summarizes our overall proof technique for
proving \rsp with dynamic memory sharing. $\back{\ }$ denotes the data-flow
back-translation function. Relative
to~\citepos{Abate:2018:GCG:3243734.3243745} proof technique shown in
\Cref{fig:rsc-proof-old}, the two key changes are that: (1)~Step I
(back-translation) has now been factored into two steps Ia and Ib to
use data-flow traces. Steps Ia and Ib correspond to
\Cref{lemma-enrichment} and \Cref{lemma-robustptrs-backtrans},
respectively. (2)~The proof of step III (recomposition) now relies on
turn-taking simulations. Steps II and IV, which simply reuse compiler
correctness, remain unchanged.

\paragraph{Trace relation $\sim$}
We now also define the trace relation $\sim$, mentioned in
\autoref{sec:background} and \autoref{sec:keyideas}. It says that
two traces are related if corresponding events have the same kind
(both call or both return, and between the same components), and there
is a bijective renaming of locations $\mathtt{ren}$ such that the
memories mentioned in corresponding events of the traces are related
by $\sim_\mathtt{ren}$ (\autoref{sec:key-ideas-turn-taking}), and so are
the arguments of calls and returns.

\begin{definition}[Relation on interaction traces]
	\label{def-trace-rel}
	For address renaming relations $\mathtt{ren}$, 
	suppose
	$\sim_\mathtt{ren}$ is the memory renaming
	relation described in \autoref{sec:key-ideas-turn-taking}.
	\begin{align*}
	t_1 \sim t_2\ \defeq\  \exists \mathtt{ren}.\ \forall i.
        &\ \ \ t_1[i].\mi{Mem}\ \sim_\mathtt{ren}\ t_2[i].\mi{Mem}\\
	&\ \ \wedge\ \mathtt{match\_events}(t_1[i], t_2[i])\\
	&\ \ \wedge\ \mathtt{valren}_\mathtt{ren} 
	(t_1[i].\mi{arg}, t_2[i].\mi{arg})
	\end{align*}
	\add{Here $t[i]$ denotes the $i$th event of trace $t$.
	Notation $t[i].\mi{Mem}$ is the memory that appears in the
	event $t[i]$ (see the \Cref{def-mem-sharing-trace-event}
	of events).
}
	$\mathtt{match\_events}(e_1, e_2)$ 
	says that the kind of events
        $e_1$ and $e_2$ \add{(again see \Cref{def-mem-sharing-trace-event}
	for the two possible kinds)} 
        and the component ids 
        appearing on them (\EG caller and callee) are the same.
	$\mathtt{valren}_\mathtt{ren}$ is a value renaming relation that
	just lifts the \remove{memory}\add{address} renaming relation $\mathtt{ren}$ to pointers.
	\add{We give a precise definition of $\mathtt{valren}_\mathtt{ren}$ in \autoref{sec:compiler}.}
\end{definition}

\if 0        
In the Coq development (\texttt{Common/RenamingOption.v}), we prove for the instances of $\mathtt{ren}$ we use that the trace relation 
$\sim$
satisfies desirable symmetry and transitivity lemmas.

There are some glue lemmas and assumptions that \Cref{fig:rsc-proof} does not
show, but all of the assumptions are documented in the {\tt README.md}.\ch{This
  sounds too open ended to me. I understand it as saying arbitrary glue
  lemmas were assumed in addition to the ones mentioned in the various
  disclaimers.  At the very least I would add that they and were already
  discussed in the intro and are all documented in the README?}\ch{This last
  point is now done, but also ``glue lemma'' sounds much more scary to me than
  what we claimed in the intro. For me the lemmas gluing things together
  are the high level ones, but we generally only assume low level stuff.}
\aek{Sorry. What did we claim in the intro? In the intro,
we did not even claim that we mechanize anything except
the two parts---back-translation and recomposition. No?}
\aek{Also, I am not sure what to call them and I am open
	for suggestions, but the key point from this sentence/
	paragraph
 is that ``the figure is abstracting away other lemmas that
the top-level theorem directly uses''.}
One notable assumption that we made is a 
compatibility assumption about the compiler correctness
relation. The assumption makes sure
that the renaming $\mathtt{ren}$ that the compiler
introduces preserves our meta-level definitions
$\mathtt{shared}$ and $\mathtt{private}$\ch{What are these definitions about?
  That's more important than giving their names here.}
\aek{In \autoref{sec:key-ideas-turn-taking} :-)}
after  compilation. Because we do not
concern ourselves with proving whole-program compiler 
correctness, we do not have access to a concrete
$\mathtt{ren}$ from \Cref{assm-bwdsim,assm-fwdsim}, hence
the need for this property to remain an assumption.
The proof of the top-level theorem (\rsp) is in 
\texttt{RSC.v}.


\fi

\section{Concrete Languages and Compiler Pass}
\label{sec:compiler}

Next, we describe specific source and target languages%
---\sourcelanguage and \targetlanguage, respectively---and a specific
compiler from the source to the target language. This
specific setup is the testbed on which we have instantiated our new
ideas from \autoref{sec:keyideas}.
In both languages, a program $P$ consists of a set
of named functions, a set of statically allocated data buffers and an interface. The
interface divides the program into components (denoted $c$) and
assigns every function to a component. It also defines which
functions are imported and exported by each component.

\paragraph{Values, pointers and memory}
Both languages are memory safe and \remove{share}%
\add{use} the same memory model, which
is adapted from CompCert's block-memory
model~\cite{10.1007/s10817-008-9099-0}.
A
value $v$ may be an integer $i$, an (unforgeable) pointer, or a special
\remove{undefined}\add{error} value
$\mathtt{error}$ used to initialize memory.\footnote{Another
possibility could have been to model 
$\mathtt{error}$ 
as a fixed default
integer instead (like zero), so not
necessarily a separate runtime type.
}
\ifsooner
\ch{Since this language is safe, I think we should get rid of
	the undefined value (\IE replace it with the defined value zero). It currently causes
	confusion about register cleaning (see comment about compiler at end of this
	section), and also keeps popping up in the explanation of the backtranslation.}
\fi
A \emph{pointer} is a tuple $(\mi{perm}, c, b, o)$ consisting of a
permission $\mi{perm}$ (used to distinguish code and data pointers),
the \add{identifier $c$ of the }allocating component\remove{ $c$}, a unique block
identifier $b$,
and an integer offset $o$ within the block. A location, which we denoted by
$l$ so far, is a triple of a component\add{ id}, a block\add{ id}, and an offset, $(c, b,
o)$.

A memory maps locations to values.
\remove{Briefly,}\add{CompCert's} memory consists of an
unbounded number of finite and isolated \emph{blocks} of values. 
\remove{The memory }\add{The memory in both 
our languages is similar, but is additionally
partitioned by component ids. It }can be seen as a collection 
\add{($c \rightharpoonup \mi{cMem}$)}
of individual
component memories 
\add{($\mi{cMem} = b \rightharpoonup (o \rightharpoonup v)$).}
\add{The projection operator that we
used in \Cref{def-mem-rel-abateetal,def-mem-rel-tt}
is formally defined as $\mathtt{proj}_{P}(\mi{Mem})
\defeq [ c \mapsto (\mi{Mem}\ c) \mid 
c\ \in\ \mathit{component\_ids}(P) ]$, returning
a sub-collection of the collection $\mi{Mem}$
containing
just the component memories that correspond
to the components of the program part $P$.}
\remove{because }\add{Although }each memory block is initially accessible
to only the allocating component\remove{.}\add{, 
	memory sharing is allowed, so the \emph{contents (i.e., values $v$)}
	of a component memory can be pointers to other
	component memories. In particular,
the contents of a component memory in the collection
$\mathtt{proj}_{P}(\mi{Mem})$ can very well be
pointers to a component memory that happens
to \emph{not} be in the collection.}

Pointers can be incremented or decremented (pointer arithmetic), but
this only changes the offset $o$. The block identifier $b$ cannot be
changed by any language operation. Additional metadata not shown here
tracks the size of each allocated block. Any dereference of a data
pointer with an offset beyond the allocated size or any call/jump to a
code pointer with a non-zero offset causes the program to halt,
which enforces memory safety.
\add{Code pointers can be shared between components, but a 
	component cannot dereference code pointers to another
component.\footnote{This condition is not unrealistic and can be
realized on, \EG CHERI by implementing code pointers either as mere
integer offsets or as sealed capabilities, but \emph{not} as unsealed
capabilities with execute permission.} Components interact only by
calling exported functions of other components and by sharing memory.}

Our languages are strongly inspired by those
of \citet{Abate:2018:GCG:3243734.3243745} but, unlike them, we allow a
component to pass pointers
to other components. The
receiving components can dereference these pointers, possibly after
changing their offsets. However, a component cannot access a block
without allocating it itself or receiving a location from it. Hence, our
languages provide memory protection at block granularity.

\add{Block ids are subject to renaming when relating two component implementations.
Our memory and trace relations
(\Cref{def-mem-rel-tt,def-trace-rel})
relate two 
 implementations
of a component even when the 
concrete block ids of pointers
that they share with the outside
world are different, as long as there exists 
a function\footnote{See
the Coq file \texttt{Common/RenamingOption.v}}
that consistently renames the pointers shared
by one implementation into those of the second.
With such a block id renaming
$\mathtt{ren} : b \rightharpoonup b$ in hand,
one can define value renaming (which we introduced
informally in \Cref{sec:rsp-proof}) as follows:}
\begin{align*}
&\bullet i_1 = i_2 \implies \mathtt{valren_{ren}}(i_1, i_2)\\
&\bullet  \mathtt{valren_{ren}}(\mathtt{error},\mathtt{error})\\
&\bullet \mathtt{ren}(b, b') \implies 
\mathtt{valren_{ren}}((\mathtt{DATA}, c, b, o),
(\mathtt{DATA}, c, b', o))\\
&\bullet \mathtt{valren_{ren}}((\mathtt{CODE}, c, b, o),
(\mathtt{CODE}, c, b, o))
\end{align*}
\add{The only block-id-renaming
relations we actually use in our proofs
are the identity, and increment-by-1
(in \Cref{fig:layout-back-translation}).
}

The operational semantics of both languages produce interaction traces
of events from \Cref{def-mem-sharing-trace-event}, recording
cross-component calls and returns. Calls and returns are necessarily
well-bracketed in the semantics.

The two languages differ
significantly in the constructs allowed within the bodies of
functions, as we describe next.

\begin{figure}
  \begin{tabular}{lll}
  \texttt{\src{exp}}& \texttt{::=} \quad\src{v}               & values            \\
                 & | \src{ arg}                & function argument \\
                 & | \src{ local}              & local static buffer \\
                 & | \src{ exp_1\ \otimes\ exp_2} & binary operations \\
                 & | \src{ exp_1;\ exp_2}       & sequence          \\
                 & | \src{ if\ exp_1\ then\ exp_2\ else\ exp_3} & conditional \\
                 & | \src{ alloc\ exp}            & memory allocation \\
                 & | \src{ !exp}                 & dereferencing     \\
                 & | \src{ exp_1 := exp_2}     & assignment        \\
                 & | \src{ c.func(exp)}             & function call    \\
                 & | \src{ *[exp_1](exp_2)}    & call pointer      \\
                 & | \src{ \&func}              & function pointer  \\
                 & | \src{ exit}               & terminate
\end{tabular}
\vspace{5pt}
\caption{Syntax of source language expressions
}
\label{fig:source-syntax}
\end{figure}

\paragraph{The source language (\sourcelanguage)}
The body of a \sourcelanguage function is a single
expression, $\src{exp}$, whose syntax is shown
in \Cref{fig:source-syntax} and is inspired by the source
language of \citet{Abate:2018:GCG:3243734.3243745}.
The construct
$\src{arg}$ evaluates to the argument of the current function, which
is a value (which may be a pointer). There are constructs for
if-then-else, dereferencing a pointer ($\src{!exp}$), assigning value
to a pointer ($\src{exp_1 := exp_2}$), calling a function $\src{func}$
in component $\src{c}$ with argument $\src{exp}$
($\src{c.func(exp)}$), calling a function pointer $\src{exp_1}$
($\src{ *[exp_1](exp_2)}$), and taking the address of a function
($\src{\&func}$). Additionally, every component has access to a
separate statically allocated memory block, whose pointer is
returned by the construct $\src{local}$.

Importantly, the source language has only \emph{structured control
flow}: Calls and returns are well-bracketed by the semantics, the only
explicit branching construct is if-then-else, and indirect function
calls with non-zero offsets beyond function entry points are stopped
by the semantics.


Function pointers exist in \sourcelanguage not only because they
are a natural programming feature, but also to make specific steps of the back-translation
convenient.
Function pointers allow us, \EG to easily mimic a store of the program counter to
memory, an operation that a target-language program routinely performs.
Without function pointers in the source, our cross-language value relation may
have been more complex---a complexity that would propagate to the trace
relation (\Cref{def-trace-rel}) and to the top-level theorem (\Cref{rsp-def}).


\begin{figure}
\begin{tabular}{l@{\hskip0pt}l@{\hskip20pt}l}
  \trg{instr ::=} & ~\trg{ Const\ i\ \texttt{->}\  r}                   & | \trg{ Bnz\ r\ L}         \\
                     & | \trg{ Mov\ r_s\ \texttt{->}\  r_d}                 & | \trg{ Jump\  r}             \\
                     & | \trg{ BinOp\ r_1\ {\otimes}\ r_2\ \texttt{->}\ r_d}   & | \trg{ JumpFunPtr\  r}       \\
                     & | \trg{ Label\  L}                         &  | \trg{ Jal\ L}             \\
                     & | \trg{ PtrOfLabel\ L\ \texttt{->}\  r_d}           & | \trg{ Call\ c\ func}         \\
                     & | \trg{ Load\ *r_p\ \texttt{->}\  r_d}              & | \trg{ Return}               \\
                     & | \trg{ Store\ *r_p\ \texttt{<-}\  r_s}             & | \trg{ Nop}                  \\
                     & | \trg{ Alloc\ r_1\  r_2}                 & | \trg{ Halt}                 \\
  \end{tabular}
\vspace{5pt}
\caption{Instructions of the target language
}
\label{fig:target-syntax}
\end{figure}

\paragraph{The target language (\targetlanguage)}
\targetlanguage is an assembly-like language inspired by RISC
architectures, with two high-level features: the block-based memory
model shared with \sourcelanguage and the component structure provided
by interfaces. Its instructions are shown
in \Cref{fig:target-syntax}. Its state comprises a register file with
a separate program counter and an abstract (protected) call stack for
cross-component calls, which enforces well-bracketed
cross-component \trg{Call}s and \trg{Return}s. A designated
register \trg{r_{COM}} is used for passing arguments and return
values. At every cross-component call or return, all registers except
$\trg{r_{COM}}$ are set to \remove{the undefined value.}\add{$\mathtt{error}$.}\ifsooner\ch{It would be much better
  if this said: ``are set to zero''}\fi

Importantly, \targetlanguage has \emph{unstructured control flow}: One
may label statements (instruction $\trg{Label~L}$), jump to labeled
statements ($\trg{Jump~L}$, $\trg{Bnz~r~L}$), and call labeled
statements ($\trg{Jal~L}$). Such unstructured jumps
are \remove{limited}\add{confined} to a single component
\add{(see the boxed premise of
the rule for \trg{Jump} that enforces this
restriction)}%
\remove{;
cross-component jumps are forbidden by the semantics}, \add{but may
cross intra-component function boundaries. This makes it infeasible
to syntactically back-translate \targetlanguage to \sourcelanguage.}
\rb{Tried to center rules quickly but failed :-) Do these need a bit of explanation?}

\medskip
\inferrule[Jump]{
	\trg{\ii{fetch}(E, \ii{pc})} = \trg{Jump\ r}\\
	\boxed{\trg{\ii{pc'}} = \trg{\ii{reg}[r]}} \\
	\trg{\ii{is\_code\_pointer(pc')}} \\
	\boxed{\trg{\ii{comp}(\ii{pc})} = \trg{\ii{comp}(\ii{pc'})}}
}{
	\trg{E \vdash (\sigma, \ii{mem}, \ii{reg}, \ii{pc}) \xrightarrow{\bl{[]}}
		(\sigma, \ii{mem}, \ii{reg}, \ii{pc'})
}}
~\\

\remove{the presence of unstructured control flow in
the target language means that a syntax-directed back-translation to
the source language, which has only structured control flow, is
infeasible. }

In addition to the interaction trace
semantics (\Cref{def-mem-sharing-trace-event})
like \sourcelanguage's, \targetlanguage
also enriches the trace with 
 data-flow events
(\Cref{def-dataflow-events-concrete}) as explained
in \autoref{sec:key-ideas-data-flow-traces} and
illustrated by the boxed premise of the \trg{Store} rule:

\medskip
\inferrule[Store]{
	\trg{\ii{fetch}(E, \ii{pc})} = \trg{Store\  *r_p\ \texttt{<-}\  r_s}\\
	\trg{\ii{ptr}} = \trg{\ii{reg}[r_p]}\\
	\trg{v} = \trg{\ii{reg}[r_s]}\\
	\trg{\ii{mem'}} = \trg{\ii{mem}[\ii{ptr} \mapsto v]}\\
	\boxed{\bl{\alpha} = \trg{\bl{\mathtt{Store}}~\mathit{mem'~reg~\ii{comp}(pc)}~r_p~r_s}}
}{
	\trg{E \vdash (\sigma, \ii{mem}, \ii{reg}, \ii{pc}) \xrightarrow{\bl{[\alpha]}}
		(\sigma, \ii{mem'}, \ii{reg}, \ii{pc} + 1)
}}


\paragraph{Compiler from \sourcelanguage to \targetlanguage}
Our compiler from \sourcelanguage to \targetlanguage is single-pass
and quite simple. It implements \sourcelanguage's structured control
flow with labels and direct jumps, intra-component calls using
jump-and-link (\trg{Jal}), and function pointer calls using indirect
jumps (\trg{Jump} and \trg{JumpFunPtr}).%
\ifsooner
\ch{Is there really nothing else interesting for security happening in this
  compiler? For instance, doesn't this compiler have to clear capabilities
  stored in registers when passing control to other components?  I would expect
  the theorem to break otherwise, but AFAIK not such clearing was needed to
  achieve RSP without memory sharing.}\ch{Unfortunately this issue does not seem
  that visible at this compilation level, where non-parametric accesses of
  undefined values is still an error. In \citet{Abate:2018:GCG:3243734.3243745}
  these undefined values were further refined to arbitrary values in the lower
  compilation levels (i.e. the values previous stored in the registers were
  kept), which would potentially pass more pointers in the (lower-level) target
  than in the source, which would break RSC. So should we change the compiler to
  zero out registers, to make it clearer what's going on? Or is this zeroing
  again better done in the semantics of the target machine? Seems it's the
  latter, it would be the easiest to let the target machine zero out registers,
  as part of the change to remove undefined values proposed above.}

\ch{Looking at the CCS'18 paper it seems that the compiler saves and restores
  registers? So maybe clearing registers in the Mach semantics is not needed for
  security if the registers are anyway overwritten by the compiler?}\ch{Not
  true, the adversarial context is {\em not} compiled, so he won't clear/restore
  registers if he wants to get free capabilities from them. The problem is that
  the register restoring happens once a compiled component receives control, but
  to get security the register clearing should already happen before giving up
  control to an arbitrary component. In conclusion, we still need register
  clearing for security.}
\fi

Even this simple compilation chain, where security is mostly enforced
by the target language semantics,
suffices to bring out the difficulties in proving secure compilation in
the presence of memory sharing. Using the ideas developed in \autoref{sec:keyideas}, we have proved
that this compilation chain provides \rsp security.

\begin{theorem}\label{theorem-rsp}
Our \sourcelanguage to \targetlanguage compiler is
\rsp (\IE it satisfies \Cref{rsp-def}).
\end{theorem}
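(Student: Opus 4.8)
The plan is to follow exactly the proof skeleton laid out in \Cref{fig:rsc-proof}, decomposing \Cref{theorem-rsp} into the four numbered steps. Unfolding \Cref{rsp-def}, we are given a source program part \src{P}, a target context \trg{C_t}, and an interaction trace $t_1$ with $(\trg{C_t}\ \trg{\cup}\ \cmp{\src{P}})\ \trg{\doesprefix}\ t_1$, and we must produce a source context \src{C_s} and a related trace $t_\mi{QED}$ with $(\src{C_s}\ \src{\cup}\ \src{P})\ \src{\doesprefix}\ t_\mi{QED}$. First I would apply Step Ia (\Cref{lemma-enrichment}) to the target run $(\trg{C_t}\ \trg{\cup}\ \cmp{\src{P}})\ \trg{\doesprefix}\ t_1$, obtaining a data-flow trace $T_1$ with $(\trg{C_t}\ \trg{\cup}\ \cmp{\src{P}})\ \trg{\emitsdfT}\ T_1$ and $t_1 = \mathtt{remove\_df}(T_1)$; this is immediate from the instrumented target semantics of \autoref{sec:compiler} (the boxed data-flow premises like the one on the \trg{Store} rule). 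Then Step Ib (\Cref{lemma-robustptrs-backtrans}) takes $T_1$ and yields a whole source program, which I decompose as $\src{C_s}\ \src{\cup}\ \src{P'}$, together with $t_\mi{backtr}$ such that $(\src{C_s}\ \src{\cup}\ \src{P'})\ \src{\doesprefix}\ t_\mi{backtr}$ and $t_\mi{backtr} \sim \mathtt{remove\_df}(T_1) = t_1$. Here \src{C_s} is the source context we will eventually return; crucially \src{P'} need not equal \src{P}, which is why the later recomposition step is needed.

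Next, Step II applies forward compiler correctness (\Cref{assm-fwdsim}) to the source run $(\src{C_s}\ \src{\cup}\ \src{P'})\ \src{\doesprefix}\ t_\mi{backtr}$, yielding a target run $(\cmp{\src{C_s}}\ \trg{\cup}\ \cmp{\src{P'}})\ \trg{\doesprefix}\ t_2$ with $t_2 \sim t_\mi{backtr}$. At this point I have two whole target programs emitting related traces: $(\trg{C_t}\ \trg{\cup}\ \cmp{\src{P}})\ \trg{\doesprefix}\ t_1$ and $(\cmp{\src{C_s}}\ \trg{\cup}\ \cmp{\src{P'}})\ \trg{\doesprefix}\ t_2$, with $t_1 \sim t_2$ (chaining $t_1 \sim t_\mi{backtr} \sim t_2$ via symmetry/transitivity of $\sim$). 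Step III invokes recomposition (\Cref{lemma-robustptrs-recomp}) with $\trg{P_1} := \cmp{\src{P}}$, $\trg{C_1} := \trg{C_t}$, $\trg{P_2} := \cmp{\src{P'}}$, $\trg{C_2} := \cmp{\src{C_s}}$, producing $(\cmp{\src{C_s}}\ \trg{\cup}\ \cmp{\src{P}})\ \trg{\doesprefix}\ t_{1,2}$ with $t_{1,2} \sim t_1$. Finally Step IV applies backward compiler correctness (\Cref{assm-bwdsim}) to this run — noting $\cmp{\src{C_s}}\ \trg{\cup}\ \cmp{\src{P}} = \cmp{(\src{C_s}\ \src{\cup}\ \src{P})}$ — to obtain $(\src{C_s}\ \src{\cup}\ \src{P})\ \src{\doesprefix}\ t_\mi{QED}$ with $t_\mi{QED} \sim t_{1,2}$. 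Chaining all the $\sim$ relations gives $t_\mi{QED} \sim t_1$, discharging the existential in \Cref{rsp-def} with witnesses \src{C_s} and $t_\mi{QED}$.

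The routine glue in this argument is the bookkeeping about how the back-translated whole program splits into a context and program part, the fact that $\cmp{\ }$ distributes over linking, and the symmetry and transitivity of the trace relation $\sim$ (so that the chain $t_1 \sim t_\mi{backtr} \sim t_2$, $t_{1,2} \sim t_1$, $t_\mi{QED} \sim t_{1,2}$ collapses appropriately); these hold for the concrete renaming-based $\sim$ of \Cref{def-trace-rel}. The real weight of the proof is entirely in the two new lemmas. \Cref{lemma-enrichment} is easy. \Cref{lemma-robustptrs-backtrans} (data-flow back-translation) requires building the source program that mimics $T_1$ event by event, keeping source memory and a set of source locations simulating the target register file in lock-step with each data-flow event, and proving the step-for-step invariant — this is the bulk of \autoref{sec:back-trans}. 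And \Cref{lemma-robustptrs-recomp} (recomposition) requires the ternary turn-taking simulation of \autoref{sec:key-ideas-turn-taking}, maintaining $\mathtt{mem\_rel\_tt}$ as an invariant and performing the strengthening step from $\mathtt{mem\_rel\_not\_exec}$ to $\mathtt{mem\_rel\_exec}$ at cross-component calls and returns using the hypothesis that the two base runs emit $\sim$-related traces.

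I expect the main obstacle to be the recomposition step (Step III), and within it, proving that the turn-taking memory relation is genuinely preserved across cross-component control transfers. The subtlety is precisely the one flagged by \Cref{example-c-changes-shared-memory-internally}: while control is outside \trg{P_1}, its shared memory in the recomposed run can diverge from either base run, so the invariant must be carefully split into private and shared parts whose relatedness is re-established only at the "turn" boundaries. Getting the strengthening argument right — that the related interaction traces force the shared-memory snapshots to agree up to the renaming $\mathtt{ren}$ at exactly the call/return points — is the delicate core, and it is also where memory sharing forces genuinely new reasoning compared to \citet{Abate:2018:GCG:3243734.3243745}. The back-translation step is conceptually simpler thanks to data-flow traces (the whole point of that innovation), though mechanically still large; the enrichment lemma and the compiler-correctness steps are essentially free.
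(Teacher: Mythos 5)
Your proposal is correct and follows exactly the paper's own decomposition (Figure~\ref{fig:rsc-proof}): enrichment, data-flow back-translation, forward compiler correctness, recomposition with the instantiation $\trg{P_1} := \cmp{\src{P}}$, $\trg{C_2} := \cmp{\src{C_s}}$, and backward compiler correctness, glued together by separate compilation and the symmetry/transitivity of $\sim$. The identification of where the real difficulty lies (back-translation and recomposition, with the strengthening step at interaction events) also matches the paper's account.
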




\section{Some Details of the Coq Proof}

\subsection{Data-Flow Back-Translation of \targetlanguage}
\label{sec:back-trans}

We provide some details of how we back-translate \targetlanguage's
data-flow traces to \sourcelanguage, i.e., how we
prove \Cref{lemma-robustptrs-backtrans}. The back-translation
function, written $\back{\ }$, takes as input a data-flow trace $T$ and
outputs a \sourcelanguage whole program \src{P} that produces the (standard)
trace $\mathtt{remove\_df}(T)$ in \sourcelanguage.\footnote{$\back{\
}$ also takes as input the \emph{interface} of the given
target-language program to be able to mimic the same interface in the
source program, but we elide the details as they are not very
insightful, and largely similar to those
in \citet{Abate:2018:GCG:3243734.3243745}. \src{P} can then be split
into a context \src{C_S} and a program part \src{P'} by slicing it along this
interface.}
As for \citet{Abate:2018:GCG:3243734.3243745},
each component in \src{P} maintains an \emph{event counter} to keep track of which
trace event the component is currently mimicking.  This counter, as
well as a small amount of other metadata used by the back-translation,
is stored inside the statically allocated buffers of each component
of \src{P}, which are accessed using the $\src{local}$ construct.

\add{\paragraph{Control flow of the
		result of the
		back-translation}
	The outermost structure and control flow of the
	result of our data-flow back-translation is very
	similar to that of
	\citet{Abate:2018:GCG:3243734.3243745}'s
	interaction-trace-directed back-translation.
	Every procedure has a main loop (implemented using
	a tail-recursive call) that emits, one after the other,
	the events this procedure's component is responsible for
	emitting.
	In the ``loop body'', the event counter mentioned above is checked
	using a switch statement to determine the event whose turn it is
	to be emitted.
}

\paragraph{Mimicking register operations}
A technical difficulty in the back-translation is that, unlike
\targetlanguage, \sourcelanguage does not have registers.
In order to mimic data-flow events involving registers, \src{P}
simulates these registers and operations on them within the static
buffer of the active component.
%
%
For instance, a
$\mathtt{Mov}~\mathit{Mem~Reg~c_{cur}~r_{src}~r_{dest}}$ event (which
copies a value from register $\mathit{r_{src}}$ to register
$\mathit{r_{dest}}$) is simulated by the expression $\src{(local +
OFFSET(\mathit{r_{dest}})) :=~!(local + OFFSET(\mathit{r_{src}}))}$,
where $\src{OFFSET(\mathit{r})}$ is statically expanded to the offset
corresponding to register $\mathit{r}$ in the simulated register file.

\paragraph{Mimicking memory operations}
Because like in CompCert the source and target memory models coincide,
we are able to back-translate memory events quite easily.
That is, a $\mathtt{Store}$ event is back-translated using assignment
($\src{:=}$) and a $\mathtt{Load}$ event is back-translated using
dereferencing ($\src{!}$).
Since the static buffer (whose block number is $0$ in our semantics)
is already used by the back-translation to store metadata and
simulated registers, the back-translated program's memory shifts by
one block relative to the memory in the target: for each component,
block $b$ in the target corresponds to block $b{+}1$ in the source.
The memory layout of the back-translated program
($\src{P} {=} T\!\!\!\uparrow$) relative to
the given \targetlanguage program \trg{P} is shown
in~\Cref{fig:layout-back-translation}.
\begin{figure}
	\centering
	\includegraphics[width=0.49\textwidth]{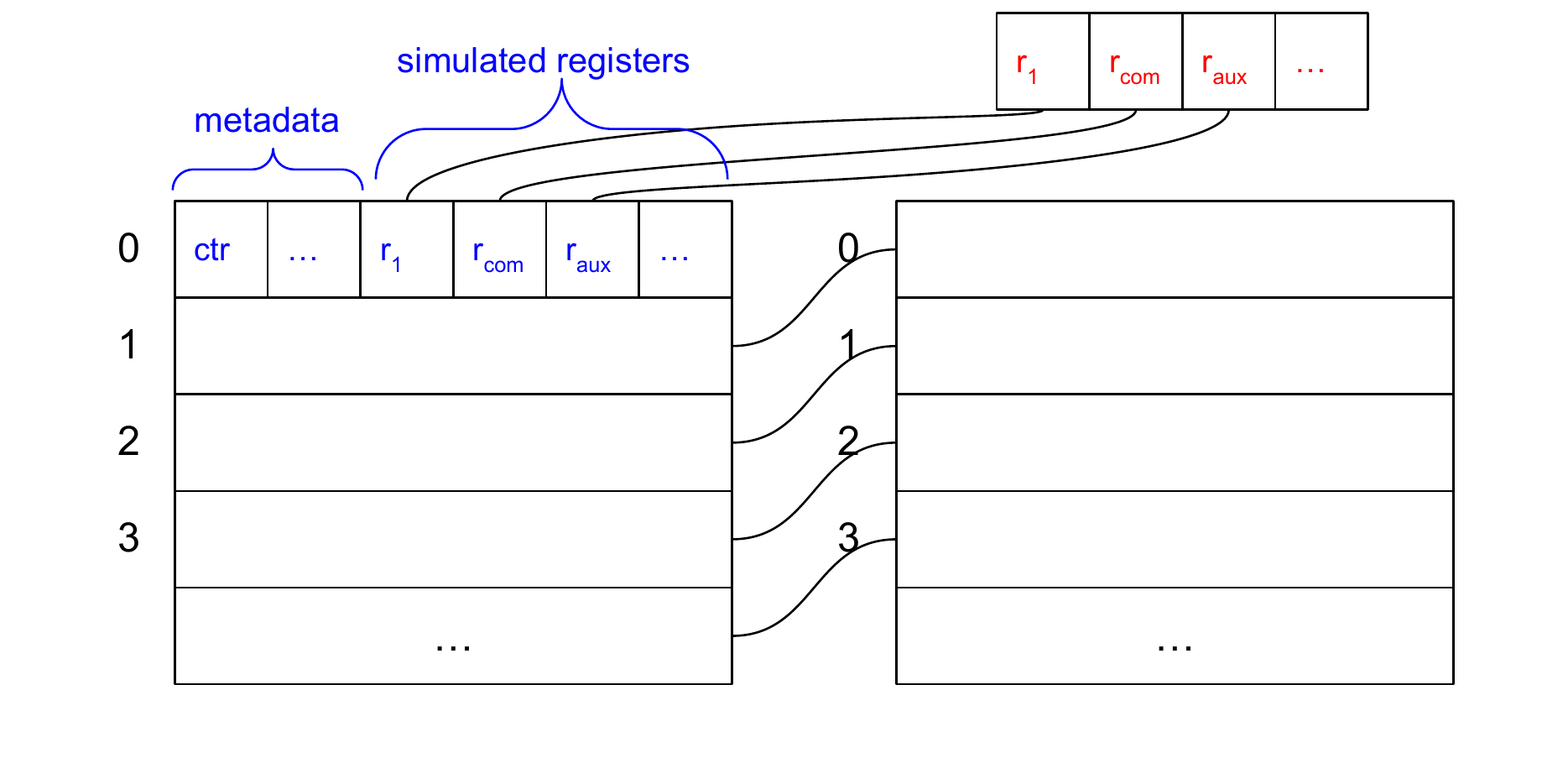}
	\caption{Memory layout of a back-translated component (left) compared to a target component (right) witnessing the increment-by-1 block-id-renaming relation}\label{fig:layout-back-translation}
\end{figure}
\src{P} maintains the invariant that, after simulating an event in $T$,
\src{P}'s memory and its current component's simulated registers are
synchronized with the target memory $\mi{Mem}$ and the target register
file $\mi{Reg}$ mentioned in the simulated event
(This is part of a \texttt{mimicking\_state}
invariant---see \Cref{lemma-definability} below).

\paragraph{Mimicking calls and returns}
\targetlanguage's semantics enforce a calling convention: calls and
returns store the argument or return value in \trg{r_{COM}}, and set
all other registers to \remove{the undefined value}\add{$\mathtt{error}$}.\ch{to zero}
Therefore, calls and returns in \src{P} need extra administrative
steps to mimic this convention. For example, mimicking a call event
requires two administrative steps: (1) In the caller, dereference the
content of the location simulating \trg{r_{COM}} to get the argument
and pass it to the function. (2) In the callee, assign the
function argument $\src{arg}$ to the location
simulating \trg{r_{COM}}, and set all other registers to \remove{the undefined value}\add{$\mathtt{error}$}.\ch{zero}
Similar administrative steps are needed for mimicking a return event.

\paragraph{Proof of back-translation}
To prove back-translation (\Cref{lemma-robustptrs-backtrans}), we use
a simulation lemma that ensures a relation \texttt{mimicking\_state}
holds between the state of \src{P} and the prefix mimicked so
far. Intuitively, $\texttt{mimicking\_state}\ T_{\mi{pref}}\
T_{\mi{suff}}\ \src{s}$ means that $\src{s}$ is the state reached
after mimicking all the data-flow events in $T_{\mi{pref}}$, and that
the starting state of the remaining trace $T_{\mi{suff}}$ matches
$\src{s}$. 

\begin{lemma}[Trace-prefix mimicking]
	\label{lemma-definability}
\begin{align*}
  &\forall \trg{P}\ T\ T_{\mi{pref}}\ T_{\mi{suff}}.\ \
  \trg{P \emitsdfT}\ T \implies T = T_{\mi{pref}}\ \texttt{++}\ T_{\mi{suff}} \implies\\
  &\ \ \ \exists \src{s}\ t_{\mi{pref}}'.\ \ 
	T\!\uparrow\src{\ \step{\bl{t'_\mi{pref}}}^{*}\ }
	\src{s}\ \wedge
        t_{\mi{pref}}' \sim \mathtt{remove\_df}(T_\mi{pref})\ \\
  &\ \ \ \ \ \ \ \ \ \ \ \ \ \wedge \texttt{mimicking\_state}\ T_{\mi{pref}}\ T_{\mi{suff}}\ \src{s}
\end{align*}
\end{lemma}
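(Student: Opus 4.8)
The plan is to prove \Cref{lemma-definability} by induction on the prefix $T_{\mi{pref}}$, peeling data-flow events off its right end (equivalently, strong induction on $|T_{\mi{pref}}|$, keeping the whole trace $T$ and hence the fixed back-translated program $\back{T}$ in scope throughout). In the base case $T_{\mi{pref}} = []$ we must exhibit a state $\src{s}$ reachable from $\back{T}$'s initial state by finitely many \emph{silent} steps with $\texttt{mimicking\_state}\ []\ T\ \src{s}$; the emitted source trace is then empty and $\mathtt{remove\_df}([]) = []$, so the trace-relation conjunct is trivial. This is essentially a fact about $\back{T}$'s initialization: each component's static buffer is set up with its event counter at $0$ and its simulated-register slots at $\mathtt{error}$, and the initial source memory is exactly the ``$b \mapsto b{+}1$'' shift (\Cref{fig:layout-back-translation}) of $\trg{P}$'s initial memory, so after $\back{T}$ enters its entry component's main loop the source configuration lines up with the target configuration from which the head of $T$ is emitted (using that $\trg{P}\emitsdfT T$ starts $T$ at $\trg{P}$'s initial state).

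For the inductive step write $T_{\mi{pref}} = T_0\ \texttt{++}\ [\mathcal{E}]$, so $T = T_0\ \texttt{++}\ [\mathcal{E}]\ \texttt{++}\ T_{\mi{suff}}$. Applying the induction hypothesis to $T_0$ and $[\mathcal{E}]\ \texttt{++}\ T_{\mi{suff}}$ gives a state $\src{s_0}$ and a source trace $t_0'$ with $\back{T}\src{\ \step{t_0'}^{*}\ }\src{s_0}$, $t_0' \sim \mathtt{remove\_df}(T_0)$, and $\texttt{mimicking\_state}\ T_0\ ([\mathcal{E}]\ \texttt{++}\ T_{\mi{suff}})\ \src{s_0}$. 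The heart of the proof is then a \emph{one-event simulation} sub-lemma: from any state satisfying $\texttt{mimicking\_state}\ T_0\ ([\mathcal{E}]\ \texttt{++}\ T_{\mi{suff}})$, there are finitely many $\back{T}$-steps to some $\src{s}$ emitting a fragment $t_{\mathcal{E}}$, with $\texttt{mimicking\_state}\ (T_0\ \texttt{++}\ [\mathcal{E}])\ T_{\mi{suff}}\ \src{s}$ and $t_0'\ \texttt{++}\ t_{\mathcal{E}} \sim \mathtt{remove\_df}(T_0\ \texttt{++}\ [\mathcal{E}])$. Chaining this after the induction hypothesis closes the step. Crucially, the renaming witnessing $\sim$ (\Cref{def-trace-rel}) can be fixed once and for all to the increment-by-1 block renaming of \Cref{fig:layout-back-translation}, so it stays consistent across all events and the $\exists\,\mathtt{ren}$ is discharged globally.

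The sub-lemma is proved by case analysis on $\mathcal{E}$ (\Cref{def-dataflow-events-concrete}). For the internal events ($\mathtt{Const}$, $\mathtt{Mov}$, $\mathtt{BinOp}$, $\mathtt{Load}$, $\mathtt{Store}$, $\mathtt{Alloc}$), $\mathtt{remove\_df}$ discards $\mathcal{E}$, so $t_{\mathcal{E}}$ is empty and only the mimicking invariant has to be re-established. Each such event is back-translated to a fixed source expression over the $\src{local}$ buffer (an assignment into a simulated-register slot for $\mathtt{Const}/\mathtt{Mov}/\mathtt{BinOp}$, a $\src{!}$ for $\mathtt{Load}$, a $\src{:=}$ for $\mathtt{Store}$, an $\src{alloc}$ for $\mathtt{Alloc}$); since the event records the relevant post-state (as the boxed premise of the \trg{Store} rule records $\mi{mem'}$), executing that expression moves $\src{s_0}$ exactly onto the target configuration from which the head of $T_{\mi{suff}}$ is emitted, provided the block-shift invariant holds of the addresses involved and, for $\mathtt{Alloc}$, the source and target allocators stay in lockstep at offset $+1$ (both pick the next fresh block). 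For $\mathtt{dfCall}$ and $\mathtt{dfRet}$, $\mathtt{remove\_df}$ keeps the event as a $\mathtt{Call}$/$\mathtt{Ret}$, so $t_{\mathcal{E}}$ is a single source event and we additionally mimic \targetlanguage's calling convention: the caller reads the slot simulating $\trg{r_{COM}}$ and issues the corresponding $\src{c.func(\cdot)}$; the callee (resp.\ returnee) re-enters its main loop, writes $\src{arg}$ into the $\trg{r_{COM}}$ slot, sets all other simulated-register slots to $\mathtt{error}$, bumps its event counter, and dispatches through the $\texttt{switch}$ to the body for $\mathcal{E}$. The emitted source $\mathtt{Call}$/$\mathtt{Ret}$ then matches the dfCall/dfRet: same component ids and procedure name by construction, argument related by $\mathtt{valren}$, and memory related by $\sim_{\mathtt{ren}}$ for $\mathtt{ren}$ the increment-by-1 renaming, from the block-shift invariant together with the fact that the event records the memory at the moment of transfer.

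I expect the $\mathtt{dfCall}$/$\mathtt{dfRet}$ cases to be the main obstacle. First, every component of $\texttt{mimicking\_state}$ must be re-established simultaneously across a control transfer: the protected call stack and well-bracketing — which is sound only because the dfCall/dfRet events in $T$ are themselves well-bracketed, inherited from $\trg{P}$'s semantics — the per-component event counters, the position within each component's tail-recursive main loop, and the agreement of \emph{every} simulated register (not just $\trg{r_{COM}}$) with the event's $\mi{Reg}$. Second, the memory carried by the emitted source event must relate to the event's $\mi{Mem}$ under the single global renaming, which forces the block-shift invariant to be exactly increment-by-1 and stable under all allocations performed so far; and faithfully simulating a register that holds a stored program counter is precisely what makes the source's $\src{\&func}$ and call-pointer constructs necessary in this case. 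Assembling this bundle of invariants so that it survives both the internal-event steps and the control transfers is where the Coq mechanization does the real work; the informal argument above is only a skeleton.
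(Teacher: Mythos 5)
Your proposal matches the paper's approach: the paper likewise proves this lemma by induction over the data-flow trace prefix, maintaining the lock-step \texttt{mimicking\_state} invariant, doing a per-event case analysis in which internal events emit no source events while \texttt{dfCall}/\texttt{dfRet} emit matching source events via the administrative calling-convention steps, and fixing the renaming to the increment-by-1 block shift of \Cref{fig:layout-back-translation}. The paper gives only this outline in \autoref{sec:back-trans} and its appendix (deferring the details to the Coq files \texttt{Source/Definability.v} and \texttt{Source/NoLeak.v}), and your skeleton, including the identification of the call/return cases as the hard part, is consistent with it.
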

Because \Cref{lemma-definability} ensures the
relation \texttt{mimicking\_state} holds for every prefix, it
effectively states that the memory of the back-translation is in
lock-step with the $\mi{Mem}$ and $\mi{Reg}$ appearing in each
data-flow event $\mathcal{E}$ from $T$.
\texttt{mimicking\_state} is also strong enough to ensure
that the trace relation holds between the projection of the prefix
mimicked so far $\mathtt{remove\_df}(T_\mi{pref})$ and the
corresponding prefix $t_\mi{pref}'$ that the back-translation emits.

\add{
The fully mechanized Coq proof of \Cref{lemma-robustptrs-backtrans} is in
\texttt{Source/DefinabilityEnd.v}, which in turn uses
\texttt{Source/Definability.v} and \texttt{Source/NoLeak.v}.\footnote{For
a total of 1.3k lines of specification and 14.3k lines of proof.}
}


\subsection{Proof of Recomposition for \targetlanguage}
\label{sec:recomb}

We use the turn-taking memory relation from
\autoref{sec:key-ideas-turn-taking} to prove 
recomposition (\Cref{lemma-robustptrs-recomp}).
%
To do that, we prove that 
\Cref{def-mem-rel-tt} of $\mathtt{mem\_rel\_tt}$ 
is an invariant. 
\Cref{def-mem-rel-tt} is part of
a bigger invariant $\mathtt{state\_rel\_tt}$ 
on execution states
that we elide here for space reasons. The Coq proof of
\Cref{lemma-robustptrs-recomp} is, however, available in 
\texttt{Intermediate/RecompositionRel.v}, which in turn uses
all of {\small \texttt{RecompositionRelCommon.v},
	\texttt{RecompositionRelOptionSim.v},
	\texttt{RecompositionRelLockStepSim.v}} and
{\small \texttt{RecompositionRelStrengthening.v}}\footnote{For
a total of 830 lines of specification and 
12.6k lines of proof.}

As explained at the end of \Cref{sec:key-ideas-turn-taking}, a key
requirement of the recomposition proof is a strengthening lemma that
recovers a stronger invariant, $\mathtt{state\_rel\_border}$, which
holds at states that emit interaction events.  We show the memory part
of $\mathtt{state\_rel\_border}$:
\dg{Can we already move from $\mathtt{mem\_}$ to $\mathtt{state\_}$ here? The change at a later point is very confusing.}
\aek{Better now?}

\begin{definition}[Memory Relation At Interaction 
	Events]
	\label{def-mem-rel-border}
\begin{align*}
&\mathtt{mem\_rel\_border}(\trg{s_{1,2}}, \trg{s_1}, \trg{s_2}, t_{1,2}, t_1, t_2) \defeq\\
&\ \ \ \ \mathtt{mem\_rel\_exec}(\trg{P_1}, t_1, t_{1,2}, \trg{s_1.\mi{Mem}}, \trg{s_{1,2}.\mi{Mem}})\ 
\wedge\ \\
&\ \ \ \ \mathtt{mem\_rel\_exec}(\trg{C_2}, t_2, t_{1,2}, \trg{s_2.\mi{Mem}}, \trg{s_{1,2}.\mi{Mem}})\ 
\end{align*}
where $\mathtt{mem\_rel\_exec}$ is exactly as in \Cref{def-mem-rel-tt}.
\end{definition}

Among other things, $\mathtt{mem\_rel\_border}$ ensures 
that the shared memories
of the three states (of the recomposed program and the two
base programs) are all in sync.
We are able to instantiate this strong invariant \emph{only}
at interaction events, because at these points
we can use the assumption
that the traces of the two base programs
are related (last assumption of \Cref{lemma-strengthening}), 
which implies that the
shared memories of the base programs are related.
This assumption can be combined with 
$\mathtt{mem\_rel\_tt}$ (which holds universally for every
triple of corresponding states)
to obtain $\mathtt{mem\_rel\_border}$. 
\begin{lemma}[Strengthening at interaction events]
\label{lemma-strengthening}
\begin{align*}
&\forall \trg{s_{1,2}}\ \trg{s_1}\ \trg{s_2}\ t_{1,2}\ t_1\ t_2\ 
\trg{s_1'}\ \trg{s_2'}\ e_1\ e_2.\\
&\ \ \mathtt{state\_rel\_tt}(\trg{s_{1,2}},\ \trg{s_1},\ \trg{s_2},\ t_{1,2},
\ t_1,\ t_2) \implies\\
&\ \ \trg{s_1\ \step{\bl{[e_1]}}\ s_1'} \implies\\
&\ \ \trg{s_2\ \step{\bl{[e_2]}}\ s_2'} \implies\\
&\ \ t_1\ \texttt{++}\ [e_1]\  \sim\  t_2\ \texttt{++}\ [e_2] 
\implies\\
&\exists \trg{s_{1,2}'}\ e_{1,2}.\ \trg{s_{1,2}\ \step{\bl{[e_{1,2}]}}\ s_{1,2}'}\ 
\wedge\ \\
&\ \ \ \mathtt{state\_rel\_border}(\trg{s_{1,2}'},\ \trg{s_1'},\ \trg{s_2'},\ \\
&\ \ \ \ \ \ \ \ \ \ \ \ \ \ 
t_{1,2}\ \texttt{++}\ [e_{1,2}],\ t_1\ \texttt{++}\ [e_1],\ t_2\ \texttt{++}\ [e_2])
\end{align*}
\end{lemma}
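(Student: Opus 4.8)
The plan is to treat the statement above as an \emph{upgrade} step: before the interaction event the turn-taking invariant $\mathtt{state\_rel\_tt}$ relates the non-executing side's memory only on its private part, and we must show that emitting a $\mathtt{Call}$ or $\mathtt{Return}$ lets us recover the stronger $\mathtt{state\_rel\_border}$ (full projections on both sides, \Cref{def-mem-rel-border}) by exploiting the hypothesis that the two base runs emit related events. First I would case-split on which side $\trg{s_{1,2}}$ is executing in, exactly as in \Cref{def-mem-rel-tt}; the two cases are symmetric, so assume control is in $\trg{P_1}$. Since the interaction-trace semantics used in the recomposition proof emits events only on cross-component calls and returns, the single-step emission $\trg{s_1 \step{[e_1]} s_1'}$ forces $e_1$ to be such an event, and since $\trg{s_{1,2}}$ executes in $\trg{P_1}$ its caller/returner component lies in $\trg{P_1}$. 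From $\mathtt{state\_rel\_tt}$ the $\trg{P_1}$-parts of $\trg{s_{1,2}}$ and $\trg{s_1}$ agree on code, program counter, registers and the protected call stack up to a renaming $\mathtt{ren}$, and $\mathtt{mem\_rel\_exec}(\trg{P_1},\dots)$ relates their memories; hence the recomposed run is poised to execute the very same instruction. Because the recomposed program's interface is the union of $\trg{P_1}$'s and $\trg{C_2}$'s, which coincides with base run~1's interface ($\trg{C_1}$ and $\trg{C_2}$ share an interface), and because the call stacks are in sync, the recomposed run can take the corresponding step, reaching some $\trg{s_{1,2}'}$ and emitting $e_{1,2}$ with $\mathtt{match\_events}(e_{1,2},e_1)$ and $\mathtt{valren}_{\mathtt{ren}}$ between their arguments; composing with the hypothesis $t_1\,\texttt{++}\,[e_1]\sim t_2\,\texttt{++}\,[e_2]$ then also matches $e_{1,2}$ against $e_2$, which gives the existential witnesses and the event-matching needed by the goal.

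For the memory conjuncts of $\mathtt{state\_rel\_border}$, the $\trg{P_1}$ side is easy: a $\mathtt{Call}$ or $\mathtt{Return}$ in \targetlanguage only touches registers and the abstract call stack, so $\trg{s_{1,2}'.\mi{Mem}} = \trg{s_{1,2}.\mi{Mem}}$ and $\trg{s_1'.\mi{Mem}} = \trg{s_1.\mi{Mem}}$. The $\mathtt{proj}_{\trg{P_1}}$ relation therefore still holds, and the only change to $\mathtt{shared}$ is that the passed pointer may enlarge the transitively-shared set; but the argument values are related by $\mathtt{valren}_{\mathtt{ren}}$ and the contents did not move, so the newly exposed sub-memory is related by the same relation that witnessed $\mathtt{proj}_{\trg{P_1}}$-relatedness. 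This re-establishes $\mathtt{mem\_rel\_exec}(\trg{P_1},\, t_1\,\texttt{++}\,[e_1],\, t_{1,2}\,\texttt{++}\,[e_{1,2}],\dots)$.

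The crux is the $\trg{C_2}$ side, where we only have $\mathtt{mem\_rel\_not\_exec}(\trg{C_2},\dots)$ but must deliver $\mathtt{mem\_rel\_exec}(\trg{C_2},\dots)$. The private part carries over because neither the recomposed step nor the base-run-2 step (whose active component lies in $\trg{P_2}$, as $e_2$ matches $e_1$) writes into $\trg{C_2}$'s memory, and because $\mathtt{Call}$/$\mathtt{Return}$ do not write memory at all. For the shared part I would unfold \Cref{def-trace-rel} at the last index of $t_1\,\texttt{++}\,[e_1]\sim t_2\,\texttt{++}\,[e_2]$ to get $e_1.\mi{Mem}\sim_{\mathtt{ren}'} e_2.\mi{Mem}$, and note that by construction of interaction events the underlined $\mi{Mem}$ on a $\mathtt{Call}$/$\mathtt{Return}$ \emph{is} the current shared memory, i.e.\ $e_1.\mi{Mem} = \mathtt{shared}(\trg{s_1'.\mi{Mem}},\, t_1\,\texttt{++}\,[e_1])$ and likewise for $e_2$. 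Chaining
\[
\mathtt{shared}(\trg{s_{1,2}'.\mi{Mem}},\, t_{1,2}\,\texttt{++}\,[e_{1,2}]) \ \sim_{\mathtt{ren}}\ e_1.\mi{Mem} \ \sim_{\mathtt{ren}'}\ e_2.\mi{Mem} \ =\ \mathtt{shared}(\trg{s_2'.\mi{Mem}},\, t_2\,\texttt{++}\,[e_2])
\]
(first step from the $\trg{P_1}$ conjunct just proved, second from the trace hypothesis) and composing the two partial bijections, using the symmetry/transitivity lemmas about $\sim_{\mathtt{ren}}$ in \texttt{Common/RenamingOption.v}, gives the shared-memory relation for $\trg{C_2}$; glued with the preserved private part (a disjoint union of address sets) this yields the full $\mathtt{proj}_{\trg{C_2}}$ relation, hence $\mathtt{mem\_rel\_exec}(\trg{C_2},\dots)$. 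Combining the two conjuncts gives the memory part of $\mathtt{state\_rel\_border}$, and the remaining non-memory components (register relation for the now-active component, stack correspondence, interface well-formedness, coherence of renamings) follow from the corresponding parts of $\mathtt{state\_rel\_tt}$ plus the fact that calls and returns reshape registers and stacks in one fixed way across all three runs; this is routine. The symmetric case just swaps $\trg{P_1}\leftrightarrow\trg{C_2}$ and $\trg{s_1}\leftrightarrow\trg{s_2}$.

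I expect the main obstacle to be the renaming bookkeeping inside the $\trg{C_2}$-side strengthening: three partial bijections are in play --- recomposed$\leftrightarrow$base~1 (from $\mathtt{state\_rel\_tt}$), base~1$\leftrightarrow$base~2 (from the trace hypothesis), and the recomposed$\leftrightarrow$base~2 renaming we must exhibit --- and showing their composite is again a genuine partial bijection that is \emph{consistent} with the one already relating $\trg{C_2}$'s private memory (so that a single $\mathtt{ren}$ governs the whole $\mathtt{proj}_{\trg{C_2}}$) is precisely the delicate invariant engineering that bloats $\mathtt{state\_rel\_tt}$. A secondary difficulty is proving that the transitively-shared set grows coherently across the three runs when the call or return passes a fresh pointer, i.e.\ that the reachable closure of related pointers in related memories is again related; this is where the interaction-trace ``shared'' machinery from \autoref{sec:key-ideas-turn-taking} does the real work.
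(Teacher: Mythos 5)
Your proposal matches the paper's own argument for this lemma: the recomposed run mimics the executing side's call/return step (yielding the witness step and matched event), the executing side's memory relation carries over, and the non-executing side is strengthened from $\mathtt{mem\_rel\_not\_exec}$ to $\mathtt{mem\_rel\_exec}$ precisely by unfolding the trace-relatedness hypothesis at the new event to relate the two base runs' shared memories and chaining that with the turn-taking invariant's relation to the executing base run. The paper gives no more detail than this sketch (the rest lives in \texttt{RecompositionRelStrengthening.v}), and the difficulties you flag---renaming composition/consistency and coherent growth of the transitively-shared set---are indeed where the mechanized work lies.
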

The relation $\mathtt{state\_rel\_tt}$ is a turn-taking
simulation invariant. It ensures that the memory
relation $\mathtt{mem\_rel\_tt}$ holds of the memories of the
three related states. Similarly, the stronger state relation
$\mathtt{state\_rel\_border}$ ensures that the memory relation
$\mathtt{mem\_rel\_border}$ holds of the memories of the three
related states.

The exact definition of the relation $\mathtt{state\_rel\_tt}$ 
is in \texttt{RecompositionRelCommon.v}. We show here two key
lemmas:
\begin{lemma}[Option simulation w.r.t. \emph{non}-executing \trg{part}]
	\label{lemma-option-sim}
\begin{align*}
&\forall \trg{s_{1,2}}\ \trg{s_1}\ \trg{s_2}\ 
t_{1,2}\ t_1\ t_2\ 
\trg{s_1'}.\\
&\ \ \trg{s_{1,2}}\ \texttt{is executing in}\ \trg{C_2}\ (\IE \emph{not in}\ \trg{P_1}) \implies\\ 
&\ \ \mathtt{state\_rel\_tt}(\trg{s_{1,2}},\ \trg{s_1},\ \trg{s_2},\ t_{1,2},
\ t_1,\ t_2) \implies\\
&\ \ \trg{s_1\ \step{\bl{[]}}^{*}\ s_1'} \implies\\
&\ \ \mathtt{state\_rel\_tt}(\trg{s_{1,2}},\ \trg{s_1'},\ \trg{s_2},\ t_{1,2},
\ t_1,\ t_2)
\end{align*}
\end{lemma}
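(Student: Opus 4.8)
The plan is to prove \cref{lemma-option-sim} by induction on the silent reduction sequence $\trg{s_1\ \step{\bl{[]}}^{*}\ s_1'}$, showing that each individual silent step of the base run $\trg{P_1 \cup C_1}$ preserves the full invariant $\mathtt{state\_rel\_tt}(\trg{s_{1,2}},\, -,\, \trg{s_2},\, t_{1,2},\, t_1,\, t_2)$ with only its second argument changing. Since the recomposed state $\trg{s_{1,2}}$ and the other base state $\trg{s_2}$ do not move and the trace prefixes $t_1, t_2, t_{1,2}$ are unchanged, the only conjuncts of $\mathtt{state\_rel\_tt}$ at risk are those mentioning $\trg{s_1}$: the memory conjunct $\mathtt{mem\_rel\_not\_exec}(\trg{P_1}, t_1, t_{1,2}, \trg{s_1.\mi{Mem}}, \trg{s_{1,2}.\mi{Mem}})$ from the \texttt{else}-branch of \cref{def-mem-rel-tt} (the applicable branch, because by hypothesis $\trg{s_{1,2}}$ executes in $\trg{C_2}$), together with the bookkeeping recording which component the $\trg{P_1 \cup C_1}$ run is in and the suspended-$\trg{P_1}$ part of its cross-component call stack. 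The conjuncts relating $\trg{s_{1,2}}$ to $\trg{s_2}$ (notably $\mathtt{mem\_rel\_exec}(\trg{C_2}, \dots)$) are untouched.

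The structural fact I would establish first is that, throughout the silent sequence, control in the $\trg{P_1 \cup C_1}$ run stays inside the context $\trg{C_1}$. This follows from two observations. (i) $\mathtt{state\_rel\_tt}$ keeps the executing component of the three runs consistent: the ``turn'' is determined by the common well-bracketed trace, so from the hypothesis that $\trg{s_{1,2}}$ executes in one of $\trg{C_2}$'s components we obtain that $\trg{s_1}$ executes in a component of $\trg{C_1}$. (ii) In the \targetlanguage\ semantics, only \trg{Call} and \trg{Return} cross component boundaries, and these emit interaction events; intra-component calls (\trg{Jal}) do not touch the cross-component call stack $\sigma$. Hence no silent step can move control out of $\trg{C_1}$, the executing-component bookkeeping is maintained trivially by the induction, and the $\trg{P_1}$-frame part of $\sigma$ is literally unchanged. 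This is exactly why the hypothesis ``$\trg{s_{1,2}}$ is executing in $\trg{C_2}$'' is needed: were $\trg{s_{1,2}}$ in $\trg{P_1}$, the \texttt{then}-branch of \cref{def-mem-rel-tt} would tie $\trg{s_1}$'s entire memory and control state to $\trg{s_{1,2}}$'s, so a silent $\trg{s_1}$-step would have to be matched by a silent $\trg{s_{1,2}}$-step --- that is the complementary lock-step case handled by \cref{lemma-lockstep-sim}, not here.

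Given that each step is $\trg{C_1}$-internal, it remains to show it cannot disturb $\mathtt{mem\_rel\_not\_exec}(\trg{P_1}, \dots)$, i.e. it preserves $\mathtt{proj}_{\trg{P_1}}(\trg{s_1.\mi{Mem}}) \cap \mathtt{private}(\trg{s_1.\mi{Mem}}, t_1)$. The $\mathtt{shared}/\mathtt{private}$ partition depends only on $t_1$, which does not change; a fresh allocation by $\trg{C_1}$ yields a block in $\trg{C_1}$'s own component memory, so it neither appears in $\mathtt{proj}_{\trg{P_1}}$ nor aliases any existing $\trg{P_1}$-private address; and the only other way to alter memory is a \trg{Store} through a pointer held by the $\trg{C_1}$ code, which by \targetlanguage's block-granular memory protection can only target $\trg{C_1}$'s own blocks or blocks transitively reachable from what has been shared with $\trg{C_1}$ along $t_1$ --- never a $\trg{P_1}$-private location. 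Hence the $\trg{P_1}$-private projection of memory is untouched, its (renaming) relation to the unchanged $\trg{s_{1,2}.\mi{Mem}}$ survives, and the induction closes.

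The main obstacle I anticipate is discharging that last disjointness claim in Coq: making precise, and maintaining, the invariant that $\trg{C_1}$'s reachable footprint is disjoint from $\trg{P_1}$'s private memory. This is really a memory-safety property of the single run $\trg{P_1 \cup C_1}$ --- a component writes only to blocks it allocated or that were (transitively) passed to it --- and I expect it to be packaged inside (or derivable from) the single-run well-formedness invariants bundled into $\mathtt{state\_rel\_tt}$, in combination with the pervasively used fact that the $\mathtt{shared}/\mathtt{private}$ split relative to a fixed prefix is stable under the addition of fresh blocks. Everything else --- the induction over the silent sequence, the untouched $\trg{s_2}/\trg{s_{1,2}}$ conjuncts, and the preservation of the executing-component and call-stack bookkeeping --- is routine once that footprint lemma is available.
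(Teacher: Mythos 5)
Your proposal is correct and follows essentially the same route as the paper: the paper's own justification (given only informally in the text, with the full argument in \texttt{RecompositionRelOptionSim.v}) likewise observes that the hypothesis ``$\trg{s_{1,2}}$ is executing in $\trg{C_2}$'' together with the turn-consistency built into $\mathtt{state\_rel\_tt}$ forces the silent steps of $\trg{s_1}$ to be taken by the discarded part $\trg{C_1}$, and that the invariant tolerates the resulting drift because in this turn it only ties $\trg{P_1}$'s \emph{private} memory across the two runs, which $\trg{C_1}$ cannot reach or modify by unforgeability of pointers. Your identification of the reachable-footprint disjointness invariant as the technically hard part to discharge matches where the mechanized proof's effort actually lies.
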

The last assumption (\trg{s_1\ \step{\bl{[]}}^{*}\ s_1'})
 of the option simulation
(\cref{lemma-option-sim}) 
says that state \trg{s_1} of the base
program \trg{P_1 \cup C_1} takes some 
non-interaction steps. This base program contributes just \trg{P_1} to the
recomposed program (\trg{P_1 \cup C_2}), and we know by
assumption ``\trg{s_{1,2}}\ \texttt{is executing in}\ \trg{C_2}'' 
that the recomposed state \trg{s_{1,2}} is \emph{not} executing in 
\trg{P_1}. The invariant $\mathtt{state\_rel\_tt}$ ensures
that \trg{s_{1,2}} executes in \trg{P_1}
whenever \trg{s_1} executes in \trg{P_1}.
Thus, the steps that \trg{s_1} has made must be
taken by the discarded part \trg{C_1}, 
not the retained part \trg{P_1}. 
As shown in \Cref{example-c-changes-shared-memory-internally},
we know that steps taken by \trg{C_1} can
cause a mismatch between the memory of the recomposed program
and the memory of the base program \trg{P_1 \cup C_1}. 
The option
simulation lemma ensures that this mismatch is tolerated by
the $\mathtt{state\_rel\_tt}$ invariant.

\begin{lemma}[Lock-step simulation w.r.t. executing \trg{part}]
	\label{lemma-lockstep-sim}
\begin{align*}
&\forall \trg{s_{1,2}}\ \trg{s_1}\ \trg{s_2}\ 
t_{1,2}\ t_1\ t_2\ 
\trg{s_1'}.\\
&\ \ \trg{s_{1,2}}\ \texttt{is executing in}\ \trg{C_2}\ (\IE \emph{not in}\ \trg{P_1}) \implies\\ 
&\ \ \mathtt{state\_rel\_tt}(\trg{s_{1,2}},\ \trg{s_1},\ \trg{s_2},\ t_{1,2},
\ t_1,\ t_2) \implies\\
&\ \ \trg{s_2\ \step{\bl{[]}}\ s_2'} \implies\\
&\exists \trg{s_{1,2}'}.\ \trg{s_{1,2}\ \step{\bl{[]}}\ s_{1,2}'}\ \wedge\ \\
&\ \ \ \ \ \ 
\mathtt{state\_rel\_tt}(\trg{s_{1,2}'},\ \trg{s_1},\ \trg{s_2'},\ t_{1,2},
\ t_1,\ t_2)
\end{align*}
\end{lemma}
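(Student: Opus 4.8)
The plan is to prove \Cref{lemma-lockstep-sim} --- the ``executing part'' half of the recomposition simulation --- by unpacking $\mathtt{state\_rel\_tt}$ and then doing a case analysis on the single silent step $\trg{s_2\ \step{\bl{[]}}\ s_2'}$, mirroring it step-for-step in the recomposed run. First I would use the hypothesis that $\trg{s_{1,2}}$ executes in $\trg{C_2}$ to select the ``else'' branch of \Cref{def-mem-rel-tt} and of the elided register and control-state components of $\mathtt{state\_rel\_tt}$. This yields: (i) $\trg{s_2}$ also executes in $\trg{C_2}$, since the invariant keeps the two runs synchronized on who holds control; (ii) the entire $\trg{C_2}$-state of $\trg{s_{1,2}}$ --- program counter, protected call stack, register file, $\trg{C_2}$'s own memory, and the shared memory $\trg{C_2}$ can reach --- is related up to the fixed renaming $\mathtt{ren}$ to that of $\trg{s_2}$, which is $\mathtt{mem\_rel\_exec}(\trg{C_2},\dots)$ together with the elided parts; and (iii) $\trg{P_1}$'s private memory in $\trg{s_{1,2}}$ is related to that in $\trg{s_1}$ via $\mathtt{mem\_rel\_not\_exec}(\trg{P_1},\dots)$. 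Since a silent step never crosses a component boundary --- every cross-component \trg{Call} or \trg{Return} emits an event --- the instruction fetched by $\trg{s_2}$ is intra-$\trg{C_2}$, and because $\trg{C_2}$ is retained by the recomposition, $\trg{s_{1,2}}$ fetches the same instruction at the matching program counter.

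Then I would proceed instruction by instruction (\Cref{fig:target-syntax}), in each case exhibiting the matching silent step of $\trg{s_{1,2}}$ and re-establishing the invariant. For the register- and control-flow instructions ($\trg{Const}$, $\trg{Mov}$, $\trg{BinOp}$, $\trg{Label}$, $\trg{PtrOfLabel}$, $\trg{Jump}$, $\trg{Bnz}$, $\trg{Jal}$, $\trg{JumpFunPtr}$, $\trg{Nop}$, $\trg{Halt}$) the matching step exists because identical deterministic transitions preserve the value-renaming relation on the executing part's registers, stack and program counter, and nothing else changes. The interesting cases are the memory instructions. For $\trg{Load}$, the address register holds related pointers in the two runs and the target location lies either in $\trg{C_2}$'s own memory or in the shared memory reachable by $\trg{C_2}$, both covered by $\mathtt{mem\_rel\_exec}(\trg{C_2},\dots)$, so the loaded values are related. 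For $\trg{Store}$, memory protection forbids $\trg{C_2}$ from writing to $\trg{P_1}$'s private blocks, so $\mathtt{mem\_rel\_not\_exec}(\trg{P_1},\dots)$ is untouched; the write lands in $\trg{C_2}$'s own memory or in a shared location and is mirrored by a related value at a related location in $\trg{s_{1,2}}$, preserving both the $\mathtt{proj}_{\trg{C_2}}$ and the $\mathtt{shared}$ conjuncts of $\mathtt{mem\_rel\_exec}(\trg{C_2},\dots)$. For $\trg{Alloc}$, $\trg{C_2}$ allocates a fresh block in its own (per-component) block space; since control has been in $\trg{C_2}$ in both runs continuously since the last cross-component boundary and $\trg{C_2}$ has therefore issued the same allocation requests in both, the fresh block identifiers agree and $\mathtt{ren}$ needs no extension --- and if the development instead tracks the $\trg{C_2}$ block-id correspondence explicitly inside $\mathtt{state\_rel\_tt}$, I would just update that component accordingly.

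Finally, re-establishing $\mathtt{state\_rel\_tt}(\trg{s_{1,2}'},\ \trg{s_1},\ \trg{s_2'},\ t_{1,2},\ t_1,\ t_2)$ follows from the case analysis: the step is silent, so no interaction event is emitted and the trace prefixes --- and hence the $\mathtt{shared}$/$\mathtt{private}$ partitions --- are unchanged, and control stays in $\trg{C_2}$ in both $\trg{s_{1,2}'}$ and $\trg{s_2'}$, so we remain in the ``else'' branch of \Cref{def-mem-rel-tt}; $\mathtt{mem\_rel\_exec}(\trg{C_2},\dots)$ for the pair $(\trg{s_2'}, \trg{s_{1,2}'})$ is exactly what the per-instruction step delivers, and $\mathtt{mem\_rel\_not\_exec}(\trg{P_1},\dots)$ for $(\trg{s_1}, \trg{s_{1,2}'})$ holds because $\trg{s_1}$ is unchanged and $\trg{P_1}$'s private memory in $\trg{s_{1,2}}$ was not modified. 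I expect the two main obstacles to be (a) the $\trg{Alloc}$ case and any discrepancy between freshly allocated block identifiers across the runs --- which is precisely what the renaming $\mathtt{ren}$ and the auxiliary well-formedness components of $\mathtt{state\_rel\_tt}$ exist to absorb --- and (b) threading the auxiliary invariant (established elsewhere, but relied on here) that the locations $\trg{C_2}$ can actually reach coincide with those transitively shared on $t_{1,2}$, which is what guarantees that a $\trg{Store}$ by $\trg{C_2}$ cannot escape into $\trg{P_1}$'s private memory and that loads from shared memory stay within the related region.
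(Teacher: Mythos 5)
Your proposal is correct and follows essentially the same route as the paper, whose proof of this lemma is given only in the Coq development (\texttt{RecompositionRelLockStepSim.v}): a case analysis on the silent step of the retained, executing part $\trg{C_2}$, mirrored instruction-for-instruction in the recomposed run, with the elided components of $\mathtt{state\_rel\_tt}$ supplying exactly the auxiliary facts you flag (synchronized control, related registers/pc, and the reachability invariant ensuring $\trg{C_2}$ cannot hold a pointer into $\trg{P_1}$'s private memory). Your identification of $\trg{Store}$/$\trg{Alloc}$ and the pointer-reachability invariant as the delicate points matches where the mechanized proof concentrates its effort.
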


Lock-step simulation (\Cref{lemma-lockstep-sim})
ensures that the invariant $\mathtt{state\_rel\_tt}$ 
is strong enough to keep every non-interaction step
of a retained part in sync between the recomposed program 
and the corresponding base program.

Although both \Cref{lemma-option-sim,lemma-lockstep-sim} hold
only for the scenario when 
``\trg{s_{1,2}}\ \texttt{is executing in}\ \trg{C_2} (\IE \emph{not in}\ \trg{P_1})'',
they are still general enough because
we can apply symmetry 
lemmas to our invariant $\mathtt{state\_rel\_tt}$ 
to reduce the other scenario
``\trg{s_{1,2}}\ \texttt{is executing in}\ \trg{P_1}'' to 
the former scenario---thus avoiding lots of duplicate proof.
The symmetry lemmas are proved
in \texttt{RecompositionRelCommon.v}.
\add{Here is the main symmetry lemma we use:
\begin{lemma}[Symmetry of $\mathtt{state\_rel\_tt}$]
	\label{lemma-state-relation-symmetry}
\begin{align*}
&\forall \trg{s_{1,2}}\ \trg{s_1}\ \trg{s_2}\ t_{1,2}\ t_1\ t_2.\\
&\ \ \mathtt{state\_rel\_tt}(\trg{s_{1,2}},\ \trg{s_1},\ \trg{s_2},\ t_{1,2},
\ t_1,\ t_2) \implies\\
&\ \ \mathtt{state\_rel\_tt}(\trg{s_{1,2}},\ \trg{s_2},\ \trg{s_1},\ t_{1,2},
\ t_2,\ t_1)
\end{align*}
\end{lemma}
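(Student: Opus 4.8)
The plan is to unfold $\mathtt{state\_rel\_tt}$ into its constituent conjuncts and check that each one is preserved when the two base states and their traces are exchanged. Besides the turn-taking memory relation $\mathtt{mem\_rel\_tt}$ of \Cref{def-mem-rel-tt}, the state relation presumably bundles a renaming relation between the two base register files and the recomposed register file, agreement of program counters (up to a code-pointer renaming that is the identity on block ids, per \autoref{sec:compiler}), a relation on the protected call stacks, and well-formedness side conditions on each base trace (e.g. that the addresses marked $\mathtt{shared}$ are exactly those transitively reachable from the trace, and that the memories contain the locations the trace mentions). Each of these is assembled from building blocks that are individually symmetric, so the proof reduces to invoking the appropriate symmetry lemma for each block after a case split on the location of control.

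The heart of the argument is the memory part. Observe that $\mathtt{mem\_rel\_tt}$ is literally symmetric in its two ``base'' slots: the definition case-splits only on whether the recomposed state $\trg{s_{1,2}}$ is executing in the part retained from the first base run or the part retained from the second, and the two branches are obtained from one another by swapping $(\trg{P_1}, t_1, \trg{s_1})$ with $(\trg{C_2}, t_2, \trg{s_2})$. Hence after the exchange the ``then'' and ``else'' branches simply trade places and we recover exactly the same conjunction; the guard ``$\trg{s_{1,2}}$ is executing in $\trg{P_1}$'' mentions only the recomposed state, which is untouched. What must still be checked is that each atomic fact of the form $m_{recomp} \sim_\mathtt{ren} m_{base}$ survives the swap. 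For this I would use a small ``renaming is invertible'' lemma ($\mathtt{ren}$ is a partial bijection on the relevant block ids, so $\mathtt{ren}^{-1}$ is again one), together with the symmetry of the value and memory renamings, $m \sim_\mathtt{ren} m' \iff m' \sim_{\mathtt{ren}^{-1}} m$ — which follows pointwise from the four clauses defining $\mathtt{valren}_\mathtt{ren}$ in \autoref{sec:compiler}, each of which is symmetric under inverting $\mathtt{ren}$ — and the fact that $\mathtt{shared}(\cdot, t)$ and $\mathtt{private}(\cdot, t)$ commute with $\sim_\mathtt{ren}$ because shared-ness is a reachability property that transports along the renamed pointer graph. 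The same inversion handles the register-file and call-stack conjuncts. For the trace side conditions, note that $t_1 \sim t_2$ is itself symmetric: the trace relation of \Cref{def-trace-rel} is an existential over $\mathtt{ren}$, and $\mathtt{match\_events}$, $\mathtt{valren}_\mathtt{ren}$ and $\sim_\mathtt{ren}$ are all symmetric under inverting $\mathtt{ren}$, so any invariant phrased symmetrically in $t_1, t_2$ is unaffected by the exchange.

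I expect the main obstacle to be bookkeeping rather than conceptual. The renaming witness relating $\trg{s_1}$ to $\trg{s_{1,2}}$ and the one relating $\trg{s_2}$ to $\trg{s_{1,2}}$ may differ, and $\mathtt{state\_rel\_tt}$ carries (possibly per-component) renamings that all have to be inverted \emph{coherently} so that the inverted witnesses still fit together across the memory, register, stack and trace conjuncts; in Coq this means threading the inverse-bijection construction through every field of the $\mathtt{state\_rel\_tt}$ record and re-establishing each field from its symmetric counterpart. I would therefore factor out the ``renaming is invertible'' lemma and the ``$\sim_\mathtt{ren}$, $\mathtt{shared}$, $\mathtt{private}$ are symmetric'' lemmas once and for all, after which the symmetry of $\mathtt{state\_rel\_tt}$ is a routine case split plus rewriting. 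A secondary point to get right is that the statement implicitly re-associates which component of the recomposed program counts as ``retained from the first base run'' versus ``from the second''; since the recomposed program $\trg{P_1 \cup C_2}$ is a symmetric union, this is merely a matter of how the Coq definition is parameterized over the two partial programs that tile the recomposed one, and the swap is well-typed precisely because that parameterization is generic.
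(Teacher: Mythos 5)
Your proposal is correct and takes essentially the same route as the paper, which likewise reduces the lemma to the observation that the two branches of the turn-taking relation are images of one another under exchanging the base runs together with which retained part counts as ``first'' (the guard depends only on the untouched recomposed state); the paper gives only this intuition and defers the details to \texttt{RecompositionRelCommon.v}. One small simplification: since every atomic fact in $\mathtt{mem\_rel\_exec}$ and $\mathtt{mem\_rel\_not\_exec}$ keeps the recomposed memory on the left and explicitly names its own base run and its own renaming witness, the swap merely regroups these facts under the other branch, so no inversion of $\mathtt{ren}$ is actually needed and your invertibility lemma is superfluous (though harmless).
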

}
\add{Intuitively, 
the two situations that \Cref{lemma-state-relation-symmetry}
asserts as symmetric are those where the
\trg{main}
function
of the recomposed program \trg{P_1 \cup C_2}
is (a) implemented by \trg{P_1} and (b) by \trg{C_2}.
The symmetry lemmas consequently allow us to
apply our simulation lemmas to \emph{both} of these
cases even though these simulation
lemmas are proved for just one of the cases.
}

In \texttt{RecompositionRel.v},
the reader can find the top-level proof of recomposition
(\Cref{lemma-robustptrs-recomp})
that uses these symmetry lemmas in addition to
strengthening (\Cref{lemma-strengthening}),
option simulation (\Cref{lemma-option-sim}), 
and lock-step simulation (\Cref{lemma-lockstep-sim}).

To summarize, the new idea of turn-taking simulations 
helped us
complete the recomposition proof with memory sharing, which is fully mechanized in Coq.



\subsection{Axioms}
\label{sec:axioms}

\add{Our three novel proof steps 
(data-flow back-translation, 
\Cref{lemma-robustptrs-backtrans}, recomposition
using the turn-taking simulation, 
\Cref{lemma-robustptrs-recomp}, and
enrichment, \Cref{lemma-enrichment}) are fully mechanized and rely only
on standard logical axioms: proof irrelevance, functional
extensionality, and classical excluded middle.

The proof of \Cref{theorem-rsp}
relies
(in addition to the fully mechanized lemmas above) 
intuitively on \cref{assm-bwdsim,assm-fwdsim}
about (separate) compilation of
\emph{whole programs} from \sourcelanguage to
\targetlanguage. These kind of assumptions are axiomatized in our Coq development
in a similar way to that of \citet{Abate:2018:GCG:3243734.3243745}.
In detail:
(a) we have four axioms stating that the result of compilation is
syntactically well-formed if the source
program is well-formed;
(b) one separate compilation axiom stating that compilation and linking commute;\ifsooner\ch{
  As far as I remember the way we state this in Coq is too strong: our proof
  doesn't require equality of the commutation, just that the behavior is the same}\fi{}
(c) two axioms stating the existence of a forward and a backward simulation
for whole-program compilation;
and (d) one axiom ensuring that our compiler 
preserves the privacy of the local buffer. 
We expect this last axiom to hold because 
our compiler pass does not merge
memory blocks.\footnote{If the compiler did merge blocks, then satisfying the
axiom would require ensuring that it never merges a private block with a shared one.}
To prove it, we expect one can use
fine-grained simulation invariants very similar
to the ones one would use for a compiler correctness proof.
The precise statements of these axioms are given in our {\tt README.md} file.

One key motivation for building on the strategy
of \citet{Abate:2018:GCG:3243734.3243745}
(\autoref{sec:background-proof-strategy})
is to benefit from 
separation of concerns between secure compilation
concern and whole-program compiler correctness concern).
\emph{Axiomatizing}
whole-program compiler correctness,
however, is only
reasonable for the purposes of a 
methodology-oriented case study like ours, but is
not reasonable if the goal were to
provide \rsp-style \emph{assurance} for a real system.
In that case though, our methodology will enable
\emph{reuse} of the compiler correctness proof.
}

\section{Related Work}
\label{sec:related-work}
\paragraph*{Memory relations similar to turn-taking simulations ($\mathtt{mem\_rel\_tt}$)}
\citet{capableptrs} and \citet{10.1145/2676726.2676985} use memory
relations that are similar to $\mathtt{mem\_rel\_tt}$ in that the
shared memories of two related executions may mismatch and the memory
relation guarantees that the context does not modify the
\emph{private} memory of the compiled program. However, there are
notable differences. First, their relations
are \emph{binary}---between two runs that differ in one
component---unlike ours, which is \emph{ternary}. This allows their
relations to be strengthened whenever the compiled program is
executing, while our relation can be strengthened
(\Cref{def-mem-rel-border}) only for single steps right after
interaction events. Second, the applications are quite
different. \citet{10.1145/2676726.2676985}'s relation is used in a
non-security proof about compositional compiler correctness where
guarantees come from assumptions about the target context
(\textbf{Setting 1}, \Cref{sec:background-example}), while our
guarantees come from memory protection features of \targetlanguage
(\textbf{Setting 2}). \citet{capableptrs}'s memory relation is used to
establish a different security criterion, full
abstraction~\cite{abadiprotectiontranslations,Patrignani:2019:FAS:3303862.3280984}.

\paragraph*{Reuse of standard compiler correctness for secure compilation}
We are aware of only two works that reuse compiler correctness
lemmas in a secure compilation
proof. \citet{Abate:2018:GCG:3243734.3243745}, which we directly build
on, have goals similar to ours, but without memory sharing, which is
really the focus of our paper. \citet{capableptrs} support memory
sharing and proof reuse using a different proof technique they
call \tricl. As explained in the paragraph on memory relations above,
their memory relation (which is part of \tricl) is technically very
different from our turn-taking simulations.
Additionally, unlike our technique, their proof is not mechanically
verified and, as explained in \Cref{sec:key-ideas-data-flow-traces},
mechanizing their proof is very difficult due to their use of complex
bookkeeping.

\paragraph*{Other kinds of informative traces}
Using inspiration from fully abstract trace
semantics~\cite{9249fdc52dd3414e83b2e3f9a89cb117},
\citet{10.1145/3436809} perform back-translation (with shared memory)
for a compiler pass using traces that record the \emph{whole} 
memory but still only emit it at
just interaction events. 
Although more informative than traces that record
only shared memory at interaction 
events~\cite{9249fdc52dd3414e83b2e3f9a89cb117,capableptrs}, 
these traces
still do not eliminate the need for bookkeeping, unlike
our data-flow traces that selectively expose 
\emph{non-interaction events} to
simplify back-translation.

\paragraph*{Handling memory sharing as message passing}
\citet{patrignani2015secure} describe a completely different secure
compilation of \add{object-oriented}
programs with memory sharing: Their compiled code
implements shared memory in a trusted third party (realized as a
hardware-protected module), and all reads and writes become explicit
RPCs to this third party. Under the hood, the third party relies on
dynamic sealing to hide memory addresses~\cite{protectioninpl}. This
effectively reduces memory sharing to message passing and elides most
of the complications in proofs with true memory sharing, but also
results in extremely inefficient code that requires heavyweight calls
at every read/write to shared memory, thus largely defeating the
purpose of sharing memory in the first place.

\add{It would be interesting to study whether 
enforcing encapsulation while also allowing more
direct memory sharing is feasible, and if so, 
whether the same challenges we faced still arise, 
and hence whether our proof techniques still apply.}

\paragraph*{Secure linking}
To ensure safe interaction with low-level code, typed assembly
languages\iffull~\cite{type-safe-linking}\fi\
and multi-language
semantics\iffull~\cite{multi-language-semantics}\fi\  
have been used
by \citet{FunTAL}. Their technique restricts the low-level language
not with runtime enforcement of memory isolation like in some
architectures~\cite{pma,pump,cheriageofrisk,cheriisatech,micro16,chericompartment}
and in our \targetlanguage model, but instead with a static type
system. The type system and the accompanying logical relation allow
reasoning about the equivalence of a ``mixed-language'' setting, which
is similar to our \textbf{Setting 2} but sometimes requires exposing
low-level abstractions to high-level code. The secure compilation
approach we follow has a chance of avoiding that. For example, by
avoiding the need for directly reasoning in \textbf{Setting 2}, our
secure compilation result beneficially hides from the programmer the
fact that a \targetlanguage function can jump to non-entry points of
other functions in the same component.\ch{Shouldn't this point be made
earlier?  It used to be one of our motivating examples, but even
without that, can we mention it when introducing Interim, for
instance?}

\paragraph*{Robust safety preservation}
Robust safety preservation (RSP), the secure compilation criterion we
use, was first described by
\citet{10.1145/3436809,abate2019journey} and \citet{Abate:2018:GCG:3243734.3243745}.
However, this initial work uses a trivial relation (equality) between
source and target traces. With a general relation, as in our setting,
RSP was first examined by~\citet{DBLP:conf/esop/AbateBCD0HPTT20}. RSP
further traces lineage to the robust verification of safety properties
of a given program (not a given compiler), which is often called
``robust satisfaction of safety
properties''~\cite{10.1007/3-540-48320-9_27}. Robust satisfaction is a
well-developed concept, used in model
checking~\cite{grumberg1991model}, type
systems~\cite{gordon2003authenticity}, and program
logics~\cite{swasey2017robust,10.1145/3371100,7243751}.

\paragraph{Secure compilation of information-flow-like properties}

A long line of work
\cite{10.1145/3371075,barthe2018secure,sison_et_al:LIPIcs:2019:11082,besson2018securing,Namjoshi-model-checking-ct-compilation,constant-resource-secure-compilation,sison_murray_2021,10.1007/978-3-540-24622-0_2}
develops proof techniques and verified compilers to ensure that
information flow properties like non-interference,
the constant-time policy,
or side-channel resistance are preserved
by compilation.  These techniques, however,
are all concerned with whole programs, unlike our work, which starts
with the premise that partial programs will interact with untrusted code. 


\section{Scope and Limitations}
\label{sec:limitations}

\add{We emphasize again that the key benefit of our
data-flow back-translation lies in simplifying secure compilation
proofs when memory is shared via pointers and
the source and target languages are syntactically
dissimilar. However, our technique is useful 
only if the target language
has fine-grained memory protection, which is available in some
capability architectures like CHERI~\cite{cheriageofrisk} or tagged
architectures~\cite{micropolicies}, but not in mainstream architectures
such as x86. Nonetheless, this is not a limitation of our
{\em proof} technique, but rather a fundamental {\em enforcement} problem.
Even leaving aside the proof,
we believe it is not known how to efficiently compile a
memory-safe source language with fine-grained memory sharing to an
architecture without support for fine-grained memory protection in a
way that maintains security against arbitrary target-level contexts.
}

While the presentation in the paper kept the renaming relation abstract,
our RSP theorem in Coq is stated only for a concrete subclass of renaming
relations, which was sufficient for our particular back-translation
function and our particular compiler pass.
Our compiler satisfies compiler correctness for
the identity renaming, meaning that it does not rename pointers.
To simplify our formalization, we exploit this fact by only considering renamings
that are constant shifts.
We leave for future work the generalization of this subclass. Such a
generalization would be needed for applying our proof technique to
o a more interesting compiler
that needs a more complex renaming relation~\cite{Leroy2009, 10.1007/s10817-008-9099-0}.
For instance, instead of storing the whole stack in a single block, the compiler
could implement the stack by allocating a new block for each stack frame.
In this case the renaming relation needed for compiler correctness would relate
blocks in a more subtle way than the simple identity or increment-by-1 relation.
While the proof of back-translation would not be affected by such a change, with
a generalized renaming relation, we will have to think more explicitly about the
properties needed for the recomposition proof and the top-level proof to go through.
We expect that consistency of the renaming is one such property, but there may
be other properties on which we relied implicitly for our special subclass.

\section{Future Work}
\label{sec:conclusion}

In the future, we would like to apply our
proof techniques to more realistic compilers and also to lower-level
compiler passes that implement enforcement mechanisms, for instance based on capability
machines~\cite{cheriageofrisk,chericompartment} or programmable tagged
architectures~\cite{pump,micropolicies}.
We also think our techniques can be extended to stronger secure compilation
criteria, building on work by \citet{abate2019journey}, who illustrate that the
robust preservation of a large class of {\em relational} safety properties can
be proved by trace-directed back-translation.
This is stronger than both \rsp and a full abstraction variant, but their
back-translation technique does not yet cope with mutable state.

\add{The languages we studied
are both dynamically typed. It would be interesting
to study how our proof techniques apply to
secure compilers from
\emph{statically-typed} source languages too.

Another line for extending our work
is to study a more realistic calling convention
involving a single stack for data and control
(our target language uses just a control 
stack and passes arguments only in registers). 
We expect data-flow back-translation to still be applicable,
but to build such a secure compiler
one will need to specify the interface of the low-level
context and dynamically enforce
that the low-level context's
use of the stack adheres to its interface.%

Finally, allowing undefined behavior, as done by
\citet{Abate:2018:GCG:3243734.3243745}, should be compatible with our techniques,
as long as the {\em cross-component} memory operations of the source language
are compiled safely, not left completely undefined---\EG out-of-bounds accesses to
pointers shared by other components.
}





\ifanon\else\ifappendix
{\small
\paragraph{Acknowledgements}
%
\add{
We thank the reviewers for their careful reviews and helpful comments.
This work was in part supported
by the
\ifieee
\href{https://erc.europa.eu}{European Research Council}
\else
\grantsponsor{1}{European Research Council}{https://erc.europa.eu/}
\fi
under ERC Starting Grant SECOMP (\ifieee 715753\else\grantnum{1}{715753}\fi),
and by the Deutsche Forschungsgemeinschaft (DFG\iffull, German Research Foundation\fi)
as part of the Excellence Strategy of the German Federal and State Governments
-- EXC 2092 CASA - 390781972.
\ch{TODO: add more acknowledgements here}
}
}
\fi\fi

\ifanon\clearpage\fi

\ifappendix
\onecolumn
\appendices

\setcounter{theorem}{0}
\renewcommand{\thetheorem}{\thesection.\arabic{theorem}}


\onecolumn

In the following appendices, we provide more detailed high-level overviews of notions discussed in the paper.
We first describe how one can define \textbf{nowrite} as a predicate on traces (\autoref{app:safety-example}).
We then give a high-level overview of the memory model used by our languages (\autoref{app:sec:memory-model}),
as well as the syntax and semantics of the two languages \sourcelanguage (\autoref{app:sec:source}) and \targetlanguage (\autoref{app:sec:target}).
Finally, in \autoref{app:dataflow-backtrans-example}, we describe how each data-flow event is back-translated
into a source expression.
For more details about our definitions and proofs we refer the reader to the Coq development.

\section{Expressing \textbf{nowrite} using traces}
\label{app:safety-example}

The safety property \textbf{nowrite} can be defined formally as a predicate on traces
(\Cref{def-mem-sharing-trace-event}) as follows.

\begin{example}[The safety spec \textbf{nowrite}]
\label{example-NO-WRITE-spec} 
Suppose $l_\mi{balance}$ is the memory location allocated
for the variable \src{user\_balance\_usd}, and suppose
$c_\mi{main}$ denotes the component that calls the function 
\src{set\_ads\_image}, which is implemented by $c_\mi{setter}$.
\begin{align*}
&\textbf{nowrite}(t) \defeq\\
&\forall t_1\ e_\mi{call}\ t_2\ \mi{Mem_1}\ 
\mi{Mem_2}\ l.\\ 
&t = t_1\ \texttt{++}\ [e_\mi{call}]\ \texttt{++}\ t_2 \implies\\
&e_\mi{call} = \mathtt{Call}~\mathit{Mem_1~c_{main}~c_{setter}.set\_ads\_image()} \implies\\
&\mathtt{find\_matching\_ret}(t_2, e_\mi{call}) =\\ 
&\ \ \ \ \ \ \  \mathtt{Ret}~\mathit{Mem_2~c_{setter}~c_{main}.\mi{void}} \implies\\
&\mi{Mem_1}(l_\mi{balance}) = \mi{Mem_2}(l_\mi{balance})
\end{align*}
\end{example}
The spec above makes sure that the value of the variable
\src{user\_balance\_usd} before the call to the function
\src{set\_ads\_image} is the same as its value at the point
when the function returns.

\section{Memory model}\label{app:sec:memory-model}

\sourcelanguage and \targetlanguage use the same compartmentalized, block-based memory model
in the style of the CompCert's memory model.
Memory is subdivided into a data and a code section; each component has its own section in
memory, which consists in several blocks.

Memory is accessed using \emph{pointers} pointing to locations in memory. A pointer is
a 4-tuple \(\com{ptr} = (\ii{perm}, c, b, o)\) where \(\ii{perm}\) might be \(\mathrm{data}\)
or \(\mathrm{code}\), \(c\) is a component's identifier, \(b\) is a block identifier, and \(o\)
is a positive offset.

One can perform arithmetic operations on these pointers (for instance, incrementing an offset),
but the pointers are safe: one cannot use a pointer \((\ii{perm}, c, b, o)\) to access a location
\((\ii{perm}, c', b', o')\) with \(c' \neq c\) or \(b' \neq b\), for instance by abusing
a buffer overflow bug.

Memory cannot be accessed using pointers with permission \(\mathrm{code}\). These pointers are
used to represent either function pointers (in the source) or the program counter (PC) in the
target language.

To access and update a memory \(\ii{mem}\)three functions are provided:
\begin{itemize}
\item \(\ii{mem}[\ii{ptr}]\) returns the content of the memory \(\ii{mem}\) at the location
      pointed to by \(\ii{ptr}\), if and only if \(\ii{ptr}\)'s permission is
      \(\mathrm{data}\), and the location is already allocated; it is not defined otherwise;
\item \(\ii{mem}[\ii{ptr} \mapsto v]\) updates the content of the memory \(\ii{mem}\) at the
      location pointed to by \(\ii{ptr}\) to value \(v\), and returns the updated memory;
      it does not perform allocation, so it is not defined if
      \(\ii{ptr}\) points to a location not yet allocated (in our languages, it is impossible to
      forge such a pointer);
\item \(\ii{mem\_alloc}(\ii{mem}, c, \ii{size})\) allocates a new block of size \(\ii{size}\)
      in \(c\)'s data section, and returns a pair \((\ii{mem'}, \ii{ptr})\) where
      \(\ii{mem'}\) is the updated memory and \(\ii{ptr}\) points to the newly allocated block.
\end{itemize}

In the Coq development, the memory model can be found in file \texttt{Common/Memory.v}.

\afterpage{\FloatBarrier}
\section{Description of the source language \sourcelanguage}\label{app:sec:source}

In this appendix, we give a high-level overview of the syntax and semantics of
\sourcelanguage, our source language. The source language is implemented in folder
\texttt{Source/} in the Coq development.

A source program \(\src{P} = (\src{intf}, \src{procs},\src{buffers})\) is a triplet where
\begin{itemize}
\item \src{intf} is the program's interface;
\item \src{procs} is a partial map from compartment identifiers and procedure identifiers to
procedure code (an expression);
\item \src{buffers} is a partial map from compartment identifiers to initial static buffers, \IE a
list of values the compartment's data is initialized to.
\end{itemize}

The syntax of the source expressions is reproduced in \autoref{fig:app-source-syntax}.

\begin{figure}[h]
  \centering
  \begin{tabular}{lll}
  \texttt{\src{exp}}& \texttt{::=} \quad\src{v}               & values            \\
                 & | \src{ arg}                & function argument \\
                 & | \src{ local}              & local static buffer \\
                 & | \src{ exp_1\ \otimes\ exp_2} & binary operations \\
                 & | \src{ exp_1;\ exp_2}       & sequence          \\
                 & | \src{ if\ exp_1\ then\ exp_2\ else\ exp_3} & conditional \\
                 & | \src{ alloc\ exp}            & memory allocation \\
                 & | \src{ !exp}                 & dereferencing     \\
                 & | \src{ exp_1 := exp_2}     & assignment        \\
                 & | \src{ c.func(exp)}             & function call    \\
                 & | \src{ *[exp_1](exp_2)}    & call pointer      \\
                 & | \src{ \&func}              & function pointer  \\
                 & | \src{ exit}               & terminate
\end{tabular}
\caption{Syntax of source language expressions
}
\label{fig:app-source-syntax}
\end{figure}

\begin{figure}
\centering
  \begin{tabular}{ll}
  \texttt{\src{k}}& \texttt{::=} \quad\src{Kstop}\\
                 & | \src{ Kbinop1\ \otimes\ exp\ k}\\
                 & | \src{ Kbinop2\ \otimes\ v\ k}\\
                 & | \src{ Kseq\ exp\ k}\\
                 & | \src{ Kif\ exp_1\ exp_2\ k}\\
                 & | \src{ Kalloc\ k}\\
                 & | \src{ Kderef\ }\\
                 & | \src{ Kassign1\ exp\ k}\\
                 & | \src{ Kassign2\ v\ k}\\
                 & | \src{ Kcall\ c\ func\ k}\\
                 & | \src{ Kcallptr1\ exp\ k}\\
                 & | \src{ Kcallptr2\ v\ k}
                  \end{tabular}
\caption{Syntax of source continuations
}\label{fig:app-cont-syntax}
\end{figure}

The semantics is written in a \emph{continuation-passing style}. The syntax of continuations
\(\src{k}\) is given in \autoref{fig:app-cont-syntax}.
The semantics of the source language is given as a small-step relation
\[
\src{G \vdash s \xrightarrow{t} s'}
\]
read ``under global environment \src{G}, state \src{s} reduces to state \src{s'} emitting
interaction event \src{t}''.

The global environment \src{G} contains information necessary to execute the program, such
as the code of each procedure, and the interface information.

The states \src{s} and \src{s'} are 6-tuple \(\src{(\ii{c}, \sigma, \ii{mem}, k, e, \ii{arg})}\) where:
\begin{itemize}
\item \src{\ii{c}} is the current compartment's identifier;
\item \src{\sigma} is the protected stack, that is, a list of frame that grows
with (both cross-compartment and intra-compartment) calls and shrinks with (both cross-compartment and intra-compartment) returns; frames contain the continuation and the argument to restore after returns;
\item \src{\ii{mem}} is the memory, a partial map from pointers to values;
\item \src{k} is the current continuation;
\item \src{e} is the current runtime expression;
\item \src{\ii{arg}} is the last call's argument.
\end{itemize}

Standard reduction rules can be found in \autoref{fig:app-source-red}.
Reduction rules for memory operations can be found in \autoref{fig:app-source-red-mem}.
Reduction rules for calls and returns can be found in \autoref{fig:app-source-red-call-ret}.

In particular, note that the memory operations do not impose any conditions on the component
of the pointers they use: this reflects the fact that \sourcelanguage allows sharing and using
pointers to other components.

\begin{figure}
\begin{mathpar}
\inferrule[KS\_Binop1]{\
}{
  \src{G \vdash (\ii{c}, \sigma, \ii{mem}, k, e_1 \otimes e_2, \ii{arg}) \xrightarrow{\epsilon}
                (\ii{c}, \sigma, \ii{mem}, Kbinop1 \otimes e_2\ k, e_1, \ii{arg})
  }
}\\

\inferrule[KS\_Binop2]{\
}{
  \src{G \vdash (\ii{c}, \sigma, \ii{mem}, Kbinop1 \otimes e_2\ k, v, \ii{arg}) \xrightarrow{\epsilon}
                (\ii{c}, \sigma, \ii{mem}, Kbinop2 \otimes v\ k, e_2, \ii{arg})
  }
}\\

\inferrule[KS\_BinopEval]{\
}{
  \src{G \vdash (\ii{c}, \sigma, \ii{mem}, Kbinop2 \otimes v\ k, v', \ii{arg}) \xrightarrow{\epsilon}
                (\ii{c}, \sigma, \ii{mem}, k, v \otimes v', \ii{arg})
  }
}\\

\inferrule[KS\_Seq1]{\
}{
  \src{G \vdash (\ii{c}, \sigma, \ii{mem}, k, e_1; e_2, \ii{arg}) \xrightarrow{\epsilon}
                (\ii{c}, \sigma, \ii{mem}, Kseq\ e_2\ k, e_1, \ii{arg})
  }
}\\

\inferrule[KS\_Seq2]{\
}{
  \src{G \vdash (\ii{c}, \sigma, \ii{mem}, Kseq\ e_2\ k, v, \ii{arg}) \xrightarrow{\epsilon}
                (\ii{c}, \sigma, \ii{mem}, k, e_2, \ii{arg})
  }
}\\

\inferrule[KS\_If1]{\
}{
  \src{G \vdash (\ii{c}, \sigma, \ii{mem}, k, if\ e_1\ then\ e_2\ else\ e_3, \ii{arg}) \xrightarrow{\epsilon}
                (\ii{c}, \sigma, \ii{mem}, Kif\ e_2\ e_3\ k, e_1, \ii{arg})
  }
}\\

\inferrule[KS\_IfTrue]{\
}{
  \src{G \vdash (\ii{c}, \sigma, \ii{mem}, Kif\ e_2\ e_3\ k, true, \ii{arg}) \xrightarrow{\epsilon}
                (\ii{c}, \sigma, \ii{mem}, k, e_2, \ii{arg})
  }
}\\

\inferrule[KS\_IfFalse]{\
}{
  \src{G \vdash (\ii{c}, \sigma, \ii{mem}, Kif\ e_2\ e_3\ k, false, \ii{arg}) \xrightarrow{\epsilon}
                (\ii{c}, \sigma, \ii{mem}, k, e_2, \ii{arg})
  }
}\\

\inferrule[KS\_Arg]{\
}{
  \src{G \vdash (\ii{c}, \sigma, \ii{mem}, k, arg, \ii{arg}) \xrightarrow{\epsilon}
                (\ii{c}, \sigma, \ii{mem}, k, \ii{arg}, \ii{arg})
  }
}\\

\inferrule[KS\_Local]{\
}{
  \src{G \vdash (\ii{c}, \sigma, \ii{mem}, k, local, \ii{arg}) \xrightarrow{\epsilon}
                (\ii{c}, \sigma, \ii{mem}, k, Ptr (data, c, 0, 0), \ii{arg})
  }
}\\

\inferrule[KS\_FunPtr]{\
    \src{\ii{procedure\_id}(\ii{c},func)} = \src{b}
}{
  \src{G \vdash (\ii{c}, \sigma, \ii{mem}, k, \&func, \ii{arg}) \xrightarrow{\epsilon}
                (\ii{c}, \sigma, \ii{mem}, k, Ptr (code, c, b, 0), \ii{arg})
  }
}\\
\end{mathpar}
\caption{Small-step semantics of the \sourcelanguage}\label{fig:app-source-red}
\end{figure}

\begin{figure}
\begin{mathpar}
\inferrule[KS\_Alloc1]{\
}{
  \src{G \vdash (\ii{c}, \sigma, \ii{mem}, k, alloc\ e, \ii{arg}) \xrightarrow{\epsilon}
                (\ii{c}, \sigma, \ii{mem}, Kalloc\ k, e, \ii{arg})
  }
}\\

\inferrule[KS\_AllocEval]{\src{\ii{size}} > 0 \and
\ii{mem\_alloc}(\src{\ii{mem}}, \src{\ii{size}}) = (\src{\ii{mem'}}, \src{\ii{ptr}})
}{
  \src{G \vdash (\ii{c}, \sigma, \ii{mem}, Kalloc\ k, \ii{size}, \ii{arg}) \xrightarrow{\epsilon}
                (\ii{c}, \sigma, \ii{mem'}, k, \ii{ptr}, \ii{arg})
  }
}\\

\inferrule[KS\_Deref1]{\
}{
  \src{G \vdash (\ii{c}, \sigma, \ii{mem}, k, !e, \ii{arg}) \xrightarrow{\epsilon}
                (\ii{c}, \sigma, \ii{mem}, Kderef\ k, e, \ii{arg})
  }
}\\

\inferrule[KS\_DerefEval]{\src{v} = \src{\ii{mem}[ptr]}
}{
  \src{G \vdash (\ii{c}, \sigma, \ii{mem}, Kderef\ k, \ii{ptr}, \ii{arg}) \xrightarrow{\epsilon}
                (\ii{c}, \sigma, \ii{mem}, k, v, \ii{arg})
  }
}\\

\inferrule[KS\_Assign1]{\
}{
  \src{G \vdash (\ii{c}, \sigma, \ii{mem}, k, e_1 := e_2, \ii{arg}) \xrightarrow{\epsilon}
                (\ii{c}, \sigma, \ii{mem}, Kassign1\ e_1\ k, e_2, \ii{arg})
  }
}\\

\inferrule[KS\_Assign2]{\
}{
  \src{G \vdash (\ii{c}, \sigma, \ii{mem}, Kassign1\ e_1\ k, v, \ii{arg}) \xrightarrow{\epsilon}
                (\ii{c}, \sigma, \ii{mem}, Kassign2\ v\ k, e_1, \ii{arg})
  }
}\\

\inferrule[KS\_AssignEval]{\src{\ii{mem'}} = \src{\ii{mem}[\ii{ptr} \mapsto v]}
}{
  \src{G \vdash (\ii{c}, \sigma, \ii{mem}, Kassign2\ v\ k, \ii{ptr}, \ii{arg}) \xrightarrow{\epsilon}
                (\ii{c}, \sigma, \ii{mem'}, k, v, \ii{arg})
  }
}\\
\end{mathpar}
\caption{Small-step semantics of \sourcelanguage: memory operations}\label{fig:app-source-red-mem}
\end{figure}

\begin{figure}
\begin{mathpar}
\inferrule[KS\_InitCall]{
}{
  \src{G \vdash (\ii{c}, \sigma, \ii{mem}, k, c.func(e), \ii{arg}) \xrightarrow{\epsilon}
                (\ii{c}, \sigma, \ii{mem}, Kcall\ c\ func\ k, e, \ii{arg})
  }
}\\

\inferrule[KS\_InitCallPtr1]{
}{
  \src{G \vdash (\ii{c}, \sigma, \ii{mem}, k, *[e_1](e_2), \ii{arg}) \xrightarrow{\epsilon}
                (\ii{c}, \sigma, \ii{mem}, Kcallptr1\ e_1\ k, e_2, \ii{arg})
  }
}\\

\inferrule[KS\_InitCallPtr2]{
}{
  \src{G \vdash (\ii{c}, \sigma, \ii{mem}, Kcallptr1\ e_1\ k, v, \ii{arg}) \xrightarrow{\epsilon}
                (\ii{c}, \sigma, \ii{mem}, Kcallptr2\ v\ k, e_1, \ii{arg})
  }
}\\

\inferrule[KS\_InitCallPtr3]{
\src{\ii{procedure\_id}(\ii{c},func)} = \src{b}
}{
  \src{G \vdash (\ii{c}, \sigma, \ii{mem}, Kcallptr2\ v\ k, Ptr(code, c, b, 0), \ii{arg}) \xrightarrow{\epsilon}
                (\ii{c}, \sigma, \ii{mem}, Kcall\ c\ func\ k, v, \ii{arg})
  }
}\\

\inferrule[KS\_InternalCall]{
  \src{\ii{code\_of}(\ii{c}, func)} = \src{e}
}{
  \src{G \vdash (\ii{c}, \sigma, \ii{mem}, Kcall\ c\ func\ k, v, \ii{arg}) \xrightarrow{\epsilon}
                (\ii{c}, (c, \ii{arg}, k) :: \sigma, \ii{mem}, Kstop, e, v)}
}\\

\inferrule[KS\_ExternalCall]{
  \src{\ii{code\_of}(\ii{c}, func)} = \src{e} \and
  \src{\ii{c}} \neq \src{\ii{c'}} \and
  \src{c'.func} \in \src{c.\ii{import}}
}{
  \src{G \vdash (\ii{c}, \sigma, \ii{mem}, Kcall\ c'\ func\ k, v, \ii{arg})
  \xrightarrow{Call\ \ii{mem}\ \ii{c}\ \ii{c'}\ func\ v}
                (\ii{c}', (c, \ii{arg}, k) :: \sigma, \ii{mem}, Kstop, e, v)}
}\\

\inferrule[KS\_InternalReturn]{
}{
  \src{G \vdash (\ii{c}, (c, \ii{arg}, k) :: \sigma, \ii{mem}, Kstop, v, \ii{arg'}) \xrightarrow{\epsilon}
                (\ii{c}, \sigma, \ii{mem}, K, v, \ii{arg})}
}\\

\inferrule[KS\_ExternalReturn]{
  \src{\ii{c}} \neq \src{\ii{c'}}
}{
  \src{G \vdash (\ii{c}', (c, \ii{arg}, k) :: \sigma, \ii{mem}, Kstop, v, \ii{arg'})
   \xrightarrow{Ret\ \ii{mem}\ c'\ c\ v}
                (\ii{c}, \sigma, \ii{mem}, Kstop, v, \ii{arg})}
}
\end{mathpar}
\caption{Small-step semantics of \sourcelanguage: calls and returns}\label{fig:app-source-red-call-ret}
\end{figure}

\clearpage
\section{Description of the target language \targetlanguage}\label{app:sec:target}

In this appendix, we give a high-level overview of the target language \targetlanguage.
The target language is implemented in folder \texttt{Intermediate/} in the Coq development.

The target language \targetlanguage is an instruction-based language. The available instructions are reproduced
in \autoref{fig:app-target-syntax}.

A target program
\(\trg{P} = (\trg{intf}, \trg{procs}, \trg{buffers}, \trg{main})\) is a 4-tuple where
\begin{itemize}
\item \(\trg{intf}\) is the program's interface;
\item \(\trg{procs}
\) is a partial map from compartment identifiers and procedure identifiers to code, \IE a list of instructions;
\item \(\trg{buffers}\) is a partial map from compartment identifiers to initial static buffers, \IE
a list of value the compartment's data is initialized to;
\item \(\trg{main}: \mathbb{B}\) is \(\mathtt{true}\) if the program contains the main procedure,
  \(\mathtt{false}\) otherwise.
\end{itemize}

\begin{figure}[h]
\centering
\begin{tabular}{l@{\hskip0pt}l@{\hskip20pt}l}
  \trg{instr ::=} & ~\trg{ Const\ i\ \texttt{->}\  r}                   & | \trg{ Bnz\ r\ L}         \\
                     & | \trg{ Mov\ r_s\ \texttt{->}\  r_d}                 & | \trg{ Jump\  r}             \\
                     & | \trg{ BinOp\ r_1\ {\otimes}\ r_2\ \texttt{->}\ r_d}   & | \trg{ JumpFunPtr\  r}       \\
                     & | \trg{ Label\  L}                         &  | \trg{ Jal\ L}             \\
                     & | \trg{ PtrOfLabel\ L\ \texttt{->}\  r_d}           & | \trg{ Call\ c\ func}         \\
                     & | \trg{ Load\ *r_p\ \texttt{->}\  r_d}              & | \trg{ Return}               \\
                     & | \trg{ Store\ *r_p\ \texttt{<-}\  r_s}             & | \trg{ Nop}                  \\
                     & | \trg{ Alloc\ r_1\  r_2}                 & | \trg{ Halt}                 \\
  \end{tabular}
\caption{Instructions of the target language
}
\label{fig:app-target-syntax}
\end{figure}

The semantics of the target language is given as a small-step relation
\[ \trg{E \vdash s \xrightarrow{\alpha} s'} \]
read ``under global environment \(\trg{E}\), state \(\trg{s}\) reduces to state \(\trg{s'}\)
emitting data-flow event \(\trg{\alpha}\).''

The global environment \(\trg{E}\) contains information necessary to execute the program,
such as the code of each procedure, label information, and interface information.

The states \(\trg{s}\) and \(\trg{s'}\) are 5-tuple \(\trg{(\sigma, \ii{mem}, \ii{reg}, \ii{pc})}\) where:
\begin{itemize}
\item \(\trg{\sigma}\) is the protected cross-compartment call stack, that is, a list of PCs that grows with cross-compartment
calls and shrinks with cross-compartments returns; \item \(\trg{\ii{mem}}\) is the memory, a partial map from pointers to values;
\item \(\trg{\ii{reg}}\) is the register file, a map from register names to values;
\item \(\trg{\ii{pc}}\) is the current PC, \IE\ a pointer to the current instruction.
\end{itemize}

Note that the stack doesn't contain any data. Instead, each compartment is
responsible with saving their own data in their private memory before giving control to
another compartment.\jt{Does this make sense?}

Reduction rules can be found in \autoref{fig:app-target-red}, \autoref{fig:app-target-red-mem},
and \autoref{fig:app-target-red-call-ret}, and are mostly standard.

We highlight the following particularities of the semantics:
\begin{itemize}
\item Cross-compartment control exchange is only possible via the two instructions \(\trg{Call\ c\ func}\) and \(\trg{Return}\)
  (rules \textsc{Call} and \textsc{Return}). Conversely, the jump and branching instructions do not allow a change in the
  current compartment (rules \textsc{Jal}, \textsc{Jump}, \textsc{BnzNZ}, \textsc{BnzZ}, \textsc{JumpFunPtr}).
\item Compared to the previous work this work is based on, rules \textsc{Load} and \textsc{Store} do not restrict access
  to memory. Any compartment can read and write to any location in memory, provided it has access to a pointer to this
  location.
\end{itemize}
\begin{figure}
\begin{mathpar}

\inferrule[Nop]{
  \trg{\ii{fetch}(E, \ii{pc})} = \trg{Nop}
}{
  \trg{E \vdash (\sigma, \ii{mem}, \ii{reg}, \ii{pc}) \xrightarrow{\epsilon}
      (\sigma, \ii{mem}, \ii{reg}, \ii{pc} + 1)
}}
\\

\inferrule[Label]{
  \trg{\ii{fetch}(E, \ii{pc})} = \trg{Label \ L}
}{
  \trg{E \vdash (\sigma, \ii{mem}, \ii{reg}, \ii{pc}) \xrightarrow{\epsilon}
      (\sigma, \ii{mem}, \ii{reg}, \ii{pc} + 1)
}}
\\

\inferrule[Const]{
  \trg{\ii{fetch}(E, \ii{pc})} = \trg{ Const\ i\ \texttt{->}\  r}\\
  \trg{v} = \trg{\ii{imm\_to\_val}(i)}\\
  \trg{\ii{reg'}} = \trg{\ii{reg}[r \mapsto i]}\\
  \trg{\alpha} = \trg{\mathtt{Const}~\ii{mem~reg'~\ii{comp}(pc)~v}~r}
}{
  \trg{E \vdash (\sigma, \ii{mem}, \ii{reg}, \ii{pc}) \xrightarrow{\alpha}
      (\sigma, \ii{mem}, \ii{reg'}, \ii{pc} + 1)
}}
\\

\inferrule[Mov]{
  \trg{\ii{fetch}(E, \ii{pc})} = \trg{Mov\ r_s\ \texttt{->}\  r_d} \\
  \trg{\ii{reg}[r_s]} = \trg{v}
  \trg{\ii{reg'}} = \trg{\ii{reg}[r_d \mapsto v]}\\
  \trg{\alpha} = \trg{\mathtt{Mov}~\mathit{mem~reg'~\ii{comp}(pc)~r_s~r_d}}
}{
  \trg{E \vdash (\sigma, \ii{mem}, \ii{reg}, \ii{pc}) \xrightarrow{\alpha}
      (\sigma, \ii{mem}, reg', \ii{pc} + 1)
}}
\\

\inferrule[Binop]{
  \trg{\ii{fetch}(E, \ii{pc})} = \trg{BinOp\ r_1\ {\otimes}\ r_2\ \texttt{->}\ r_d}\\
  \trg{v} = \trg{\ii{reg}[r_1] \otimes \ii{reg}[r_2]}\\
  \trg{\ii{reg'}} = \trg{\ii{reg}[r_d \mapsto v]}\\
  \trg{\alpha} = \trg{\mathtt{Binop}~\mathit{mem~reg~\ii{comp}(pc)~\otimes~r_1~r_2~r_d}}
}{
  \trg{E \vdash (\sigma, \ii{mem}, \ii{reg}, \ii{pc}) \xrightarrow{\alpha}
      (\sigma, \ii{mem}, \ii{reg}[r_d \mapsto v], \ii{pc} + 1)
}}
\\

\inferrule[Jal]{
  \trg{\ii{fetch}(E, \ii{pc})} = \trg{Jal\ L}\\
  \trg{\ii{find\_label}(E, \ii{pc}, L)} = \trg{\ii{pc'}}\\
  \trg{\ii{reg'}} = \trg{\ii{reg}[RA \mapsto v]}\\
  \trg{\alpha} = \trg{\mathtt{Const}~\ii{mem~reg'~\ii{comp}(pc)~v}~RA}
}{
  \trg{E \vdash (\sigma, \ii{mem}, \ii{reg}, \ii{pc}) \xrightarrow{\alpha}
      (\sigma, \ii{mem}, \ii{reg'}, \ii{pc'})
}}
\\

\inferrule[Jump]{
  \trg{\ii{fetch}(E, \ii{pc})} = \trg{Jump\ r}\\
  \trg{\ii{pc'}} = \trg{\ii{reg}[r]} \\
  \trg{\ii{is\_code\_pointer(pc')}} \\
  \trg{\ii{comp}(\ii{pc})} = \trg{\ii{comp}(\ii{pc'})}
}{
  \trg{E \vdash (\sigma, \ii{mem}, \ii{reg}, \ii{pc}) \xrightarrow{\epsilon}
      (\sigma, \ii{mem}, \ii{reg}, \ii{pc'})
}}
\\

\inferrule[BnzNZ]{
  \trg{\ii{fetch}(E, \ii{pc})} = \trg{ Bnz\ r\ L}\\
  \trg{\ii{reg}[r]} = \trg{Int\ z}\\
  \trg{z} \neq 0\\
  \trg{\ii{find\_label}(E, \ii{pc}, L)} = \trg{\ii{pc'}}\\
}{
  \trg{E \vdash (\sigma, \ii{mem}, \ii{reg}, \ii{pc}) \xrightarrow{\epsilon}
      (\sigma, \ii{mem}, \ii{reg}, \ii{pc'})
}}
\\

\inferrule[BnzZ]{
  \trg{\ii{fetch}(E, \ii{pc})} = \trg{ Bnz\ r\ L}\\
  \trg{\ii{reg}[r]} = \trg{Int\ z}\\
  \trg{z} = 0\\
}{
  \trg{E \vdash (\sigma, \ii{mem}, \ii{reg}, \ii{pc}) \xrightarrow{\epsilon}
      (\sigma, \ii{mem}, \ii{reg}, \ii{pc} + 1)
}}
\\

\inferrule[PtrOfLabel]{
  \trg{\ii{fetch}(E, \ii{pc})} = \trg{ PtrOfLabel\ L\ \texttt{->}\  r_d}\\
  \trg{\ii{find\_label}(E, \ii{pc}, L)} = \trg{\ii{ptr}}\\
  \trg{\ii{reg'}} = \trg{\ii{reg}[r_d \mapsto ptr]}\\
  \trg{\alpha} = \trg{\mathtt{Const}~\ii{mem~reg'~\ii{comp}(pc)~ptr}~r_d}
}{
  \trg{E \vdash (\sigma, \ii{mem}, \ii{reg}, \ii{pc}) \xrightarrow{\alpha}
      (\sigma, \ii{mem}, \ii{reg'}, \ii{pc} + 1)
}}
\\

\inferrule[JumpFunPtr]{
  \trg{\ii{fetch}(E, \ii{pc})} = \trg{JumpFunPtr\ r}\\
  \trg{\ii{pc'}} = \trg{\ii{reg}[r]} \\
  \trg{\ii{is\_code\_pointer(pc')}} \\
  \trg{\ii{comp}(\ii{pc})} = \trg{\ii{comp}(\ii{pc'})} \\
  \trg{\ii{ptr\_offset}(\ii{pc})} = 3 
}{
  \trg{E \vdash (\sigma, \ii{mem}, \ii{reg}, \ii{pc}) \xrightarrow{\epsilon}
      (\sigma, \ii{mem}, \ii{reg}, \ii{pc'})
}}
\end{mathpar}
\caption{Small-step semantics of \targetlanguage}
\label{fig:app-target-red}
\end{figure}

\begin{figure}
\begin{mathpar}

\inferrule[Load]{
  \trg{\ii{fetch}(E, \ii{pc})} = \trg{Load\  *r_p\ \texttt{->}\  r_d}\\
  \trg{\ii{ptr}} = \trg{\ii{reg}[r_p]}\\
  \trg{v} = \trg{\ii{mem}[ptr]}\\
  \trg{\ii{reg'}} = \trg{\ii{reg}[r_d \mapsto v]}\\
  \trg{\alpha} = \trg{\mathtt{Load}~\mathit{mem~reg'~\ii{comp}(pc)}~r_p~r_d}
}{
  \trg{E \vdash (\sigma, \ii{mem}, \ii{reg}, \ii{pc}) \xrightarrow{\alpha}
      (\sigma, \ii{mem}, \ii{reg'}, \ii{pc} + 1)
}}
\\

\inferrule[Store]{
  \trg{\ii{fetch}(E, \ii{pc})} = \trg{Store\  *r_p\ \texttt{<-}\  r_s}\\
  \trg{\ii{ptr}} = \trg{\ii{reg}[r_p]}\\
  \trg{v} = \trg{\ii{reg}[r_s]}\\
  \trg{\ii{mem'}} = \trg{\ii{mem}[\ii{ptr} \mapsto v]}\\
  \trg{\alpha} = \trg{\mathtt{Store}~\mathit{mem'~reg~\ii{comp}(pc)}~r_p~r_s}
}{
  \trg{E \vdash (\sigma, \ii{mem}, \ii{reg}, \ii{pc}) \xrightarrow{\alpha}
      (\sigma, \ii{mem'}, \ii{reg}, \ii{pc} + 1)
}}
\\

\inferrule[Alloc]{
  \trg{\ii{fetch}(E, \ii{pc})} = \trg{Alloc\ r_1\ r_2}\\
  \trg{Int\ z} = \trg{\ii{reg}[r_2]}\\
  \trg{z} > 0\\
  \trg{(\ii{mem'}, \ii{ptr})} = \com{mem\_alloc}(\trg{\ii{mem}}, \ii{comp}(\trg{\ii{pc}}), z)\\
  \trg{\ii{reg'}} = \trg{\ii{reg}[r_1 \mapsto \ii{ptr}]}\\
  \trg{\alpha} = \trg{\mathtt{Alloc}~\mathit{mem'~reg'~\ii{comp}(pc)}~r_1~r_2}
}{
  \trg{E \vdash (\sigma, \ii{mem}, \ii{reg}, \ii{pc}) \xrightarrow{\alpha}
      (\sigma, \ii{mem'}, \ii{reg'}, \ii{pc} + 1)
}}
\\
\end{mathpar}
\caption{Small-step semantics of \targetlanguage: memory operations}
\label{fig:app-target-red-mem}
\end{figure}

\begin{figure}
\begin{mathpar}

\inferrule[Call]{
  \trg{\ii{fetch}(E, \ii{pc})} = \trg{Call\ c\ func}\\
  \trg{c} \neq \trg{\ii{comp}(\ii{pc})}\\
  \trg{func} \in \trg{c.\ii{import}}\\
  \trg{\ii{entry}(E,c,func)} = \trg{\ii{pc}'} \\
  \trg{\ii{reg'}} = \trg{\ii{invalidate}(\ii{reg})}\\
  \trg{\alpha} = \trg{\mathtt{dfCall}~\mathit{mem~reg'~\ii{comp}(pc)~\ii{comp}(pc')}.func(\ii{reg}[COM])}
}{
  \trg{E \vdash (\sigma, \ii{mem}, \ii{reg}, \ii{pc}) \xrightarrow{\alpha}
      ((\ii{pc}+1) :: \sigma, \ii{mem}, \ii{reg'}, \ii{pc'})
}}
\\

\inferrule[Return]{
  \trg{\ii{fetch}(E, \ii{pc})} = \trg{Return}\\
  \trg{\ii{comp}(\ii{pc'})} \neq \trg{\ii{comp}(\ii{pc})}\\
  \trg{\ii{reg'}} = \trg{\ii{invalidate}(\ii{reg})}\\
  \trg{\alpha} = \trg{\mathtt{dfRet}~\mathit{mem~reg'~\ii{comp}(pc)~\ii{comp}(pc')}~\ii{reg}[COM])}
}{
  \trg{E \vdash (\ii{pc'} :: \sigma, \ii{mem}, \ii{reg}, \ii{pc}) \xrightarrow{\alpha}
      (\sigma, \ii{mem}, \ii{reg'}, \ii{pc'})
}}
\\
\end{mathpar}
\caption{Small-step semantics of \targetlanguage: calls and returns}
\label{fig:app-target-red-call-ret}
\end{figure}

\clearpage
\section{Output of the data-flow back-translation per event}
\label{app:dataflow-backtrans-example}

\Cref{fig:backtrans-table} shows the 
back-translation of each data-flow event
$\mathcal{E}$.

\begin{figure}[h]
\caption{Data-flow back-translation}
\label{fig:backtrans-table}
\begin{calstable}
\colwidths{{6cm}{12cm}}
\brow
\cell{Data-flow event $\mathcal{E}$}
\cell{Corresponding
	\sourcelanguage expression (within currently
	executing procedure \src{P})}
\erow

\brow
\cell{$\mathtt{dfCall}~\mathit{Mem~Reg~c_{caller}~c_{callee}.proc(v)}$}
\cell{
\begin{align*}
&\mathsf{EXTCALL}\ \src{:= 1;}\\
&\mathsf{(loc\_of\_reg\ r_{COM})}\ \src{:= (c_{callee}.proc(!(\bl{\mathsf{loc\_of\_reg\  r_{COM}}})));}\\
&\mathsf{invalidate\_metadata}\src{;}\\
&\mathsf{EXTCALL}\ \src{:= 0;}\\
&\src{c_{caller}.P(0)}
\end{align*}
}
\erow

\brow
\cell{$\mathtt{dfRet}~\mathit{Mem~Reg~c_{prev}~c_{next}~v}$}
\cell{
\begin{align*}
&\mathsf{EXTCALL}\ \src{:= 1;}\\
&\src{!(\bl{\mathsf{loc\_of\_reg\  r_{COM}}})))}
\end{align*}
}
\erow

\brow
\cell{$\mathtt{Const}~\mathit{Mem~Reg~c_{cur}~v~r_{dest}}$}
\cell{
\begin{align*}
&\mathsf{(loc\_of\_reg\ \mathit{r_{dest}})}\ \src{:= \bl{\mathsf{(expr\_of\_constval\ \mathit{v})}};}\\
&\src{c_{cur}.P(0)}
\end{align*}
}

\erow

\brow
\cell{$\mathtt{Mov}~\mathit{Mem~Reg~c_{cur}~r_{src}~r_{dest}}$}
\cell{
\begin{align*}
&\mathsf{(loc\_of\_reg\ \mathit{r_{dest}})}\  
\src{:=\ !}\ \mathsf{(loc\_of\_reg\ \mathit{r_{src}})}\src{;}\\
&\src{c_{cur}.P(0)}
\end{align*}
}
\erow

\brow
\cell{$\mathtt{BinOp}~\mathit{Mem~Reg~c_{cur}~\otimes~r_{src1}~r_{src2}~r_{dest}}$}
\cell{
\begin{align*}
&\mathsf{(loc\_of\_reg\ \mathit{r_{dest}})}\  
\src{:=}\ (\src{!}\ \mathsf{(loc\_of\_reg\ \mathit{r_{src1}})})\ \src{\otimes}\ 
(\src{!}\ \mathsf{(loc\_of\_reg\ \mathit{r_{src2}})})\src{;}\\
&\src{c_{cur}.P(0)}
\end{align*}
}
\erow

\brow
\cell{$\mathtt{Load}~\mathit{Mem~Reg~c_{cur}~r_{addr}~r_{dest}}$}
\cell{
\begin{align*}
&\mathsf{(loc\_of\_reg\ \mathit{r_{dest}})}\  
\src{:=}\ \src{!\ !}\ \mathsf{(loc\_of\_reg\ \mathit{r_{addr}})}\src{;}\\
&\src{c_{cur}.P(0)}
\end{align*}
}
\erow

\brow
\cell{$\mathtt{Store}~\mathit{Mem~Reg~c_{cur}~r_{addr}~r_{src}}$}
\cell{
\begin{align*}
&\src{!}\ \mathsf{(loc\_of\_reg\ \mathit{r_{addr}})}\  
\src{:=}\ \src{!}\ \mathsf{(loc\_of\_reg\ \mathit{r_{src}})}\src{;}\\
&\src{c_{cur}.P(0)}
\end{align*}
}
\erow

\brow
\cell{$\mathtt{Alloc}~\mathit{Mem~Reg~c_{cur}~r_{ptr}~r_{size}}$}
\cell{
\begin{align*}
&\mathsf{(loc\_of\_reg\ \mathit{r_{ptr}})}\  
\src{:=}\ \src{alloc}\ (\src{!}\ \mathsf{(loc\_of\_reg\ \mathit{r_{size}})})\src{;}\\
&\src{c_{cur}.P(0)}
\end{align*}
}
\erow

\end{calstable}
\end{figure}

Here are some definitions of meta-level 
functions that appear in the right column
of \Cref{fig:backtrans-table}.

\begin{itemize}
\item $\mathsf{EXTCALL} \defeq \src{local\ +\ 1}$\\ (This is a fixed location in the local
buffer in which we store a flag that keeps
track of every time control is exiting the
component.)
\item $\mathsf{loc\_of\_reg\ r}
\defeq \src{local\ +\ \bl{offset(r)}}$
\\ (These are fixed locations in the
local buffer reserved for registers we 
simulate. There are 7 registers in total.)
\item
\begin{align*}
\mathsf{invalidate\_metadata} \defeq&\ \\ &\mathsf{loc\_of\_reg\ r_1}\ \src{:=}\ \mathsf{dummy\_value}\src{;}\\
&\mathsf{loc\_of\_reg\ r_{AUX1}}\ \src{:=}\ \mathsf{dummy\_value}\src{;}\\
&\mathsf{loc\_of\_reg\ r_{AUX2}}\ \src{:=}\ \mathsf{dummy\_value}\src{;}\\
&\mathsf{loc\_of\_reg\ r_{SP}}\ \src{:=}\ \mathsf{dummy\_value}\src{;}\\
&\mathsf{loc\_of\_reg\ r_{RA}}\ \src{:=}\ \mathsf{dummy\_value}\src{;}\\
&\mathsf{loc\_of\_reg\ r_{ARG}}\ \src{:=}\ \mathsf{dummy\_value}
\end{align*}
(These assignments clear the simulated
registers to mimic
the secure \trg{Call} semantics of
\targetlanguage, which \trg{\mathit{invalidate}}s the register
file upon cross-component calls and returns.
Notice that the simulated location of
register $\mathsf{r_{COM}}$ does \emph{not} appear in the list of clearing assignments; 
when executing $\mathsf{invalidate\_metadata}$,
our the \sourcelanguage program will have 
already filled $\mathsf{r_{COM}}$ with the
return value of the call.)
\end{itemize}

\twocolumn
\else
\fi 

\ifieee
\bibliographystyle{abbrvnaturl}
\else 

\ifcamera
\bibliographystyle{ACM-Reference-Format}
\citestyle{acmauthoryear}   
\else
\bibliographystyle{abbrvnaturl}
\fi

\fi 

\bibliography{paper}

\end{document}